\renewcommand{\paragraph}{%
   \@startsection{paragraph}{4}%
   {\z@}{2.5ex \@plus 1ex \@minus .2ex}{-1em}%
   {\normalfont\normalsize\bfseries}%
}
\DeclareMathOperator {\budget}  {budget}
\DeclareMathOperator {\totalBudget}  {total-budget}
\DeclareMathOperator {\extraBudget}  {extra-budget}
\DeclareMathOperator {\polylog}  {polylog}
\DeclareMathOperator {\short}  {short}
\DeclareMathOperator {\Var}  {Var}
\newcommand {\Given} {\;|\;}
\newcommand {\roundup}   [1] {{\lceil {#1} \rceil}}
\newcommand {\rounddown} [1] {{\lfloor {#1} \rfloor}}
\newcommand {\set}   [1] {\left\{ #1 \right\}}
\newcommand {\bLog}      {\log_2}
\newcommand {\Exp}       {\mathbb{E}}
\newcommand{\given}{\mid}
\newcommand{\OpenFrame}{\rule{0pt}{12pt} \hrule height 0.8pt \rule{0pt}{1pt} \hrule height 0.4pt \rule{0pt}{6pt}}
\newcommand{\CloseFrame}{\rule{0pt}{1pt}\hrule height 0.4pt \rule{0pt}{1pt} \hrule height 0.8pt \rule{0pt}{12pt}}
\newcommand{\CloseFigureFrame}{\vspace{-10pt}\rule{0pt}{1pt}\hrule height 0.4pt \rule{0pt}{1pt} \hrule height 0.8pt \rule{0pt}{12pt}\vspace{-15pt}}
\newcommand {\bbZ}    {\mathbb{Z}}
\newcommand {\bbR}    {\mathbb{R}}
\newcommand {\calE}   {{\cal{E}}}
\newcommand {\calS}   {{\cal{S}}}
\newcommand {\calU}   {{\cal{U}}}
\newcommand {\calA}   {{\cal{A}}}
\newcommand {\calX}   {{\cal{X}}}
\DeclareMathOperator {\Ball}{Ball}
\DeclareMathOperator {\sdpcost}{sdp-cost}
\newcommand {\VHgeqB}   {V_H^{\geq \beta d}}
\newcommand {\VGleqA}   {V_G^{\leq \alpha d}}
\newtheorem{theorem}{Theorem}[section]
\newtheorem{lemma}[theorem]{Lemma}
\newtheorem{claim}[theorem]{Claim}
\newtheorem{corollary}[theorem]{Corollary}
\newtheorem{definition}[theorem]{Definition}
\newtheorem{property}{Property}
\title{Constant Factor Approximation for Balanced Cut in \\ the PIE Model} 
\author{Konstantin Makarychev\\Microsoft
\and Yury Makarychev\thanks{Supported by NSF CAREER award CCF-1150062 and NSF grant IIS-1302662. Part of this work was done when the author was visiting Microsoft Research.}\\TTIC
\and Aravindan Vijayaraghavan\thanks{Supported by the Simons Postdoctoral Fellowship.}\\CMU}
\date{}
\begin{document}
\maketitle
\begin{abstract}
We propose and study a new semi-random semi-adversarial model for Balanced Cut,
\textit{a planted model with permutation-invariant random edges} (PIE).
Our model is much more general than planted models considered previously.
Consider a set of vertices $V$ partitioned into two clusters $L$ and $R$ of equal size.
Let $G$ be an arbitrary graph on $V$ with no edges between $L$ and $R$.
Let $E_{\text{random}}$ be a set of edges sampled from an arbitrary \textit{permutation-invariant} 
distribution (a distribution that is invariant under permutation of vertices in $L$ and in $R$).
Then we say that $G + E_{\text{random}}$ is a graph with permutation-invariant random edges.

We present an approximation algorithm for the Balanced Cut problem that finds a balanced cut
of cost $O(|E_{\text{random}}|) + n\polylog(n)$ in this model. In the regime when $|E_{\text{random}}| = \Omega(n\polylog(n))$,
this is a constant factor approximation with respect to the cost of the planted cut.
\end{abstract}

\section{Introduction}\label{sec:intro}
Combinatorial optimization problems arise in many areas of science and engineering.
Many of them are $NP$-hard and cannot 
be solved exactly unless $P=NP$. 
What algorithms should we use to solve them? 
There has been a lot of research in theoretical computer science
dedicated to this question. Most research has focused on designing and analyzing approximation algorithms for the worst-case model,
in which
we do not make any assumptions on what the input instances are. 
While this model is very general, algorithms for the worst-case model do not exploit properties that instances we encounter in practice have.
Indeed, as empirical evidence suggests, real-life instances are usually much easier than worst-case instances,
and practitioners often get much better approximation guarantees in real life than it is theoretically possible in the worst-case model.
Thus it is very important to develop a model for real-life instances that will allow us to 
design approximation algorithms that provably work well in practice and outperform known algorithms designed for the worst case.
Several such models have been considered in the literature since the early 80's: e.g.
the random planted cut model~\cite{BCLS84,DF87,B88,JS,DI98,McSherry,CK01,coja}, 
semi-random models~\cite{FK,SVM02,MMV}, 
and stable models~\cite{BBG,Awasthietal,BL,ABBSV,BM,BDLS,MMVstable}.

In this paper, we propose a new very general model ``planted model with permutation-invariant random edges''.
We believe that this model captures well many properties of real-life instances. In particular, 
we argue below that our model is consistent with social network formation models studied in social sciences.
We present an approximation algorithm for the Balanced Cut problem. Balanced Cut
is one of the most basic and well-studied graph partitioning problems.
The problem does not admit a constant factor approximation in the worst-case as was shown by 
Raghavendra, Steurer, and Tulsiani~\cite{RST} (assuming the Small Set Expansion Conjecture).
The best known algorithm 
for Balanced Cut by Arora, Rao, and Vazirani~\cite{ARV} gives $O(\sqrt{\log n})$ approximation. 
In contrast, our algorithm gives a constant factor approximation with respect to the size of the planted cut in our model 
(if some conditions hold, see below).

We start with recalling the classical planted cut model of  Bui, Chaudhuri, Leighton and Sipser~\cite{BCLS84} and Dyer and Frieze~\cite{DF87}. In this model, we generate a random graph $F$ as follows. 
Let $p$ and $q < p$ be two numbers between $0$ and $1$. 
We take two disjoint $G(n/2,p)$ graphs  $G_1=(L,E_1)$ and $G_2=(R,E_2)$. We connect every two vertices $x \in L$ and $y\in R$
with probability $q$; our random choices for all pairs of vertices $(x,y)$ are independent. We obtain a graph $F$.
We call sets $L$ and $R$ \textit{clusters} and say that $(L,R)$ is the \textit{planted cut}. 
We refer to 
the edges added at the second step as \textit{random} edges. 
In this model, we can find  the planted cut $(L,R)$ w.h.p. given the graph $F$ 
(under some assumptions on $p$ and $q$)~\cite{BCLS84,DF87,B88,coja}.

In our model, graphs $G_1$ and $G_2$ can be \textit{arbitrary} graphs. The set of random edges is sampled from an \textit{arbitrary}
\textit{permu\-tation-invariant} distribution (a distribution is permutation-invariant if it is
invariant under permutation of vertices in $L$ and $R$). We do not make any assumptions on the distribution
(aside from it being permutation-invariant). 
In particular, 
random choices for different edges may be dependent, 
edges may cross the cut $(L,R)$ or lie inside clusters.
The set of random edges may be sampled according to a distribution that is very complex and unknown to us. 
For example, it may be sampled using the preferential attachment model. 
It can  
contain fairly large bicliques and dense structures that are found in many real-world networks~\cite{Kumar99trawlingthe,Newman2006}.

\begin{definition} Consider a set of vertices $V$ and a partition of $V$ into two sets of equal size: $V = L \cup R$.
Let $\Pi_{LR}$ be the set of permutations of $V$ such that $\pi(L) = L$ and $\pi(R) = R$. 
We say that a a probability distribution $\cal D$ on $\{E \subset V\times V\}$ is permutation-invariant 
if for every permutation $\pi\in \Pi_{LR}$ and every set $E \subset V\times V$, 
we have $\Pr_{\mathcal{D}} (\pi E) = \Pr_{\mathcal{D}} (E)$.
\end{definition}

Informally, a distribution is permutation-invariant if it ``ignores'' the ``identities'' (labels) of individual vertices;
for each vertex $u$, the distribution just ``knows'' whether $u$ is in $L$ or in $R$.

\begin{definition}[Formal Definition of the Model]
Let $V$ be a set of vertices and $V = L \cup R$ be a partition of $V$ into two sets of equal size.
Let $G = (V, E_G)$ be an arbitrary graph on $V$ in which no edge crosses cut $(L, R)$.
Let $\cal D$ be an arbitrary permutation-invariant distribution of edges. 
We define a probability distribution $\Pi(L,R, E_G, {\cal D})$ of planted graphs $F$ 
with permutation-invariant random edges (PIE) as follows.
We sample a random set of edges $E_R$ from $\cal D$ and let $F = G + E_R$.
\end{definition}
We give an alternative equivalent definition in Section~\ref{sec:two-adversaries}.
Before we state our main result, we recall the definition of the Balanced Cut problem.
\begin{definition}
A cut $(S,T)$ in a graph $G = (V,E)$ is $b$-balanced if $|S| \geq bn$ and $|T| \geq bn$ (where $b\in[0,1/2]$ is parameter).
The Balanced Cut problem is to find a $b$-balanced cut $(S,T)$ in a given graph $G$ so as to minimize the number of cut edges.
\end{definition}
We show that there is an algorithm that finds a $\Theta(1)$-balanced cut $(S,T)$ of cost $O(|E_R|) + O(n\polylog n)$ w.h.p.
This result is most interesting when the following conditions hold: (1) a constant fraction of edges in $E_R$ go from $L$ to $R$, and 
(2) the number of random edges is $\Omega(n\polylog n)$. Then, the size of the cut $(S,T)$ is at most a constant times the size of the planted cut.
That is, we obtain a constant factor approximation with respect to the size of the planted cut. 
The algorithm does not know the graph $G$, the distribution $\cal D$, and the planted cut $(L,R)$.
We now formally state out main result.
\begin{theorem}
\label{thm:main}
There is a deterministic polynomial-time algorithm that given a random graph $F$ sampled from  $\Pi(L,R, E_G, {\cal D})$
finds a $\Theta(1)$-balanced cut $(S,T)$ such that
$$|E(S,T)| = O(|E_R|) + O(n\polylog n)$$
(for arbitrary sets $L$, $R$, $E_G$, and permutation-invariant distribution $\cal D$, not known to the algorithm).
The algorithm succeeds with probability $1-o(1)$ over the choice of $F$.
\end{theorem}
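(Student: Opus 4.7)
The plan is to relax Balanced Cut on $F$ to an SDP, show the SDP cost is $O(|E_R|)$ via feasibility of the planted bipartition, and round the fractional solution to an integer $\Theta(1)$-balanced cut of cost $O(|E_R|) + O(n\polylog n)$ using the permutation invariance of $\cal D$.

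First I would set up the standard $\ell_2^2$ SDP relaxation: unit vectors $u_i$ for each vertex, triangle inequalities $\|u_i-u_j\|^2 + \|u_j-u_k\|^2 \geq \|u_i-u_k\|^2$ on all triples, a spreading constraint of the form $\frac{1}{n^2}\sum_{i,j}\|u_i-u_j\|^2 = \Omega(1)$ to enforce balance, and objective $\tfrac14\sum_{(i,j)\in E_F}\|u_i-u_j\|^2$. The planted integer solution $u_i = +e_1$ for $i\in L$ and $u_i = -e_1$ for $i\in R$ is feasible; since $G$ contains no edge across $(L,R)$, its objective equals $|E_R\cap(L\times R)| \leq |E_R|$, so $\sdpcost(F) \leq |E_R|$ deterministically.

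For the rounding I would invoke the two-adversary reformulation announced in Section~\ref{sec:two-adversaries}: view $F$ as produced by an adversary who picks a template $(G, E')$ and then applies a uniformly random $\pi\in\Pi_{LR}$ to $E'$, so $E_R = \pi(E')$; by permutation invariance this is equivalent to direct sampling from $\cal D$. Conditioning on $(G, E')$ and using the genuine uniform randomness of $\pi$, the distribution of random edges incident to any fixed vertex of $L$ (resp.\ $R$) is exchangeable, which is the engine for concentration. I would then combine an ARV-style orthogonal-separator rounding with a preprocessing step that removes a small set $V_{\text{bad}}$ of atypical vertices (those carrying abnormally large SDP-weighted random degree). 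Permutation invariance forces the SDP vectors, restricted to $V\setminus V_{\text{bad}}$ and averaged over $\pi$, to cluster near a two-point solution along some recoverable direction $v$; thresholding along $v$ and reassigning $V_{\text{bad}}$ greedily yields a $\Theta(1)$-balanced cut of cost $O(\sdpcost(F)) + O(n\polylog n)$.

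The hard part will be bypassing the worst-case $\Theta(\sqrt{\log n})$ integrality gap of Balanced Cut: the only leverage for reducing this to a constant is permutation invariance, and I expect the key technical statement to be that, after pruning $V_{\text{bad}}$, the SDP vectors of $L$ and of $R$ are genuinely separated along the top direction of the Gram matrix, rather than merely embedded in a low-distortion metric. Accounting for the additive $O(n\polylog n)$ slack coming from both $V_{\text{bad}}$ and from SDP vectors that drift from $\pm e_1$ will also require care, as will verifying that the thresholding step preserves the $\Theta(1)$-balance condition uniformly in the adversary's choice of $(G, E')$.
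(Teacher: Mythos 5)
There is a genuine gap, and it sits exactly where you flag ``the hard part.'' Your rounding rests on the claim that, after deleting a small set $V_{\text{bad}}$, permutation invariance forces the SDP vectors of $L$ and of $R$ to concentrate near a two-point configuration separated along a recoverable direction, so that a single threshold cut loses only a constant factor. This is not true in general and is not something the model can deliver: the planted cut $(L,R)$ need not be close to any optimal or near-optimal SDP solution. For instance, if $G[L]$ and $G[R]$ each contain very cheap balanced cuts (e.g.\ each side is a disjoint union of two dense blocks) and $E_R$ is small, the optimal SDP solution can split \emph{within} the sides and carry essentially no information about $(L,R)$; indeed the paper's model makes recovering $(L,R)$ information-theoretically impossible, and nothing forces the Gram matrix's top direction to correlate with the planted bipartition. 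Averaging the SDP solution ``over $\pi$'' does not help either, because the algorithm sees a single $\pi$ and the SDP solution itself depends on the realized random edges --- this circular dependence is one of the central obstacles, and exchangeability of the random edges incident to a fixed vertex, by itself, gives no bound on how the optimizer correlates with $E_R$.

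Consequently your plan has no mechanism for beating the $\Theta(\sqrt{\log n})$ integrality gap, which is the actual content of the theorem. The paper's route is quite different: it never tries to cluster vectors around two points. Instead it iterates $T=O(\log\log n)$ rounds of (i) cutting all $\nicefrac{\delta}{2}$-long edges, (ii) a Heavy Vertices Removal step, and (iii) a Damage Control step, maintaining a budget invariant so that only $O(dn)$ edges are ever cut; the engine is Theorem~\ref{thm:mainSDPbound}, which shows the SDP value of the surviving graph halves each round, so after $T$ rounds the value is $O(dn/D_{ARV})$ and a final ARV rounding costs only $O(dn)$ despite its $D_{ARV}$ factor. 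The probabilistic heart is the Main Structural Theorem (Theorem~\ref{thm:MainStructuralTheorem}): uniformly over \emph{all} feasible SDP solutions (which is how the dependence of the SDP optimum on $E_R$ is handled), with high probability over $\pi$ only $O(\eta dn)$ random edges can be short while their endpoints have many short $G$-edges, proved by exhibiting a prefix-Kolmogorov encoding of $\pi$ that would otherwise be shorter than $\log_2|\Pi_{LR}|$. Your proposal contains no analog of this counting/encoding step, nor of the Damage Control argument needed when the ``skeleton'' of short $G$-edges fails to cover the graph, so the claimed $O(\sdpcost(F))+O(n\polylog n)$ bound for your thresholding rounding is unsupported.
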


\subsection{Comparison with other models}
There is an extensive literature on the random planted model~\cite{BCLS84,DF87,B88,JS,DI98,McSherry,CK01,coja} and semi-random models~\cite{FK,SVM02,MMV}. 
\begin{table*}
\centering\normalsize
\begin{tabular}{|p{2.25cm}|p{2.85cm}|p{4.9cm}|p{4.8cm}|}
\hline  \textbf{\normalsize model} \rule{0pt}{2.6ex} & \textbf{\normalsize planted graphs in $L$ and $R$} & \textbf{\normalsize random edges $E_R$}  & \textbf{\normalsize algorithm finds a balanced cut of size (w.h.p.)}  \\ 
\hline 
random 
planted model~\cite{BCLS84,DF87} & 
$G(n/2,p)$ graphs & 
edges between $L$ and $R$ are sampled independently  w. p. $q$ &
\rule{0pt}{2.6ex}$|E_R|$\newline
The algorithm recovers the planted cut.
\\
\hline  
semi-random model~\cite{MMV}&  
arbitrary graphs  & 
edges between $L$ and $R$ are sampled independently  w. p. $q$  \newline
the adversary may delete random edges
&\rule{0pt}{2.6ex}$O(qn^2)$, equals $O(|E_R|)$ if the adversary does not delete edges \newline
It is impossible to find the planted cut (information-theoretically). 
 \\ 
\hline  
our model
& 
arbitrary graphs
&  
sample $E_R$ from an arbitrary permutation-invariant distribution (unknown to the algorithm)
&  
\rule{0pt}{2.6ex}$O(|E_R|)$ \newline
It is impossible to find the planted cut (information-theoretically). \\
\hline 
\end{tabular} 
\label{table:compare}
\caption{\normalfont This table compares the random planted model~\cite{BCLS84,DF87}, semi-random model~\cite{MMV}, and model proposed in this paper.
Algorithms for all three models succeed with high probability. In this table, we assume that $(p-q)n^2 > n\polylog (n)$
in the first model, and $|E_R| > n \polylog(n)$ in the second and third models.
}
\end{table*}
We compare our model with the random planted model~of Bui, Chaudhuri, Leighton and Sipser~\cite{BCLS84}
and Dyer and Frieze~\cite{DF87} and semi-random model from our previous work~\cite{MMV}
(which generalizes the model of~Feige and Kilian~\cite{FK}), see Table~\ref{table:compare}.
In the random planted model, planted graphs in $L$ and $R$ are random $G(n/2,p)$ graphs. The set of edges $E_R$ is a random subset of 
all possible edges between $L$ and $R$; every edge is present with the same probability $q < p$ (which does not depend on the edge);
all edges are chosen independently.
The semi-random model of~\cite{MMV} is significantly more general. In this model, graphs inside $L$ and $R$ are arbitrary graphs. 
However, $E_R$ is essentially the same as in the random planted model, except that we allow the adversary to delete edges between $L$ and $R$.
In the model we study in the current paper,  not only are the graphs inside $L$ and $R$ arbitrary graphs, but further, $E_R$ is sampled from an 
arbitrary permutation-invariant distribution (in particular, they can be random edges chosen with probability $q$ as in the previous models~\cite{BCLS84,DF87,MMV}). 



Bui, Chaudhuri, Leighton and Sipser~\cite{BCLS84} and Dyer and Frieze~\cite{DF87} showed how to find the planted cut w.h.p. in the random planted model (see also~\cite{B88,coja}).
This is impossible to do in our model even information-theoretically.\footnote{E.g. consider the following graph $F$:
 $F[L]$ and $F[R]$ are $G(n/2,p)$ graphs, every edge between $L$ and $R$ is present independently with probability $p$.
Then $F$ has no information about the cut $(L, R)$.} Instead, we give an approximation algorithm that gives
a constant factor approximation with respect to the size of the planted cut if conditions (1) and (2) hold.


\subsection{Motivation}
The random planted cut model (often referred to as the Stochastic Block Model) is widely used in
statistics, machine learning, and social sciences (see e.g. ~\cite{JYSX,white1976social, holland1983stochastic,fienberg1985statistical,snijders1997estimation}). 
The PIE model, which we study in this paper,  aims to generalize it,
relax its constraints, replace random choices with adversarial choices whenever possible
and yet keep the model computationally tractable.
In our opinion, the PIE model better captures real-life instances than the random planted cut model.
Consider two examples. 
The first example is clustering with noise. Suppose that we are given a set of objects $V$. The objects are
partitioned in two clusters, $L$ and $R$; but the clustering is not known to us.
We are also given a set of ``similarity'' edges $E$ on $V$. Some edges $E_G \subset E$ represent real similarities between objects in $V$;
these edges connect vertices within one cluster. 
In practice, edges in $E_G$ are not random and our model does not impose any restrictions on them
in contrast to the random planted cut model, which assumes that they are completely random.
Other edges $E_R \subset E$ are artifacts caused by measurement errors and noise.
Edges in $E_R$ are somewhat random and it is reasonable in our opinion to assume --- as we do in our model ---
that they are sampled from a permutation-invariant distribution. 
Unlike the random planted model, we do not assume that edges in $E_R$ are sampled independently.

The second example is related to social networks. 
There are many types of ties in social networks --- there are social ties between relatives, 
friends, colleagues, neighbors, people with common interests and hobbies. The whole social network
can be thought of as a superposition of separate networks with different types of ties.
It is reasonable to assume that these networks are to large extent independent; e.g., you cannot tell much about somebody's neighbors,
if you just know his or her coauthors.  

Consider a social network with several types of ties. Represent it as a graph: the vertices represent people,
and edges represent social ties. Assume that people in the social network live in different
geographical regions, cities, countries, etc. We divide all regions into two groups and denote the set
of people who live in the regions in the first and second groups by $L$ and $R$, respectively. 
Some types of ties are usually ``local'' --- they are ties between people living in the same region; 
e.g. typically friends live in the same region. Other ties are not necessarily local; e.g. 
coauthors, college classmates, and Twitter followers do not necessarily live in the same region.
%
Let $E_G$ be the set of  edges representing local ties and $E_R$ be edges representing other ties. 
Then the whole social network is the union of $E_G$ and $E_R$.
The assumption that social ties of different types are independent is formalized in our model by the condition that
$E_R$ is sampled from a permutation-invariant distribution.
That is, we take two social networks $G=(V_G, E_G)$ and $H=(V_H, E_R)$, choose a random correspondence between vertices of $G$ and $H$, and
then identify corresponding vertices (using the notation, which we introduce in the next section, we consider the graph $F = G \boxplus_\pi H$
for a random permutation $\pi$).


We believe that techniques similar to those we present in the current paper can be applied to other graph partitioning 
and combinatorial optimization problems. We hope that these techniques will be useful for solving real world problems 
on networks that we encounter in practice.

\subsection{Model with Two Adversaries}\label{sec:two-adversaries}
We use an alternative equivalent formulation of our model in the rest of the paper.
Let $G = (V_G, E_G)$ and $H=(V_H, E_H)$ be two graphs on $n$ vertices, and $\pi:V_H\to V_G$ be a bijection. 
Define the graph $F = G \boxplus_\pi H$ on $V_G$ by $E_F = E_G \cup \pi(E_H)$.
Let $V_G = L_G\cup R_G$ and $V_H=L_H\cup R_H$ be partitions of $V_G$ and $V_H$ into sets of size $n/2$.
Define  
$\Pi_{LR}\equiv \{\pi:V_H\to V_G \;:\;\pi(L_H)=L_G\text{ and } \pi(R_H)=R_G\}$
to be the set of all bijections mapping $L_H$ to $L_G$ and $R_H$ to $R_G$.

Suppose now that one adversary chooses an arbitrary graph $G$ with no edges between $L_G$ and $R_G$,
and another adversary chooses an arbitrary graph $H$ (both adversaries know the partitions $V_G = L_G\cup R_G$
and $V_H = L_H\cup R_H$). Then the nature chooses a bijection $\pi \in \Pi_{LR}$ uniformly at random. 
We obtain a graph $F = G \boxplus_{\pi} H$. 

\begin{theorem}\label{thm:mainPlusPi}
There exists a deterministic polynomial-time algorithm that given a graph $F=G\boxplus_{\pi} H$ outputs a $\Theta(1)$-balanced partition
of $V_F=V_G$ into two sets $L'$ and $R'$. If there are no edges between $L_G$ and $R_G$ in $G$, then
the cost of the cut $(L',R')$ is bounded by $O(|E_H|+n \log^3 n)$ with probability $1-o(1)$
over a random choice of $\pi\in \Pi_{LR}$.
\end{theorem}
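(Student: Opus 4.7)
The plan is to attack the theorem via an SDP relaxation for Balanced Cut. I would use the natural semidefinite relaxation with $\ell_2^2$-triangle inequalities (analogous to the Arora--Rao--Vazirani relaxation), augmented with the balance constraint. As a first step, bound $\sdpcost(F)$ from above: the integer solution placing $L_G$ and $R_G$ at two orthogonal unit vectors is SDP-feasible, and since $G$ has no edges across $(L_G, R_G)$ the only contribution to the objective comes from $\pi(E_H)$-edges crossing the planted cut. Hence $\sdpcost(F) \le |E_H|$.

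The main task is then to design a rounding that converts any SDP solution of value $\sdpcost$ into a $\Theta(1)$-balanced cut $(L', R')$ in $F$ of cost $O(\sdpcost) + O(n\polylog n)$. Since plain ARV-style rounding only yields $O(\sqrt{\log n})$, I would exploit the PIE structure. The rounding I envision first extracts two well-separated ``core'' regions $(C_L, C_R)$ from the SDP embedding via a ball-growing / region-growing procedure tracked by the $\budget$ and $\extraBudget$ accounting, and then grows them to a balanced partition by absorbing the remaining vertices at a cost proportional to their SDP distance to the cores. The $O(n\polylog n)$ additive slack would come from a small fraction of ``ambiguous'' vertices whose SDP vectors do not clearly belong to either cluster.

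The essential role of the PIE randomness is encoded in a concentration claim: with probability $1 - o(1)$ over $\pi \in \Pi_{LR}$, for every balanced cut $(S, T)$ of $V_G$ in a sufficiently rich structured family, the number of $\pi(E_H)$-edges that lie inside $S$ or inside $T$ is $O(|E_H|) + O(n\polylog n)$. For a fixed cut this is sub-Gaussian concentration for a sum of negatively correlated indicators under a uniformly random $\pi \in \Pi_{LR}$ (via a standard McDiarmid/Talagrand argument on the symmetric group). To extend this to a union bound over all cuts the rounding might plausibly output, I would build in advance a polynomial-sized net of candidate cuts parametrized by the possible core pairs $(C_L, C_R)$ and union-bound over the net; the granularity of the net is calibrated against the $\polylog(n)$ slack term.

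The hard part, I expect, is the interplay between the adversarial graph $G$ and the SDP. Because $G$ is arbitrary, the SDP may prefer some ``orthogonal'' low-cost direction that slices through $L_G$ rather than separating $L_G$ from $R_G$, and blindly rounding along that direction would return a wrong cut. The PIE concentration is exactly what rules this out: any balanced cut substantially misaligned with $(L_G, R_G)$ must trap $\Omega(|E_H|)$ of the $\pi(E_H)$-edges inside its two sides, which would violate the target cost bound and therefore contradict SDP optimality. Engineering the rounding so that it detects and rejects such misaligned directions while staying within the $O(\sdpcost + n\polylog n)$ budget---and doing so without knowing $(L_G, R_G)$ or $\cal D$ in advance---is the heart of the technical argument.
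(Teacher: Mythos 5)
There is a genuine gap, and it sits exactly at the point your proposal leaves vague: the interaction between the random permutation $\pi$ and the SDP solution. Your plan is (i) a single-shot rounding that turns any SDP solution of value $\mathrm{sdp}$ into a balanced cut of cost $O(\mathrm{sdp}) + O(n\,\mathrm{polylog}\, n)$, justified by (ii) concentration over $\pi$ for each fixed cut plus a union bound over a polynomial-sized net of candidate cuts. Step (i) cannot be unconditional (it would give a constant-factor worst-case approximation for Balanced Cut, contradicting the SSE-hardness the paper cites), so all the weight falls on (ii); but the candidate cuts your rounding might output are determined by the SDP embedding of $F = G \boxplus_\pi H$, which itself depends on $\pi$. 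A net fixed in advance, independent of $\pi$, would have to cover the outputs over all possible SDP solutions, and there is no argument that polynomially many cuts suffice; conversely, a union bound over exponentially many cuts fails because for a fixed cut the concentration over $\pi$ is weak when $H$ is adversarial (e.g.\ $H$ a clique on $\sqrt{2|E_H|}$ vertices, or $H$ with vertices of degree $\Theta(n)$: a single transposition can move $\Theta(n)$ edges, so McDiarmid/Talagrand gives nothing close to $2^{-\Theta(n)}$ tails). This dependence of the SDP solution on the random edges is precisely the difficulty the paper singles out, and it is resolved by a completely different device: a Main Structural Theorem proved by an encoding (prefix Kolmogorov complexity) argument, which shows that \emph{for every feasible SDP solution simultaneously}, only $O(\eta d n)$ random edges can be short while their endpoints have many short $G$-edges and non-heavy neighborhoods --- otherwise $\pi$ admits an encoding of length $\log_2|\Pi_{LR}| - 2\log_2 n$, which only an $n^{-2}$ fraction of permutations admit. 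Nothing in your proposal substitutes for this step.

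A second, related problem is the framing in your last paragraph. A cut misaligned with $(L_G,R_G)$ that ``traps'' $\Omega(|E_H|)$ random edges inside its sides does not pay for them --- uncut edges contribute nothing --- so this cannot ``violate the target cost bound,'' and in any case the algorithm neither needs to nor can align with the planted cut (the paper notes recovery is information-theoretically impossible; cheap misaligned cuts are perfectly acceptable outputs). The real obstacle is the $O(\sqrt{\log n})$ rounding loss, and the paper's route around it is iterative rather than single-shot: in each of $T = O(\log\log n)$ rounds it cuts all $\delta/2$-long edges (which, by the structural theorem, are mostly random edges, paid for by the budget accounting), removes heavy and ``damaged'' vertices, and re-solves the SDP; Theorem~\ref{thm:mainSDPbound} shows the residual SDP value halves each round, so after $T$ rounds it is $O(dn/D_{ARV})$ and the final ARV rounding costs only $O(dn) = O(|E_H| + n\log^3 n)$. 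Your core-growing rounding and budget bookkeeping echo fragments of this (the Heavy Vertices Removal and budget invariants), but without the structural theorem and the iteration there is no mechanism that brings the cost down to a constant factor of $|E_H|$ plus the additive term.
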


\textbf{Remark:}
To simplify the exposition we do not attempt to optimize the constants in the $O(\cdot)$ notation. The additive term $n \log^3 n$
can  be slightly improved.

This theorem implies Theorem~\ref{thm:main}. Indeed, if $|E_R|$ is a random permutation invariant set of edges, then 
$E_R$ is distributed identically to $\pi(E_R)$, where $\pi$ is a random permutation from $\Pi_{LR}$. Thus, graphs
$(V_G,E_G\cup E_R)$ are distributed identically to graphs $V_G\boxplus_{\pi}(V_G,E_R)$. The algorithm from Theorem~\ref{thm:mainPlusPi} 
succeeds with probability $1-o(1)$ on graphs $V_G\boxplus_{\pi}(V_G,E_R)$ for every fixed $E_R$ and random $\pi\in \Pi_{LR}$.
Thus, it succeeds with probability $1-o(1)$ on graphs $(V_G,E_G\cup E_R)$.

\subsection{Techniques}
We present a very high-level overview of the algorithm.
We are given a graph $F = G \boxplus_{\pi} H$ and our goal is to find a balanced cut of size roughly
$O(|E_H|)$. 
We assume that $|E_G| \gg |E_H|$ as otherwise any balanced cut cuts $O(|E_H|)$ edges and we are done.
We write an SDP relaxation for Balanced Cut. The  relaxation is 
similar but slightly different from the one of Arora, Rao and Vazirani~\cite{ARV}
(see Section~\ref{sec:prelim} for details). 
The SDP solution assigns a vector $\varphi(u)$ to every vertex $u \in V_G$. 
The objective function is to minimize $\sum_{(u,v)\in E_F}\|\varphi(u) - \varphi(v)\|^2$.
The SDP constraints ensure that all vectors lie on a sphere ${\cal S}$ of radius $\sqrt{2}/2$.
Given an SDP solution, we say that an edge $(u,v)$ is $\delta$-short if $\|\varphi(u)  - \varphi(v)\|^2 \leq \delta$, where $\delta$ is a fixed constant,
and that it is $\delta$-long, otherwise.

For the sake of discussion, let us first make a very unrealistic assumption
that the SDP solution is determined by the set of edges $E_G$ and does not depend on the set 
of random edges $E_R=\pi E_H$.
Assume furthermore that all vectors $\{\varphi(u)\}$ are distributed more-or-less uniformly on the sphere $\cal S$;
more precisely, assume that every ball of radius $\delta$ w.r.t. the squared Euclidean distance contains very few vectors $\varphi(u)$.
Then for every edge $e=(u,v)\in E_H$, the probability over $\pi$ that vectors $\varphi(\pi u)$ and $\varphi(\pi v)$
lie in the same ball of radius $\delta$ is very small, and thus  $\pi e$ is a long edge with high probability.
Now the total number of long edges in $F$ is at most $|E_R|/\delta$ since each long edge contributes
at least $\delta$ to the SDP objective function and the SDP value is at most the cost of the planted cut. 
This discussion suggests an approach to the problem.
Let us remove all long edges in $F$. When we do so, we decrease the number of edges in $E_R$ by a constant factor and
cut only a constant number of edges in $E_G$ for each cut edge in $E_R$. 
We repeat this step over and over until (almost) all random edges are cut. The total number of removed edges does
not exceed $O(|E_R|)$, as required.

There are several problems with this argument. 
\begin{enumerate} 
\item The SDP solution does depend on the set $E_R$.  
\item Vectors $\varphi(u)$ are not uniformly distributed on the sphere $\cal S$, in general. In fact, there are only two possible values
for vectors $\varphi(u)$ in the intended integral solution.
\item We will not make any progress if we just run the same procedure over and over.
\end{enumerate}

We use a \textit{Heavy Vertices Removal} procedure to deal with the second and third problems. Conceptually, the procedure finds balls of radius $\delta$ that contain many vertices and cuts them off from $F$ so that the total number of cut edges is small. 
We apply this procedure more-and-more aggressively in consequent iterations.

The first problem is much more serious and most of this paper describes how to solve it.
Recall that we assume that $|E_G| \gg |E_H|$ and thus most edges in $E_G$ are short. That means informally that short edges of $G$ form a ``skeleton'' of $G$ --- edges in this skeleton are short and they locally constrain how the SDP solution may look like. 
The skeleton does not necessarily cover the whole graph $G$;
moreover, even if initially the skeleton covered the whole graph $G$, it may no longer cover $G$ after we perform a few iterations of the algorithm. We use a special \textit{Damage Control}
procedure to remove vertices not covered by the skeleton. This is a tricky step since the algorithm does not know which edges are in $E_F$ and which are in $E_R$ and consequently cannot compute the skeleton. 

Now to make our argument work, we need to show that few edges in $E_R$ are short (and thus many edges in $E_R$ are long). Assume to the contrary that many edges in $E_R$ are short.
Then we can also find a skeleton in the graph $H$.
We prove in the Main Structural Theorem that if both graphs $G$ and $H$ have skeletons then there is a very efficient encoding of $\pi$; namely, we prove that the prefix Kolmogorov complexity $KP(\pi)$ of $\pi$ is much less than $\log_2 |\Pi_{LR}|$. 
The encoding consists of two parts. 
We identify two relatively small sets of vertices $Q_G \subset V_G$ and $Q_H \subset V_H$ and record values of $\varphi(u)$ for $u\in Q_G$ and values of $\varphi(\pi(x))$ for $x \in Q_H$ in the first part of the encoding. The first part of the encoding
allows us to approximately reconstruct values of 
$\varphi(u)$ for all vertices $u\in V_G$ and values of $\varphi(\pi(x))$ for all vertices 
$x\in V_H$ using that edges in the skeletons for $G$ and $H$ are short. 
Note that if $u = \pi(x)$ then $\varphi(u) = \varphi(\pi(x))$. Thus if we knew the
values of $\varphi(u)$ and $\varphi(\pi(x))$ exactly and all values $\varphi(u)$ were distinct, we would be able to reconstruct $\pi$: $\pi(x) = \varphi^{-1} (\varphi(\pi(x)))$. In fact, the encoding gives us only
approximate values of $\varphi(u)$ and $\varphi(\pi(x))$ but still it tells
us that $\pi(x)$ is equal to such $u$ that $\varphi(u)$ and $\varphi(\pi(x))$
are close. Given that, we can very efficiently record additional information necessary
to reconstruct $\pi$ in the second part of the encoding. We show that the 
total length of the encoding is much less than $\log_2 |\Pi_{LR}|$ bits and thus
$KP(\pi) \ll \log_2 |\Pi_{LR}|$.

Since an exponentially small fraction of permutations in $\Pi_{LR}$ has prefix Kolmogorov
complexity much smaller than $\log_2 |\Pi_{LR}|$, the probability that  
both graphs $G$ and $H$ have skeletons is exponentially small and thus $E_R$ contains 
many short edges with high probability.

We note that the algorithm is quite involved and technical, and 
we cannot describe it accurately in the introduction. Thus the overview given above is very informal.
It only gives a rough idea of how our algorithm and analysis work. In particular, we do not use the informal 
notion of ``skeleton'' in the paper.


\paragraph{Technical Comparison}
We use ideas introduced in papers on semi-random instances of Unique Games~\cite{KMM} and on 
semi-random instances of graph partitioning problems~\cite{MMV}.
The very high-level approach of this paper is somewhat similar to that of our previous work~\cite{MMV}.
As in~\cite{MMV},  our algorithm iteratively removes long edges and uses a Heavy Vertices Removal procedure.
However, overall the algorithm and analysis in this paper are very different from that of~\cite{MMV}.
In~\cite{MMV}, the proof of the main structural theorem relies on the fact that $H$ is a random $G(n/2,n/2,q)$ bipartite graph.
That ensures that  most edges in $E_R$ are long no matter what the graph $G$ is. However, 
that is no longer the case in the present paper: The graph $(V,E_R)$ can be a completely
arbitrary graph. It does not have to be an expander or ``geometric expander'' (the notion we used in~\cite{MMV}). To prove the structural theorem, we have to analyze the skeleton formed 
by edges in $E_G$. As a result, the proof of the structural theorem is completely different from the proof in~\cite{MMV}.
The algorithm is also significantly different. It needs to perform an extra Damage Control step
and the Heavy Vertices Removal Step is quite different from that in~\cite{MMV}. 
There are numerous other differences between
algorithms.

\section{Preliminaries}\label{sec:prelim}
We work with the model described in Section~\ref{sec:two-adversaries}. We denote the number of vertices in $F=G\boxplus_{\pi}H$ by $n$ and 
let 
$$d = \max\{2|E_H| /n, C\log^3 n\}$$
 for sufficiently large constant $C$ 
($d$ equals the average degree of vertices in the graph $H$ if the average degree is greater than
$C\log^3 n$).
We assume without loss of generality that $d$ is known to the algorithm (the algorithm 
can find $d$ using binary search).  We denote the degree of a vertex $u$ in $F$ by $\deg(u, F)$, in $G$ by $\deg(u,G)$, and in $H$ 
by $\deg(u,H)$.

Our algorithm performs many iterations; in each iteration, it solves an SDP relaxation for Balanced Cut on a subgraph $F'$ of $F$.
The relaxation for $F'$ assigns a vector $\varphi(u) \in \mathbb{R}^n$ to every vertex $u$ of $F'$. The SDP is shown in Figure~\ref{fig:SDP}.
\begin{figure}
\OpenFrame
\begin{align}
&\text{minimize:\ } \sum_{(u,v)\in E_{F'}} \|\varphi(u) - \varphi(v)\|^2 
 \label{SDP:objective}\\ 
\intertext{such that for every $u,v,w\in V_{F'}$,}
&\hspace*{2.2mm}\|\varphi(u)\|^2 = \frac{1}{2}  \label{SDP:normalization}\\ 
&\hspace*{2.2mm}\sum_{v\in V_{F'}} \left(1 - \| \varphi(u) -  \varphi(v) \|^2 \right) \leq  n/2
\label{SDP:spreading}\\ 
&\hspace*{2.2mm}\|\varphi(u)- \varphi(v)\|^2 + \|\varphi(v)- \varphi(w)\|^2 \geq \|\varphi(u)- \varphi(w)\|^2. \label{SDP:triangle}
\end{align}
\CloseFrame
\caption{SDP relaxation for Balanced Cut}
\label{fig:SDP}
\end{figure}
The intended integral solution is $\varphi(u) = e_1/\sqrt{2}$ if $u\in L$ and $\varphi(u) = e_2/\sqrt{2}$ if $u\in R$, where $e_1$ and $e_2$ are 
two fixed orthogonal unit vectors. The intended solution satisfies all SDP constraints.
We denote the cost of a feasible SDP solution $\varphi$ for a graph $F'$ by $\sdpcost(\varphi, F')$:
$$\sdpcost(\varphi, F') = \sum_{(u,v) \in E_{F'}} \|\varphi(u) - \varphi(v)\|^2.$$
The cost of the intended SDP solution  
equals the number of edges from $L$ to $R$. Since only random edges in $F$ go from $L$ to $R$, it is at most $|E_R|$. 
Note that the optimal SDP solution $\varphi_{\mathrm{opt}}$ for $F$ costs at most as much as the intended solution; thus 
$\sdpcost(\varphi_{\mathrm{opt}},F) \leq |E_R| \leq dn/2$.  

Our SDP relaxation for Balanced Cut is slightly different from that of Arora, Rao and Vazirani: we use different normalization in~(\ref{SDP:normalization})
and use different spreading constraints~(\ref{SDP:spreading}). However, the algorithm of Arora, Rao and Vazirani works with our SDP.
We denote the approximation factor of the algorithm 
by $D_{ARV}=O(\sqrt{\log n})$. 
The algorithm given an SDP solution $\varphi$ for a subgraph $F'$ of $F$ finds a cut $(L',R')$
that cuts at most $D_{ARV} \, \sdpcost(\varphi,F')$ edges  such that both sets $L'$ and $R'$ contain at most $cn$ vertices for some absolute constant $c_{ARV}\in (0,1)$.
Let $T=\roundup{\log_2 D_{ARV}}=O(\log \log n)$.

We say that an edge $(u,v)$ is $\nicefrac{\delta}{2}$-short if $\|\varphi_t(u)-\varphi_t(v)\|^2\leq\nicefrac{\delta}{2}$; otherwise,
it is $\nicefrac{\delta}{2}$-long.
In our algorithm, we use five parameters $K$, $\beta = 200K$, $\alpha = 50K$,  $\delta=1/12$ and 
$D_n = \max\{D_{ARV}, \alpha\}$. The parameter $K$ is a sufficiently large constant.
Let $\VGleqA = \set{u\in V_F: \deg(u,G) < \alpha d}$. 
It will be convenient for us to assume that $|\VGleqA| \leq n/\alpha$. If this is not the case, we run a very simple algorithm for Balanced Cut, which we present
in Appendix (see Lemma~\ref{lem:simpleBalancedCut}).

Our algorithm iteratively cuts edges and removes some components of the graph (a component is an
arbitrary subset of vertices).
We say that a vertex is \textit{removed} if it lies in a removed component; otherwise, we say that the vertex is \textit{active}.
We distinguish between cut and removed edges. An edge $e$ is \textit{cut} if the algorithm cuts it, 
or if $e$ belongs to the edge boundary of a removed component. An edge is \textit{removed} if either it is cut or at least one of its endpoints is removed.

The algorithm we present partitions the graph into several pieces and cuts at most $O(dn) = O(|E_H| + n\log^3 n)$
edges. The size of each piece is at most $\max(c_{ARV}, \nicefrac{3}{4}) \, n$. We can combine all pieces into two 
$\max(c_{ARV}, \nicefrac{3}{4})\,n$-balanced parts. The number of edges between these parts is at most $O(dn)$
as required in Theorem~\ref{thm:mainPlusPi}.

\section{Algorithm} \label{sec:algo}
We now present the algorithm. The main steps of the algorithm are given in Figure~\ref{fig:mainAlgorithm}. Below we describe the algorithm
in more detail.

\smallskip\noindent\textbf{Budget allocation:} We store a budget for every vertex $u$. We use this budget to keep track of 
the number of cut edges incident on $u$. We do that to identify vertices we need to remove at Steps 3 and 4, and also to bound the total number of cut
edges. Initially, the algorithm assigns
a budget to every vertex $u$: vertex $u$ gets a budget of 
$\beta d$ if $\deg(u,F)\geq \alpha d$;
and a budget of $\alpha d$ if $\deg(u,F) < \alpha d$. We denote the budget of a vertex $u$ by $\budget (u)$ and the budget of
a set $S$ by $\budget (S)\equiv \sum_{u\in S} \budget (u)$. We allocate an extra budget of $3nd/\delta$ units. We keep this 
extra budget in the variable $\extraBudget$. 

\smallskip\noindent\textbf{Main loop:} The algorithm works in $T$ iterations. We let $F_1(0)$ to be the original graph $F$.  
Consider iteration $t$. At Step 1, the algorithm solves the SDP relaxation for the graph $F_1(t)$
and obtains an SDP solution $\varphi_t:V_{F_1(t)}\to \bbR^n$, which is a mapping of vertices of the graph $F_1(t)$ to 
$\bbR^n$. At Step 2, the algorithms cuts all $\nicefrac{\delta}{2}$-long edges i.e., edges $(u,v)$ such that 
$\|\varphi_t(u)-\varphi_t(v)\|^2\geq\nicefrac{\delta}{2}$. At Step 3, the algorithm runs the Heavy Vertices Removal procedure
and at Step 3, the algorithm runs the Damage Control procedure. We describe the details of these three steps in Sections~\ref{sec:LongEdgesRemoval}, \ref{sec:HeavyVerticesRemoval} and~\ref{sec:DamageControl}. The Heavy Vertices Removal and Damage Control procedures remove some 
vertices from the graph.  Edges on the boundary of the components removed by these procedures 
at iteration $t$ are cut. We denote them by $\Upsilon_3(t)$ and $\Upsilon_4(t)$, respectively.
We denote the set of long edges cut at Step 2 by $\Upsilon_2(t)$. 
Finally, we denote the graphs obtained after Steps 2, 3, 4
by $F_2(t)$, $F_3(t)$ and $F_4(t)$. 
At iteration $t$, 
after completion of Step~$i$, the set of active vertices is $V_{F_i(t)}$.

\smallskip\noindent\textbf{Budget updates:} When we cut a long edge $(u,v)$ at Step 2, we increase the budget of vertices $u$, $v$ by 1 and
decrease the extra budget by 3. When we cut
an edge $(u,v)$ at Step 3 or Step 4, we increase the budget of the active endpoint (the one we do not remove) by 1. 
Thus, we have the following invariant: \textit{The budget of every active vertex $u$ always equals the initial budget of $u$ plus the number of cut edges incident on $u$ in the graph $F$.}

\smallskip\noindent\textbf{Final partitioning:} After the last iteration of the loop is completed, we partition the graph 
$F_1(T) = F_4(T-1)$ into two balanced pieces $L'$ and $R'$ using the algorithm of Arora, Rao and Vazirani.
We output $L'$, $R'$ and all components removed at Steps 3 and 4 (in all iterations).

\begin{figure}[t]
\OpenFrame

\noindent \textbf{Main Algorithm} 

\medskip
\noindent \textbf{Input: }a graph $F=G\boxplus_{\pi}H$ (graphs $G$, $H$, and the permutation $\pi$ are hidden from the algorithm).

\noindent \textbf{Output: } a partitioning of $F$ into pieces of size at most $c n$ for some $c< 1$.

\begin{itemize}
\item \textbf{Set the parameters:} $\beta = 200K$, $\alpha=50\beta$, $\eta_t = 2^{-t}$ (for $t\in \bbZ^+$). Let
$D_{ARV}=O(\sqrt{\log n})$ be the approximation ratio of the ARV algorithm; $D_n = \max\{D_{ARV}, \alpha\}$; $T=\roundup{\log_2 D_{ARV}}$.
\item \textbf{Allocate budget:} For every vertex $u\in U$, set $\budget(u)=\beta d$ if $\deg (u, F)\leq \alpha d$;
and $\budget(u)=\alpha d$ if $\deg (u, F)\geq \alpha d$.
\item \textbf{Let } $F_1(0)=F$.
\item \textbf{for $t=0$ to $T-1$ do}:
\begin{enumerate}
\item Solve the SDP on the graph $F_1(t-1)$. Denote the SDP solution by $\varphi_t:V\to \bbR^n$.
\item Remove $\nicefrac{\delta}{2}$-long edges. Update the budgets.
\item Run Heavy Vertices Removal procedure with $\eta_t = 2^{-t}$. Update the budgets.
\item Run Damage Control procedure. Update the budgets.
\item Denote the graphs obtained after Steps 2--4 by $F_2(t)$, $F_3(t)$ and $F_4(t)$.
Denote the set of edges cut at these steps by $\Upsilon_2(t)$, $\Upsilon_3(t)$ and $\Upsilon_4(t)$. 
Let $F_1(t+1)=F_4(t)$.
\end{enumerate}
\item Partition the graph $F_1(T)$ into two graphs $L'$ and $R'$ using the ARV algorithm.
\item \textbf{Return} $L'$, $R'$ and all components removed at Steps 3 and 4.
\end{itemize}
\CloseFigureFrame
\caption{Main steps of the algorithm. We present the algorithm in more detail below.}
\label{fig:mainAlgorithm}
\end{figure}

\subsection{Analysis}

We show that the algorithm returns a solution of cost at most $O(|E_H|)$ if the graph $F$ satisfies Structural Properties 1--4, 
which we describe in  Section~\ref{sec:StructPropDefs}. Then we show that the graph $F=G\boxplus_{\pi}H$ satisfies these properties with high probability (i.e., with probability $(1-o(1))$). 

Define the total budget after Step $i$ at iteration $t$ to be the sum of budgets of active vertices plus the extra budget:
$$\totalBudget = \sum_{u \text{ is active}} \budget (u) + \extraBudget.$$
We prove that at every step of the algorithm the total budget does not increase (though the budgets of some vertices
do increase). Furthermore, we show that whenever we cut a set of edges $\Upsilon_i(t)$, the total budget decreases
by at least $|\Upsilon_i(t)|$. In other words, we pay a unit of the budget for every cut edge.

\begin{lemma}\label{lem:mainBudget}
Let $b_{before}$ be the total budget before executing Step $i$ at iteration $t$; and 
let $b_{after}$ be the total budget after executing Step $i$ at iteration $t$. If $F=G\boxplus_{\pi}H$ satisfies Structural Properties 1--4, then 
$$b_{after} \leq b_{before} - |\Upsilon_i(t)|.$$
\end{lemma}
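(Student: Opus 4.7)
My plan is to handle each of the four sub-steps separately. Step 1 cuts no edges, so the claim is vacuous ($|\Upsilon_1(t)|=0$ and the total budget is unchanged). Step 2 is pure arithmetic: by the update rule, each cut long edge $(u,v)\in\Upsilon_2(t)$ raises $\budget(u)$ and $\budget(v)$ by $1$ each and lowers $\extraBudget$ by $3$, giving a net change of $-1$ per edge. Summing over $\Upsilon_2(t)$ yields $b_{after} = b_{before} - |\Upsilon_2(t)|$ exactly. (A side obligation is that $\extraBudget$ must remain nonnegative so that the bookkeeping is meaningful; I would discharge this by bounding the total number of long edges cut across all $T$ iterations by $2\sdpcost/\delta$, where $\sdpcost \leq |E_R|\leq dn/2$, and comparing against the initial extra budget of $3nd/\delta$, which will be amply sufficient given that $T = O(\log\log n)$ and the SDP cost is bounded by Structural Property.)

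For Steps 3 and 4, the key observation is that the cut set $\Upsilon_i(t)$ is exactly the (disjoint) union of edge boundaries $\partial C$ over the components $C$ removed during the step. Removing a component $C$ changes the total budget by $-\budget(C) + |\partial C|$: the budgets of vertices in $C$ are lost from the sum, while for each boundary edge the active endpoint receives $+1$. Thus it suffices to prove the local inequality
\begin{equation}
\budget(C) \;\geq\; 2\,|\partial C| \label{eq:budget-to-boundary}
\end{equation}
for every component $C$ removed at Step 3 or Step 4; summing over the removed components then yields $b_{after} - b_{before} \leq -|\Upsilon_i(t)|$, which is the claim.

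The heart of the proof, and the step I expect to be the main obstacle, is establishing \eqref{eq:budget-to-boundary}. For Step 3 (Heavy Vertices Removal), the procedure is designed so that every removed ``heavy ball'' $C$ satisfies an isoperimetric inequality relating the number of (budget-weighted) vertices inside to the number of boundary edges: since every $u\in C$ contributes $\budget(u)\geq \alpha d$ (or $\beta d$ for high-degree vertices), and the boundary of a $\delta/2$-ball consists only of long edges whose count is controlled by the SDP cost, the factor $2$ in \eqref{eq:budget-to-boundary} will follow from the choice $\alpha = 50\beta$, $\beta = 200K$ with $K$ large. For Step 4 (Damage Control), $C$ consists of vertices not covered by the short-edge ``skeleton''; their removal is paid for by their initial budget $\alpha d$ (respectively $\beta d$), which dwarfs the number of incident non-skeleton edges since Structural Properties 1--4 guarantee that only an $O(1/\alpha)$ fraction of $G$-edges around active vertices can be missing from the skeleton.

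I would therefore organize the proof as a short top-level argument that reduces everything to \eqref{eq:budget-to-boundary}, plus two technical sub-lemmas (one per procedure) verifying \eqref{eq:budget-to-boundary} for each removed component using the relevant structural properties and parameter inequalities. The combinatorial content lies entirely in those two sub-lemmas; the rest is bookkeeping.
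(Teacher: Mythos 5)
Your top-level bookkeeping is exactly the paper's: Steps 1 and 5 are trivial, Step 2 is exact arithmetic ($+1$, $+1$, $-3$ per long edge) with the side obligation on $\extraBudget$ discharged by the telescoping bound $\sum_t|\Upsilon_2(t)|\leq 2\,\sdpcost/\delta\leq \delta^{-1}dn$ (Lemma~\ref{lem:numberLongEdgesRemoved}), and Steps 3--4 reduce to showing $\budget(C)\geq 2|\partial C|$ for each removed component. The gap is in how you propose to prove that key inequality, and for both steps your route would not go through. For Step 3, a per-vertex argument (``each $u\in C$ carries $\alpha d$ or $\beta d$ of budget'') cannot control $|\partial C|$: a removed ball may contain few vertices yet have a huge edge boundary. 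Moreover, after Step 2 \emph{all} remaining edges are $\nicefrac{\delta}{2}$-short, so the boundary of the removed ball consists of short edges, not long ones; your claim that the boundary ``consists only of long edges'' is backwards. The paper's actual argument (Lemma~\ref{lem:HeavyVerticesBudget}) is global: the ball is removed only because it is $\eta_t$-heavy, so $\budget(B_u)\geq \beta\eta_t nd$ by definition, while $|\partial B_u|$ is bounded by a ball-growing choice of radius $r\in[3\delta,4\delta]$ against the SDP-cost bound of Theorem~\ref{thm:mainSDPbound}, giving $|\partial B_u|\leq 8K\eta_t dn/\delta\leq \nicefrac{\beta}{2}\,\eta_t dn$. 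This is also precisely where the hypothesis ``Structural Properties 1--4 hold'' enters (they are what make Theorem~\ref{thm:mainSDPbound} available); your sketch never invokes the heaviness threshold or the SDP-cost bound, so the factor $2$ has no source.

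For Step 4 your description does not match the procedure. Damage Control does not remove ``vertices not covered by the skeleton''; it removes the set $Y$ maximizing $\Delta(Y)=\budget(Y)-2|E_{F_3(t)}(Y,\bar Y)|-2\beta d|Y|$ (found by a min-cut computation), and only when $\Delta(Y)>0$. Given that, the inequality $\budget(Y)\geq 2|E_{F_3(t)}(Y,\bar Y)|=2|\Upsilon_4(t)|$ is immediate from $\Delta(Y)\geq 0$ — no skeleton argument and no structural properties are needed for this budget accounting (Lemma~\ref{lem:DamageControlBudget}); the structural properties are used elsewhere, to show the \emph{remaining} damaged set has small boundary (Lemma~\ref{lem:DC-budgetBound} and Section~\ref{sec:costSDP}). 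So the two ``technical sub-lemmas'' you defer the combinatorial content to are, respectively, provable only by a different mechanism than the one you outline (heaviness plus Theorem~\ref{thm:mainSDPbound} plus ball growing) and essentially a one-line consequence of the procedure's definition rather than of the skeleton structure.
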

At Steps 1 and 5, we neither update the budgets of vertices, nor do we change the set of active vertices, so the total budget does not change. 
We consider Steps 2--4 in Lemmas~\ref{lem:LongEdgesBudget}, \ref{lem:HeavyVerticesBudget}, and~\ref{lem:DamageControlBudget}. 
In Lemma~\ref{lem:LongEdgesBudget}, we also show that the extra budget and hence the total budget is always non-negative (the budgets
of vertices may only increase, but the extra budget may only decrease). 

Structural Property 3 (see Section~\ref{sec:StructPropDefs}) guarantees that the total budget initially allocated by the algorithm is
at most $\nicefrac{3}{2}\,\beta d n$. Hence, the total number of edges cut by the algorithm is at 
most $\nicefrac{3}{2}\,\beta d n$. We denote the set of all cut edges by $\Upsilon$:
$$\Upsilon = \bigcup_{\substack{i\in\{2,3,4\}\\t\in\{0,\dots, T-1\}}} \Upsilon_{i}(t).$$ 

The algorithm of Arora, Rao and Vazirani partitions the graph $F_4(T)$ into two pieces of size at most $cn$ each (where $c < 1$ is an absolute constant).
In Sections~\ref{sec:HeavyVerticesRemoval} and \ref{sec:DamageControl}, we show that each component removed at Steps 3 and 4 has size at most
$\nicefrac{3}{4}\,n$ (see Lemma~\ref{lem:HeavyVerticesBudget} and Lemma~\ref{lem:DamageControlBudget}). Hence, all pieces in the
partition returned by the algorithm have size at most $\max(\nicefrac{3}{4},c) n$.

Now we need to verify that the size of the cut separating different pieces in the partition is at most $O(dn)$. This cut
contains edges from $\Upsilon$ and edges cut by the ARV algorithm. We already know that $|\Upsilon|\leq \frac{3}{2}\beta dn 
=O(dn)$. It remains to prove that the ARV algorithm cuts $O(dn)$ edges. The proof follows from
Theorem~\ref{thm:mainSDPbound}, which is central to our analysis.

\begin{theorem}\label{thm:mainSDPbound} 
If the graph $F=G\boxplus_{\pi}H$ satisfies Structural Properties 1--4, then for every $t\in\{0,\dots, T\}$,
$$\sdpcost(\varphi_t, F_1(t)) \leq 8 K \eta_t dn,$$
where $\varphi_t$ is the optimal SDP solution for $F_1(t)$, $\eta_t\equiv 2^{-t}$, and
$K$ is an absolute constant.
\end{theorem}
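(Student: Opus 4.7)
The plan is to prove the inequality $\sdpcost(\varphi_t, F_1(t)) \leq 8K\eta_t dn$ by induction on $t$, using the intended integral (``planted'') SDP solution as a feasibility witness and tracking the number $|E_R \cap E_{F_1(t)}|$ of random edges surviving into iteration $t$, where $E_R = \pi(E_H)$.

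The key observation is that the planted assignment $\varphi(u) = e_1/\sqrt{2}$ for $u \in L_G$ and $\varphi(u) = e_2/\sqrt{2}$ for $u \in R_G$, restricted to any active vertex set $V_{F_1(t)}$, is always feasible: the normalization and triangle constraints are pointwise, and the spreading constraint only becomes slacker when we restrict to a subset. Its cost on $F_1(t)$ equals the number of $L_G$--$R_G$ edges in $F_1(t)$, and since $G$ contributes no such edges this is exactly $|E_R \cap E_{F_1(t)}|$. Consequently $\sdpcost(\varphi_t, F_1(t)) \leq |E_R \cap E_{F_1(t)}|$, so it suffices to prove the stronger claim $|E_R \cap E_{F_1(t)}| \leq 8K\eta_t dn$. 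The base case $t=0$ is then $|E_R| = |E_H| \leq dn/2 \leq 8K\eta_0 dn$ for $K \geq 1$.

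For the inductive step, given $|E_R \cap E_{F_1(t)}| \leq 8K\eta_t dn$, I would show that Step~2 of iteration $t$ destroys at least half of the surviving random edges, so $|E_R \cap E_{F_2(t)}| \leq \frac{1}{2}|E_R \cap E_{F_1(t)}|$. Since Steps~3 and~4 only delete edges, this yields $|E_R \cap E_{F_1(t+1)}| \leq 4K\eta_t dn = 8K\eta_{t+1} dn$, closing the induction. This halving is exactly the ``most random edges are long'' phenomenon of Section~\ref{sec:intro} and should be a deterministic consequence of Structural Properties~1--4 and the Main Structural Theorem sketched there. If instead more than half of $E_R \cap E_{F_1(t)}$ were $\delta/2$-short under $\varphi_t$, then combined with the short-edge ``skeleton'' that $\varphi_t$ inevitably induces on $G$ (since the SDP cost is much smaller than $|E_G|$, most edges of $E_G$ must be short), one could build a compression of $\pi$ via anchor sets $Q_G \subset V_G$ and $Q_H \subset V_H$: recording $\varphi_t$ on these anchors approximately determines $\varphi_t$ on the rest through the short-edge skeletons, and approximate identification of $\pi(x)$ by finding $u \in V_G$ with $\varphi_t(u) \approx \varphi_t(\pi(x))$ requires only few further bits. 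This would violate the Kolmogorov-complexity lower bound $KP(\pi) \approx \log_2|\Pi_{LR}|$ guaranteed by the structural properties.

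The main obstacle is this halving claim. To make it go through one must combine (a) the inductive SDP-cost bound, which controls how large and how short a skeleton must exist in $G$; (b) the spreading constraint together with the cumulative effect of Heavy Vertices Removal and Damage Control in previous iterations, which keeps the SDP vectors well-spread so that small balls in the SDP metric contain few images and the candidate encoding of $\pi$ is succinct; and (c) a uniformity assumption in the structural property strong enough to cover every feasible SDP solution $\varphi_t$ that can arise in the algorithm, not merely a single fixed one. Once the halving is in hand, the bookkeeping is routine: at $t=T$ one obtains $\sdpcost(\varphi_T, F_1(T)) \leq 8K\, dn/D_{ARV}$, so the ARV call at the end contributes only $O(dn)$ cut edges, matching the requirement of Theorem~\ref{thm:mainPlusPi}.
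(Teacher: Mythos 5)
There is a genuine gap, and it is exactly at the place you flag as ``the main obstacle'': the halving claim $|E_R\cap E_{F_1(t+1)}|\leq \tfrac12|E_R\cap E_{F_1(t)}|$ is not a consequence of Structural Properties 1--4, and the two-cluster planted witness cannot be pushed through the induction. The Main Structural Property only bounds the number of $\nicefrac{\delta}{2}$-short edges of $\pi(H)$ that \emph{in addition} satisfy condition 2 (the ball around the endpoint is not heavy) and condition 3 (the endpoint has $\short_{\varphi,\nicefrac{\delta}{2}}(u,G)\geq\max\{\beta d,\deg(u,H)/D_n\}$, i.e.\ it is ``undamaged''). Short random edges incident on heavy or damaged vertices are not counted, and the algorithm does not eliminate them: the Heavy Vertices Removal threshold $\eta_t\beta dn$ only removes sufficiently heavy balls, and the paper explicitly notes that the Damage Control procedure does \emph{not} remove all damaged vertices --- it only guarantees (Lemma~\ref{lem:DC-budgetBound}) that any remaining set $Y'$ has budget at most $2|E_{F_4(t)}(Y',\bar Y')|+2\beta d|Y'|$, i.e.\ it controls boundaries, not interiors. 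Hence many short random edges whose endpoints are damaged can survive Steps 2--4 in every iteration, and your planted $L$/$R$ witness is charged for each such surviving edge that crosses the planted cut. Nothing in the hypotheses forces the number of these edges to decay geometrically with $t$, so the induction does not close.

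The paper avoids this by changing the witness: in Lemma~\ref{lem:3part} it partitions $F_4(t)$ into \emph{three} parts $W\cap L$, $W\cap R$ and the damaged set $\bar W$, each of size at most $n/2$, and Theorem~\ref{thm:mainSDPbound} follows by assigning three mutually orthogonal vectors of length $\nicefrac{\sqrt2}{2}$ to the three parts. With this witness the uncontrolled random edges \emph{inside} $\bar W$ cost nothing; what must be bounded is only the boundary of the partition, namely the at most $K\eta_t dn$ short $H$-edges incident on $W$ (Main Structural Property, whose decay in $t$ comes from the increasingly aggressive heaviness threshold, not from edges being destroyed), plus $|E_{G_4(t)}(W,\bar W)|$, which is handled by decomposing $\bar W$ into the sets $X$, $Y$, $Z$ and combining the Damage Control guarantee with Structural Properties 2--4 (the token argument of Claims~\ref{claim:sizeBetaDZ} and~\ref{claim:zetaSize}). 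Your feasibility observation and base case are fine, and your instinct to invoke the compression/Kolmogorov argument is the right source of the $K\eta dn$ bound, but that argument inherently carries the ``non-heavy, undamaged'' side conditions, so any proof must make the damaged vertices a separate cluster (or otherwise neutralize their internal edges) rather than rely on the random edge count shrinking by half at each iteration.
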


We also use this theorem to prove Lemma~\ref{lem:HeavyVerticesBudget}, which bounds the number of edges cut by the Heavy Vertices removal
procedure. For~$T=\roundup{\log_2 D_{ARV}}$, we get that $\sdpcost(\varphi_T, F_1(T)) \leq K dn/D_{ARV}$. The algorithm
of Arora, Rao and Vazirani outputs an integral solution of cost at most 
$$D_{ARV}\times \sdpcost(\varphi_T, F_1(T)) \leq D_{ARV} \times \frac{K dn}{D_{ARV}} = Kdn.$$
That is, the size of the cut between $L'$ and $R'$ is at most $Kdn$.
This finishes the analysis of the algorithm.

\subsection{Notation}
Before proceeding to the technical part of the analysis, we set up some notation. During the execution of the algorithm,
we remove some vertices and cut some edges from the graph $F$. For the purpose of analysis, 
we will shadow these removals in the graphs $G$ and $H$. For every $F_i(t)$ we define 
two graphs $G_i(t)$ and $H_i(t)$. The vertices of these graphs are the vertices of $F_i(t)$.
The edges of $G_i(t)$ are edges of $F_i(t)$ that originally came from $G$; 
the edges of $H_i(t)$ are edges of $F_i(t)$ that originally came from $H$. Note that
$G_1(0)$ equals $G$, $H_1(0)$ is isomorphic to $H$, and the isomorphism between 
$H$ and $H_1(0)$ equals $\pi$.   

We denote by $\deg(u,F_i(t))$, $\deg(u,G_i(t))$, $\deg(u,H_i(t))$ the degree of the vertex $u$ in the graph $F_i(t)$,
$G_i(t)$ and $H_i(t)$, respectively. We denote by $\deg(u,F)$, $\deg(u,G)$, $\deg(u,H)$ the degree of $u$ in 
the original graphs $F$, $G$, $H$. Note that strictly speaking $\deg(u,H)$ is the degree of the vertex $\pi^{-1}(u)$
in the graph $H$.

Given a graph $G$, an SDP solution $\varphi: V_G\to \bbR^n$, and a positive number
$\delta>0$, we denote by $\short_{\varphi,\delta} (u,G)$ and $\short_{\varphi,\delta} (u,H)$ 
the number of $\delta$-short edges w.r.t the SDP solution $\varphi$ leaving vertex $u$ in $G$ and $\pi H$,
respectively. Finally, we denote by $N_F(u)$, $N_G(u)$ the set of neighbors 
of $u\in V_G$ in the graphs $F$ and $G$ and by $N_H(x)$ the set of neighbors of $x\in V_H$ in the graph $H$.

\subsection{Overview of the Proof}
The analysis of the algorithm relies on Theorem~\ref{thm:mainSDPbound}. It states that the cost of the optimal SDP solution for $F_4(t) = F_1(t+1)$ is $O(dn/2^t)$. 
To prove this theorem, we construct an SDP solution of cost $O(dn/2^t)$.
To this end, we first divide the graph $F_4(t)$ into two sets, the set of ``undamaged'' vertices $W$ and the set of ``damaged'' vertices  $\bar W$. Then we further subdivide $W$ into $W\cap L$ and $W\cap R$ and get a partition of $F_4(t)$ into three pieces $W\cap L$, $W \cap R$, and $\bar W$. We prove that each piece contains at most $n/2$ vertices and the total number of edges cut by the partition is $O(dn/2^t)$ (we outline the proof below). The partition defines a feasible integral SDP solution that assigns the same vector to vertices in one part and orthogonal vectors to vertices in different parts. The cost of this SDP solution is $O(dn/2^t)$ as required.

Thus we need to prove that the partition into $W\cap L$, $W\cap R$ and $\bar W$
is balanced and cuts few edges. We first deal with the part $\bar W$.
We run the Damage Control procedure that cuts off some components of the graph so as to ensure that $|\bar W| \leq n/2$ and more importantly $|\partial \bar W| \leq O(dn/2^t)$.
We describe the procedure and prove that it cuts a small number of edges if the graph satisfies Structural Properties 2--4 in Section~\ref{sec:DamageControl}; we show that a graph in the PIE model satisfies these properties w.h.p. in Sections~\ref{sec:StructProp23} and~\ref{sec:StructProp4}.

Now consider parts $W\cap L$ and $W \cap R$. We immediately have that $|W\cap L| \leq |L| = n/2$ and $|W\cap R| \leq |R| = n/2$. There are no edges between $W\cap L$ and $W\cap R$ in $G_4(t)$ (since $(L,R)$ is the planted cut). It remains to show that
there are at most $O(dn/2^t)$ edges between $W\cap L$ and $W\cap R$ in $H_4(t)$.
Note that all edges in $H_4(t)$ are $\nicefrac{\delta}{2}$-short w.r.t. $\varphi_t$
since we cut all  $\nicefrac{\delta}{2}$-long edges at Step 1.
We prove in the Main Structural Theorem (Theorem~\ref{thm:MainStructuralTheorem}) that there are at most $O(dn/2^t)$ $\nicefrac{\delta}{2}$-short edges in the induced graph $H_4(t)[W]$ and thus there are at most $O(dn/2^t)$ edges between $W\cap L$ and $W\cap R$ in $H_4(t)$.

We now sketch the proof of the Main Structural Theorem (Theorem~\ref{thm:MainStructuralTheorem}). 
We present the proof in a simplified setting; most steps are somewhat different in the actual proof.
We assume that all vertices in $H$ have degree $d$.
Denote $\eta = 1/2^t$. All vertices in $W$ satisfy several properties --- if a vertex does not satisfy these properties it is removed either by the Heavy Vertices Removal or Damage Control procedure.
The Heavy Vertices Removal procedure removes all vertices $u$ such that
the ball $\{v:\|\varphi_t(u)-\varphi_t(v)\|^2 \leq 3\delta\}$ has a budget of $\eta \beta dn$. 
We show that this implies that for every active $u$ there are at most $2\eta n$ vertices with 
more than $\beta d/2$ neighbors in the ball of radius $2\delta$ around $u$ (see Lemma~\ref{lem:prop2isSatisfied}).
The Damage Control procedure removes all ``damaged'' vertices.
We do not describe the Damage Control procedure in this overview, but we note that in particular it guarantees that 
$\short_{\varphi_t, \nicefrac{\delta}{2}} (u, G) \geq \beta d$ for all vertices
$u \in W$. 

For simplicity, we will assume now that $W = V_G$.
Recall that $F = G \boxplus_\pi H$ in our model.
We show that if $H_4(t)$ contains more than $K \eta dn$ $\nicefrac{\delta}{2}$-short edges then there is a binary encoding of $\pi$ with much fewer than $\log_2 |\Pi_{LR}|$ bits.
Since any encoding needs $\log_2 |\Pi_{LR}|$ bits to encode a typical permutation in $\Pi_{LR}$, the probability that for a random $\pi \in \Pi_{LR}$
the graph $H_4(t)$ contains more than $K\eta dn$ short edges is very small.

We fix a permutation $\pi$ and assume to the contrary that $H_4(t)$ contains more than $K \eta dn$ $\nicefrac{\delta}{2}$-short edges. We are going to show that there is a short encoding of $\pi$. 
We sample two random subsets $Q_G\subset V_G$ and $Q_H \subset V_H$. Each vertex
of $G$ and $H$ belongs to $Q_G$ and $Q_H$ (respectively) with probability $q= D_n/d$. Additionally, we choose random orderings of $Q_G$ and $Q_H$. 
Note that $Q_G$ and $Q_H$ are of size approximately $qn$.
From now on all random events that we consider are with respect to our random choices
of $Q_G$, $Q_H$ and their ordering (not the random choice of $\pi$).

For every vertex $x\in V_H$, let $x'$ be the first neighbor of $x$ in $Q_H$ w.r.t. to the random ordering of $Q_H$ if it exists.
Note that the probability that $x'$ is defined for a given $x\in V_H$ is $1-(1-q)^d\approx 1 - e^{-D_n}$; that is, $x'$ is defined for most vertices $x$.
Vertex $x'$ is uniformly distributed in $N_H(x)$. Thus the edge $(x,x')$ is short 
with probability $\short_{\varphi_t, \nicefrac{\delta}{2}} (u, H)/d$. The expected number of vertices $x$ such that $(x,x')$ is short is 
$$\sum_{u\in V_H} \short_{\varphi_t, \nicefrac{\delta}{2}} (u, H)/d \geq K \eta nd /d = K\eta n.$$

If $x'$ exists and $(x,x')$ is short, define
$
B     = \{v: \|\varphi_t(v)-\varphi_t(\pi(x'))\|^2 \leq \delta\} $ and $
\Xi(x)=\{v: |Q_G\cap N_G(v)\cap B| \geq q \beta d \}
$.
Recall that for every ball of radius $2\delta$ (or less), there are at most 
$2\eta n$ vertices with more than $\beta d/2$ neighbors in the ball. Thus,
$|\Xi(x)|\lesssim 2\eta n$. Now note that $\short_{\varphi_t, \nicefrac{\delta}{2}} (\pi(x), G) \geq \beta d$ thus there are at least $\beta d$ vertices in $N_G(\pi(x))$ at distance at most
$\nicefrac{\delta}{2} + \|\varphi_t(\pi(x)) - \varphi_t(\pi(x'))\|^2 \leq \delta$ from $\pi(x')$. That is,
$|N_G(\pi(x)) \cap B| \geq  \beta d$ and in expectation
$Q_G\cap N_G(\pi(x)) \cap B$ contains at least $q \beta d$ vertices. Therefore, 
$\pi(x) \in \Xi(x)$ w.h.p. 

Let $\cal X$ be the set of vertices $x$ such that $x'$ exists, the edge $(x,x')$ is 
$\nicefrac{\delta}{2}$-short, $\pi(x) \in \Xi(x)$ and
$|\Xi(x)| \leq 2\eta n$. As we showed above, $\cal X$ contains approximately $K\eta n$ vertices.
We are now ready to explain how we encode the permutation $\pi$.
We first record sets $Q_G$, $Q_H$ and orderings of $Q_G$ and $Q_H$ in our encoding.
For each $u\in Q_G$ we record $\varphi_t(u)$; for each $x\in Q_H$ we record 
$\varphi_t(\pi(x))$. We record the set $\calX$ and the restriction of $\pi$ to the complement of $\bar \calX$. 
Finally, for each $x\in \calX$, we record the sequential number of $\pi(x)$ in the set $\Xi(x)$ w.r.t. an arbitrary fixed ordering of $V_G$
(i.e. the number of elements preceding $\pi(x)$ in $\Xi(x)$).

We show how to decode $\Pi_{LR}$ given our encoding of $\pi$.
We know the value of $\pi(x)$ for $x\in \bar \calX$, so consider $x\in \calX$. 
First compute $x'$ and $\Xi(x)$.
The encoding contains all the necessary information to do so.
Now find $\pi(x)$ in $\Xi(x)$ by its sequential number in $\Xi(x)$. We showed that $\pi$ is determined by its encoding.

Now we estimate the length of the encoding. Sets $Q_G$ and $Q_H$ are of size 
approximately $qn$. We need 
$O(q n \log (1/q))$ bits to record them, $O(q n \log (qn))$ bits to record their orderings, $O(q n \log n)$ bits to record vectors $\{\varphi(u)\}_{u \in Q_G}$ and $\{\varphi(\pi(x))\}_{x \in Q_H}$ with the desired precision (that follows from the Johnson---Lindenstrauss lemma). We need $|\calX|\bLog (1/(\eta K))$ bits to record $\calX$ (since
the size of $|\calX|$ is approximately $K\eta n$).
We need at most $\log_2 ((n/2)! (n/2 - {\calX})!)$ bits to record the restriction
of $\pi$ to $\bar {\cal X}$. Finally, we need $\log_2 |\Xi(x)| = \log_2 (\eta n) + O(1)$ bits for
each vertex $u \in \cal X$ to record its position in $\Xi(x)$. In total, we need
$$
\log_2 ((n/2)! (n/2 - |{\cal X}|)!)  + |{\cal X}| \log_2 (n/K) + O(qn \log n) 
$$
bits. In contrast, we need at least $\log_2 ((n/2)! (n/2)!)$ bits to encode a ``typical'' permutation in $\Pi_{LR}$ (no matter what encoding scheme we use). That is, the encoding of $\pi$ is {\em{shorter}} than the encoding of a typical permutation by at least
\begin{multline*}
\log_2 ((n/2)!(n/2)!) - \big(\log_2 ((n/2)! (n/2 - |{\cal X}|)!) +{} \\ {}+ |{\cal X}| \log_2 (n/K) +  O(qn \log n) \big) \approx {} \\ \approx
|{\cal X}| (\log_2 n - \log_2 (n/K)) -
O(qn \log n)  = {}\\ |{\cal X}| \log_2 K  - O(q n \log n)    \approx 
K\eta n \bLog K -
O((D_n/d)\, n \log n).
\end{multline*}
The expression is large when $d \gtrsim \log^3 n$. We conclude that a random permutation $\pi\in \Pi_{LR}$ does not satisfy the condition of the
Main Structural Theorem with small probability.

\subsection{Structural Properties --- Definitions}\label{sec:StructPropDefs}

We now describe the Structural Properties that we use in the analysis of the algorithm. We prove that the 
graph $F=G\boxplus_{\pi} H$ satisfies these properties with probability $1-o(1)$ in 
Section~\ref{sec:StructPropProof}. 
We first give several definitions. 

\begin{definition}
Consider an SDP solution $\varphi: V_G\to \bbR^n$. We let $\Ball_{\varphi}(u, \delta)$
be the ball of radius $\delta$ around $u$ in the metric induced in $V_G$ by the embedding
$\varphi$:
$$\Ball_{\varphi}(u, \delta) = \{v\in V_G: \|\varphi(u) - \varphi(v)\|^2 \leq \delta\}.$$
For a subset $B\subset V_G$, we let
\begin{equation}\label{def:M}
M_{\xi} (B)= \sum_{v\in V_G} \min\{|N_F(v)\cap B|, \xi\}.
\end{equation}
\end{definition}

In the proof, we need to count the number of vertices in $F$ having at least $\beta d$ neighbors in the $\Ball_{\varphi}(v, 2\delta)$.
Informally, $M_{\beta d}(\Ball_{\varphi}(v,2\delta))$ is an approximation to this number scaled by $\beta d$.
We now state the Main Structural Property.

\begin{property}[Main Structural Property]
There exists a constant $K>0$ (note that $\alpha$, $\beta$ and $D_n$ depend on $K$;
see Section~\ref{sec:prelim}) such that 
for every feasible SDP solution $\varphi: V_G\to \bbR^n$ and $\eta = 2^{-t}$ ($t\leq T$),
there are at most $K\eta dn$ edges
$(u,v)\in E_F$ satisfying the following conditions:
\begin{enumerate}
\item $(u,v)$ is a $\nicefrac{\delta}{2}$-short edge in $\pi(H)$ i.e.,
$\|\varphi(u)-\varphi(v)\|^2 \leq \nicefrac{\delta}{2}$ and $(u,v)\in \pi E_H$.
\item 
$M_{\beta d}(\Ball_{\varphi}(v, 2\delta))\leq \eta \beta d n$.
\item $\short_{\varphi, \nicefrac{\delta}{2}} (u, G)\geq \max\{\beta d, \deg(u, H)/D_n \}$ i.e., there are at least  $\max\{\beta d, \deg(u, H)/D_n \}$ edges
of length $\nicefrac{\delta}{2}$ leaving $u$ in the graph $G$.
\end{enumerate}
\end{property}

In some sense, this is the main property that we need for the proof of~Theorem~\ref{thm:mainSDPbound}. Roughly speaking,
we show that condition 2 is satisfied if $u$ is not a ``heavy vertex'', and
condition 3 is satisfied if $u$ is not a ``damaged vertex''. Hence,  
after removing short edges, heavy vertices, and damaged vertices, 
we obtain a graph ($F_4(t)$) which does not have more than $K\eta_t nd$ edges from $H$. This
implies~Theorem~\ref{thm:mainSDPbound}. Unfortunately, the Damage Control procedure
does not remove all damaged vertices --- it just controls the number of such vertices.
We need Properties 2--4 to show that the edge boundary of the set of the remaining damaged vertices
is small.

The following property is an analog of the Main Structural Property with graphs $G$ and $H$
switched around. Notice that it has an extra condition (4) on edges $(u,v)$ that
are counted.

\begin{property}
For every feasible SDP solution $\varphi: V\to \bbR^n$ and $\eta = 2^{-t}$ ($t\leq T$)
there are at most $K\eta dn$ edges
$(u,v)\in E_F$ satisfying the following conditions:
\begin{enumerate}
\item $(u,v)$ is a $\nicefrac{\delta}{2}$-short edge in $G$, i.e.,
$\|\varphi(u)-\varphi(v)\|^2 \leq \nicefrac{\delta}{2}$ and $(u,v)\in E_G$.
\item $M_{\beta d}(\Ball_{\varphi}(v,  2\delta))\leq \eta \beta d n$.
\item $\short_{\varphi, \nicefrac{\delta}{2}} (u, H)\geq \beta d$, i.e., there are at least $\beta d$ edges
of length $\nicefrac{\delta}{2}$ leaving $u$ in the graph $H$.
\item $\deg(u, G)\leq \alpha d$.
\end{enumerate}
\end{property}

\medskip

\noindent Let 
\begin{equation}\label{eq:VGleqA}
\VGleqA =  \{u\in V_G: \deg(u,G) \leq \alpha d\}
\end{equation}
be the 
set of vertices in $G$ of degree at most $\alpha d$; and let 
\begin{equation}\label{eq:VHgeqB}
\VHgeqB = \{u\in V_G: \deg(u,H) \geq \beta d\}
\end{equation}
be the set of vertices in $H$ of degree at least $\beta d$. 
As we assumed in Section~\ref{sec:prelim}, $|\VGleqA|\leq n/\alpha$ (otherwise, we use an alternative
simple algorithm).
We now state this assumption as Structural Property 3.

\begin{property}
There are at most $n/\alpha$ vertices of degree less than $\alpha d$ in $F$. In other words, $|\VGleqA|\leq n/\alpha$.
Consequently, there are at most $n/\alpha$ vertices of degree less than $\alpha d$ in $G$.
\end{property}

We use this property in several places, particularly to get a bound on the initial total budget: Since we give a budget of $\beta d$ to vertices 
with $\deg (u,F)\geq \alpha d$, and $\alpha d$ to vertices with $\deg (u,F)\leq \alpha d$, the initial budget allocated to vertices is at most 
$\beta d \times n +  (\alpha-\beta)d\times n/\alpha\leq (\beta d + 1)n$. The initial total budget is bounded by
$$(\beta d + 1)n + 3nd/\delta \leq \nicefrac{3}{2}\,\beta d n.$$

\medskip

Finally, we describe the last structural property. This property is rather technical. Roughly speaking, it says that every vertex $u$
has much more neighbors in  $\VHgeqB \setminus \VGleqA$ than in $\VHgeqB \cap \VGleqA$. This happens because $\VHgeqB$ is the image
of the set $\{x\in V_H:\deg(x, H) \geq \beta d\}$ under $\pi$. Every element in $\{x\in V_H:\deg(x, H) \geq \beta d\}$ is much more
likely to be mapped to $V_G\setminus \VGleqA$ than to $\VGleqA$ just because the set $\VGleqA$ is very small.


\begin{property}
For every vertex $u\in V_F$,
$$\sum_{\substack{v:(u,v)\in \pi E_H\\v\in \VHgeqB \cap \VGleqA}} \frac{\beta d}{\deg(v, H)}\leq 
\frac{8}{\alpha}\sum_{\substack{v: (u,v)\in \pi E_H\\v\in \VHgeqB\setminus \VGleqA}} \frac{\beta d}{\deg(v, H)}+4\log n.$$
\end{property}

We prove that the graph $F$ satisfies these Structural Properties w.h.p in  Section~\ref{sec:StructPropProof}. Now we proceed with the analysis of the algorithm.

\subsection{Long Edges Removal}\label{sec:LongEdgesRemoval}
We say that an edge $(u,v)$ is $\delta$-long with respect to the SDP solution $\varphi_t$ if $\|\varphi_t(u)-\varphi_t(v)\|^2 \geq \delta$.
At Step 2 of the main loop of the algorithm, we cut all $\nicefrac{\delta}{2}$-long edges in the graph $F_1(t)$. For every $\nicefrac{\delta}{2}$-long 
edge $(u,v)$ we cut, we increase the budgets of the endpoints of the edge, vertices $u$ and $v$, by 1 (each) and decrease the extra budget (the
variable $\extraBudget$) by 3. This way the total budget decreases by the number of edges cut at this step.
We need to verify that the extra budget is always 
non-negative. To do so, we bound the total number of $\nicefrac{\delta}{2}$-long edges cut during the execution of the algorithm.

\begin{lemma}\label{lem:numberLongEdgesRemoved}
The total number of $\nicefrac{\delta}{2}$-long edges cut by the algorithm is at most $\delta^{-1}dn$.
\end{lemma}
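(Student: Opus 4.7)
The plan is a straightforward SDP-cost telescoping argument. The core observation is that each $\nicefrac{\delta}{2}$-long edge $e=(u,v)\in\Upsilon_2(t)$ contributes $\|\varphi_t(u)-\varphi_t(v)\|^2\geq\delta/2$ to $\sdpcost(\varphi_t,F_1(t))$, which immediately gives $(\delta/2)|\Upsilon_2(t)|\leq \sdpcost(\varphi_t, F_1(t))$. If we can show that $\Phi_t \equiv \sdpcost(\varphi_t,F_1(t))$ is monotone decreasing with $\Phi_{t+1}\leq \Phi_t - (\delta/2)|\Upsilon_2(t)|$, then telescoping over $t=0,\dots,T-1$ yields $\sum_t (\delta/2)|\Upsilon_2(t)|\leq \Phi_0$.

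For the base case, I would bound $\Phi_0=\sdpcost(\varphi_0,F)$ using the intended integral solution $\varphi^*$ that assigns $e_1/\sqrt 2$ to every vertex of $L$ and $e_2/\sqrt 2$ to every vertex of $R$. This $\varphi^*$ is SDP-feasible for any subgraph of $F$: normalization and the $\ell_2^2$ triangle inequality are trivial, and the spreading sum $\sum_v (1-\|\varphi^*(u)-\varphi^*(v)\|^2)$ equals the size of $u$'s cluster in the subgraph, which is at most $n/2$. Its cost on $F_1(t)$ equals the number of cross-cluster edges in $F_1(t)$; since $G$ has no such edges, these all come from $\pi(E_H)$, so $\Phi_t\leq |E_H|\leq dn/2$ for every $t$. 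In particular $\Phi_0\leq dn/2$, and combined with the telescoping bound this gives $\sum_t |\Upsilon_2(t)|\leq dn/\delta$ as required.

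The chain establishing monotonicity of $\Phi_t$ breaks into three parts: (a) $\sdpcost(\varphi_t,F_2(t)) = \Phi_t - \sum_{e\in\Upsilon_2(t)}\|\varphi_t(u_e)-\varphi_t(v_e)\|^2 \leq \Phi_t - (\delta/2)|\Upsilon_2(t)|$; (b) $\sdpcost(\varphi_t,F_1(t{+}1))\leq \sdpcost(\varphi_t,F_2(t))$, because $F_1(t{+}1)=F_4(t)\subseteq F_2(t)$ and every remaining term in the sum is non-negative; (c) $\Phi_{t+1}=\sdpcost(\varphi_{t+1},F_1(t{+}1))\leq \sdpcost(\varphi_t,F_1(t{+}1))$ by optimality of $\varphi_{t+1}$, provided $\varphi_t$ restricted to $V_{F_1(t+1)}$ is feasible for the SDP on $F_1(t{+}1)$.

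The main obstacle is verifying step (c)'s feasibility, and specifically the spreading constraint $\sum_{v\in V_{F_1(t+1)}}(1-\|\varphi_t(u)-\varphi_t(v)\|^2)\leq n/2$. Normalization and the triangle inequality are per-vertex and per-triple constraints that transfer automatically; however, the spreading sum over $V_{F_1(t+1)}$ differs from the sum over $V_{F_1(t)}$ by the removed terms, and an individual removed term $(1-\|\varphi_t(u)-\varphi_t(w)\|^2)$ can a priori be negative (its removal would then increase the left-hand side). The delicate part of the proof will be to show that such "bad" terms cannot push the sum above $n/2$ — either by exploiting the fact that the RHS is the fixed constant $n/2$ (not $|V_{F_1(t+1)}|/2$) so there is slack built in, or by arguing that the vertices removed at Steps 3 and 4 (heavy centers and damaged vertices) are close to the surviving vertices so their contributions to the spreading sum are non-negative. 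If this feasibility step is genuinely problematic, an alternative route is to substitute the intended solution $\varphi^*$ (which is always feasible) in step (c), at the cost of a slightly weaker telescoping inequality that still suffices after absorbing constants.
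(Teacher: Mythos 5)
Your proposal is essentially the paper's proof: the same per-edge contribution of at least $\nicefrac{\delta}{2}$ to $\sdpcost(\varphi_t,F_1(t))$, the same monotonicity chain through $\sdpcost(\varphi_t,F_2(t))$ and the restriction of $\varphi_t$ to $V_{F_1(t+1)}$, the same telescoping over $t$, and the same base bound $\sdpcost(\varphi_0,F)\le |E_H|\le dn/2$ via the intended solution. The one step you leave open --- feasibility of the restricted solution, specifically the spreading constraint~(\ref{SDP:spreading}) --- is precisely the step the paper disposes of with ``Observe that $\varphi_t$ restricted to $V_{F_1(t+1)}$ is a feasible (but possibly suboptimal) solution.'' Your concern is legitimate for the SDP exactly as written: a term $1-\|\varphi_t(u)-\varphi_t(v)\|^2=2\langle\varphi_t(u),\varphi_t(v)\rangle$ can be negative, so deleting vertices can increase the left-hand side while the right-hand side stays $n/2$; and neither of your two suggested resolutions works as stated (the fixed right-hand side is not ``slack,'' and vertices removed at Steps 3--4 need not be close to an arbitrary surviving $u$). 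The clean repair is to add the valid constraint $\|\varphi(u)-\varphi(v)\|^2\le 1$ (equivalently $\langle\varphi(u),\varphi(v)\rangle\ge 0$), which the intended solution and every solution constructed in the paper satisfy; then every dropped term is nonnegative and restriction is trivially feasible. Finally, note that your fallback --- replacing step (c) by a comparison with the intended solution --- does \emph{not} ``suffice after absorbing constants'': it destroys the telescoping and only gives $|\Upsilon_2(t)|\le dn/\delta$ per iteration, hence a total of $O(T\,dn/\delta)=O(dn\log\log n/\delta)$ long edges, which overruns the extra budget of $3nd/\delta$ and would weaken the final guarantee by a $\log\log n$ factor rather than a constant.
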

\begin{proof}
At iteration $t$ the algorithm cuts a set $\Upsilon_2(t)$ of $\nicefrac{\delta}{2}$-long edges. Each edge contributes at least $\nicefrac{\delta}{2}$
to $\sdpcost(\varphi_t, F_1(t))$. Once we cut edges in the set $\Upsilon_2(t)$ the SDP value 
decreases by at least $\nicefrac{\delta}{2} |\Upsilon_2(t)|$, i.e. 
$\sdpcost(\varphi_t, F_2(t)) \leq \sdpcost(\varphi_t, F_1(t)) - \nicefrac{\delta}{2} |\Upsilon_2(t)|$.
Observe that $\varphi_t$ restricted to $V_{F_1(t+1)}$ is a feasible (but possibly suboptimal) solution for the graph $F_1(t+1)$. Hence,
\begin{multline*}
\sdpcost(\varphi_{t+1}, F_1(t+1))\leq \sdpcost(\varphi_t, F_1(t+1))\leq\\\leq \sdpcost(\varphi_t, F_2(t))\leq 
\sdpcost(\varphi_{t}, F_1(t)) - \nicefrac{\delta}{2}\cdot|\Upsilon_2(t)|.
\end{multline*}
Thus, $|\Upsilon_2(t)|\leq \nicefrac{2}{\delta}\cdot(\sdpcost(\varphi_{t}, F_1(t))- \sdpcost(\varphi_{t+1}, F_1(t+1)))$, and  
\begin{align*}
\sum_{t=0}^{T-1} |\Upsilon_2(t)| &\leq \nicefrac{2}{\delta}\cdot \sum_{t=0}^{T-1} \sdpcost(\varphi_{t}, F_1(t))-  \sdpcost(\varphi_{t+1}, F_1(t+1))\\
&\leq \nicefrac{2}{\delta}\cdot \sdpcost(\varphi_{t}, F(t))\leq \delta^{-1} dn,
\end{align*}
since the cost of the optimal bisection in graph $F(t)$ is at most $|E_H|\leq dn/2$, and hence 
$\sdpcost(\varphi_{t}, F(t))\leq dn/2$.
\end{proof}

As a corollary we get the following lemma.
\begin{lemma}\label{lem:LongEdgesBudget}
I. Denote by $b_{before}$ the total budget before removing $\nicefrac{\delta}{2}$-long edges; denote by $b_{after}$ the total budget
after removing $\nicefrac{\delta}{2}$-long edges. Then, 
$$b_{after} \leq b_{before} - |\Upsilon_2(t)|.$$

II. The total budget is always non-negative.
\end{lemma}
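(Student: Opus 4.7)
\textbf{Part I.} My plan is to argue this directly from the bookkeeping rules defined in the ``Budget updates'' paragraph. For each $\nicefrac{\delta}{2}$-long edge $(u,v)$ cut at Step 2 of iteration $t$, the algorithm performs exactly three modifications to the budget: it increases $\budget(u)$ by $1$, it increases $\budget(v)$ by $1$, and it decreases $\extraBudget$ by $3$. Both endpoints remain active after Step 2 (this step only cuts edges; no vertex is removed), so both of the unit increases are counted in the sum $\sum_{u\text{ active}}\budget(u)$ appearing in $\totalBudget$. Consequently, cutting a single long edge produces a net change of $+1+1-3=-1$ in the total budget. Summing over all $|\Upsilon_2(t)|$ edges cut at Step 2 of iteration $t$ gives $b_{after} = b_{before} - |\Upsilon_2(t)|$, which is what is claimed.

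\textbf{Part II.} I split $\totalBudget$ into its two summands and show each one stays non-negative throughout the execution. The vertex budgets $\budget(u)$ are initialized to either $\alpha d$ or $\beta d$, which are positive, and the update rules across Steps 2, 3, 4 only ever \emph{increase} $\budget(u)$ by $1$ (whenever a cut edge is incident to an active endpoint $u$). Therefore $\budget(u)\geq 0$ for every active $u$ at every moment. For the extra budget, note that $\extraBudget$ is initialized to $3nd/\delta$ and is modified in exactly one place in the algorithm: Step 2 decreases $\extraBudget$ by $3$ for each long edge cut. Thus at any point in the execution,
\[
\extraBudget \;=\; \frac{3nd}{\delta} \;-\; 3\,\Bigl(\text{total number of $\nicefrac{\delta}{2}$-long edges cut so far}\Bigr).
\]
By Lemma~\ref{lem:numberLongEdgesRemoved}, the total number of long edges cut during the entire execution is at most $\delta^{-1}dn$, so the total decrease in $\extraBudget$ is at most $3\delta^{-1}dn$, leaving $\extraBudget\geq 0$ at every step. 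Adding the two non-negative contributions, $\totalBudget\geq 0$ always.

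The only non-routine ingredient here is the quantitative bound on the number of $\nicefrac{\delta}{2}$-long edges ever cut; but that is precisely Lemma~\ref{lem:numberLongEdgesRemoved}, which has already been established by the telescoping-SDP-cost argument. Once that bound is in hand, both parts of the lemma reduce to unwinding the definitions of $\totalBudget$ and of the budget-update rules, so I do not anticipate any genuine obstacle.
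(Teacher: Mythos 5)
Your proposal is correct and follows essentially the same route as the paper: Part I is the same per-edge bookkeeping ($+1+1-3=-1$ per cut long edge), and Part II combines the monotonicity of individual vertex budgets with the bound from Lemma~\ref{lem:numberLongEdgesRemoved} on the total number of $\nicefrac{\delta}{2}$-long edges cut, exactly as in the paper's argument.
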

\begin{proof}
I. Whenever we cut a long edge we increase the budgets of the endpoints by 1 and decrease the extra budget by 3.

II. We never decrease budgets of individual vertices, so their budgets remain positive all the time (note: the total budget of all active vertices may decrease,
because the set of active vertices may decrease). By Lemma~\ref{lem:numberLongEdgesRemoved}, the number of long edges cut is at most $\delta^{-1} dn$, hence
the extra budget may decrease by at most $3\delta^{-1} dn$ (the algorithm uses the extra budget only to pay for cut long edges).
Hence the extra budget is always non-negative.
\end{proof}

\subsection{Heavy Vertices Removal}\label{sec:HeavyVerticesRemoval}
We say that a vertex $u\in V_{F_2(t)}$ is $\eta_t$-heavy if the vertices in the 
ball of radius $3\delta$ around $u$ have budget at least $\beta \eta_t nd$:
$$\budget (\{v: \|\varphi_t(u)-\varphi_t(v)\|^2 \leq 3\delta\})\geq \beta\eta_t nd.$$

The Heavy Vertices Removal procedure sequentially picks vertices $u$ in $V_{F_2(t)}$. If $u$ 
is active (i.e., it was not removed at the current step, or previous steps) and it is an $\eta_t$-heavy
vertex, then we find the radius $r\in [3\delta, 4\delta]$ that minimizes the edge boundary 
$\partial B_u$ of the ball 
$$B_u=\{v \text{ is active}: \|\varphi_t(u)-\varphi_t(v)\|^2 \leq r\}.$$
We remove the set $B_u$ from the graph. Thus, the Heavy Vertices Removal
Step removes a collection of balls $B_u$. The set of cut edges $\Upsilon_3(t)$ is the union
of the corresponding $\partial B_u$.

We need to prove that the procedure satisfies the invariant of the loop: the total budget
decreases by at least $|\Upsilon_3(t)|$. The Heavy Vertices Removal
procedure may remove several components from the graph $F_3(t)$. We verify 
the invariant for each of them independently. 

\begin{lemma}\label{lem:HeavyVerticesBudget}
Consider one of the removed components $B_u$. Let $\partial B_u$ be the edge boundary of the set $B_u$. 

I. Denote by $b_{before}$ the total budget before removing the set $B_u$; and denote by $b_{after}$ the total budget
after removing the set $B_u$. Then, 
$$b_{after} \leq b_{before} - |\partial B_u|.$$

II. The size of the set $B_u$ is at most $3n/4$.
\end{lemma}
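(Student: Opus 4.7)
The plan is to handle both parts via standard SDP tools: Part~I via a budget identity plus a region-growing argument, and Part~II via the SDP spreading constraint.

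For Part~I, start with the budget identity. Removing $B_u$ causes the sum of budgets over active vertices to drop by $\budget(B_u)$, while the Step~3 update rule (``add $1$ to the active endpoint of every cut edge'') adds back $|\partial B_u|$; the extra-budget variable is not touched at Step~3. Hence
$$b_{\mathrm{after}} - b_{\mathrm{before}} \;=\; -\budget(B_u) + |\partial B_u|,$$
and the claim reduces to showing $\budget(B_u) \geq 2|\partial B_u|$. The lower bound is immediate from the $\eta_t$-heaviness of $u$: by definition $\budget(\Ball_{\varphi_t}(u, 3\delta)) \geq \beta \eta_t nd = 200K\eta_t nd$, and $B_u \supseteq \Ball_{\varphi_t}(u, 3\delta)$ since $r \geq 3\delta$. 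For the upper bound on $|\partial B_u|$, I would run the standard region-growing argument for $r \in [3\delta, 4\delta]$: by the SDP triangle inequality, each edge $(x,y) \in E_{F_2(t)}$ lies on the boundary $\partial B_u(r)$ only for $r$ in an interval of length $\leq \|\varphi_t(x)-\varphi_t(y)\|^2$, so
$$\int_{3\delta}^{4\delta} |\partial B_u(r)|\, dr \;\leq\; \sdpcost(\varphi_t, F_2(t)) \;\leq\; \sdpcost(\varphi_t, F_1(t)) \;\leq\; 8K\eta_t nd$$
by Theorem~\ref{thm:mainSDPbound}. The minimizing $r$ yields $|\partial B_u| \leq 8K\eta_t nd/\delta = 96K\eta_t nd$, and the parameter choice $\beta = 200K$ gives $2|\partial B_u| \leq 192K\eta_t nd < \beta\eta_t nd \leq \budget(B_u)$, as required. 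One subtlety: the procedure removes several heavy balls sequentially within Step~3, so when we process $u$ the graph is in general a strict subgraph of $F_2(t)$; since $\sdpcost(\varphi_t, \cdot)$ is monotone under edge removal, the same bound carries over.

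For Part~II, I would use the SDP spreading constraint~(\ref{SDP:spreading}) at the center $u$. Every $v \in B_u$ satisfies $\|\varphi_t(u)-\varphi_t(v)\|^2 \leq 4\delta = 1/3$ and therefore contributes at least $2/3$ to the left-hand side of~(\ref{SDP:spreading}); every $v \notin B_u$ contributes at least $-1$, since the normalization $\|\varphi_t(\cdot)\|^2 = 1/2$ forces squared distances to be at most $2$. Rearranging
$$\tfrac{2}{3}\,|B_u| \;-\; \bigl(|V_{F_2(t)}| - |B_u|\bigr) \;\leq\; n/2$$
yields the desired upper bound on $|B_u|$ in terms of $n$.

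The main obstacle I anticipate is pinning down the precise constant $3/4$ in Part~II: the naive ``$\geq -1$'' bound on the outside contributions is lossy, and by itself only yields $|B_u| \leq \tfrac{9}{10}\,n$. I expect the fix is to sharpen the outside contributions using the SDP triangle inequality anchored at an interior point $v^* \in B_u$ together with the spreading constraint at $v^*$: the triangle inequality translates squared distances from $u$ into squared distances from $v^*$, and the spreading at $v^*$ then forces ``most'' outside inner products $\langle \varphi_t(v^*), \varphi_t(w)\rangle$ to be non-negative, capping the relevant squared distances at $1$ and killing the negative terms that drive the loss.
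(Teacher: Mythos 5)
Your Part~I is correct and is essentially the paper's argument: the budget identity $b_{after}=b_{before}-\budget(B_u)+|\partial B_u|$, the lower bound $\budget(B_u)\geq \beta\eta_t nd$ from $\eta_t$-heaviness and $B_u\supseteq\Ball_{\varphi_t}(u,3\delta)$, and the ball-growing bound $|\partial B_u|\leq \sdpcost(\varphi_t,F_1(t))/\delta\leq 8K\eta_t nd/\delta=96K\eta_t nd<\nicefrac{\beta}{2}\,\eta_t nd$ via Theorem~\ref{thm:mainSDPbound} are exactly the steps in the paper, with the same constants.

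Part~II, however, has a genuine gap: you do not prove the stated bound $|B_u|\leq \nicefrac{3}{4}\,n$. The paper's proof keeps only the in-ball terms of the spreading constraint~(\ref{SDP:spreading}) at $u$, i.e.\ it uses that each summand $1-\|\varphi_t(u)-\varphi_t(v)\|^2=2\langle\varphi_t(u),\varphi_t(v)\rangle$ is treated as nonnegative, so that $(1-4\delta)|B_u|\leq n/2$ and, with $1-4\delta=2/3$, $|B_u|\leq \nicefrac{3}{4}\,n$ follows in one line. Your version instead lower-bounds the outside contributions by $-1$, which only yields $\nicefrac{9}{10}\,n$, and your proposed repair does not close the distance to $\nicefrac{3}{4}$: the spreading constraint at an interior point $v^*$ is an upper bound on the \emph{sum} $\sum_w 2\langle\varphi_t(v^*),\varphi_t(w)\rangle$ and cannot force individual inner products with outside vertices to be nonnegative, and routing the $\ell_2^2$-triangle inequality through $v^*$ merely replaces $\|\varphi_t(u)-\varphi_t(w)\|^2$ by $\|\varphi_t(v^*)-\varphi_t(w)\|^2+4\delta$, where the first term can itself be as large as $2$, so nothing is gained. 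In short, the missing ingredient is the nonnegativity of $\langle\varphi_t(u),\varphi_t(v)\rangle$ (equivalently, squared distances at most $1$), which the paper's proof invokes when it drops the terms outside $\Ball_{\varphi_t}(u,4\delta)$; without supplying that, your argument proves a weaker constant than the one claimed in the lemma.
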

\begin{proof}

I. The set $B_u$ contains the ball of radius $3\delta$ around $u$. 
The budget of vertices in this ball is at least $\beta \eta_t nd$,
because $u$ is a heavy vertex. Hence, the budget of $B_u$ is
also at least $\beta \eta_t nd$. After we remove the set $B_u$, the
vertices in $B_u$ are no longer active, so we decrease the 
total budget by at least $\beta \eta_t nd$. 

We now need to bound the size of the
edge boundary $\partial B$. To do so, we use the bound on the cost of the SDP solution.
By Theorem~\ref{thm:mainSDPbound}, 
$$\sdpcost(\varphi_t, F_3(t))\leq \sdpcost(\varphi_t, F_1(t))\leq 8 K\eta_t dn.$$
Since we pick the radius $r$ in the range $[3\delta, 4\delta]$, we get 
by the standard ball growing argument, that the size of the edge boundary 
$|\partial B|$ is at most $8 K\eta_t dn/\delta\leq \nicefrac{\beta}{2}\cdot \eta_t dn$.

After removing the set $B_u$, the total budget decreases by 
$$\budget (B_u)-|\partial B_u|\geq 
\beta \eta_t nd - \nicefrac{\beta}{2}\cdot \eta_t dn
= \nicefrac{\beta}{2}\cdot \eta_t dn\geq |\partial B_u|.$$
Above, we subtract $|\partial B_u|$ from $\budget (B_u)$, because for every cut edge $(v',v'')\in \partial B_u$, $v'\in B_u$, $v''\notin B_u$,
the algorithm increased the budget of $v''$ by 1.

II. We upper bound the size of the $\Ball_{\varphi_t}(u, 4\delta)$ containing the 
set $B_u$.  We apply the SDP spreading constraint~(\ref{SDP:spreading}) for vertex $u$:
$$\sum_{v\in \Ball_{\varphi_t}(u, 4\delta)} (1 - 4\delta) \leq 
\sum_{v\in V_{F_2(t)}} (1 - \|\varphi_t(u) - \varphi_t(v)\|^2) \leq \frac{n}{2}.$$
Using that $\delta = 1/12$ and $(1 - 4\delta) = 2/3$, we get the bound 
$$|\Ball_{\varphi_t}(u, 4\delta)|\leq \nicefrac{3}{4}\; n.$$
\end{proof}

\subsection{Damage Control}\label{sec:DamageControl}
The Damage Control procedure removes components with a small edge boundary and large budget. 
We find a set of vertices $Y\subset V_{F_2(t)}$ to maximize
\begin{equation}\label{eq:flowproblem}
\Delta (Y) \equiv \budget(Y) - 2 | E_{F_3(t)}(Y,\bar Y) |  - 2\beta d |Y|.
\end{equation}
To find the set $Y$ we solve a maximum flow problem on the graph $F_3(t)$ with 
two extra vertices -- the source and the sink. We connect every vertex $u$ in $F_3(t)$
to the source with an edge of capacity $\budget(u)$ and to the sink 
with an edge of capacity $2\beta d$. We set the capacity of every edge in $F_3(t)$ 
to 2. Then we find the minimum cut between the source and the sink. The
set $Y$ is the set of vertices lying in the same part of the cut as the source. It is easy 
to check that $Y$ minimizes~(\ref{eq:flowproblem}). We give the details in 
Appendix~\ref{appendix:damage-control:min-cut}.

If $\Delta(Y) > 0$ we remove the set $Y$ from the graph $F_3(t)$. We denote the edge boundary 
of $Y$ by $\Upsilon_4(t)$; we denote the obtained graph by $F_4(t) = F_3(t) - Y$. Observe
that when we remove the set $Y$ we cut only edges in $\Upsilon_4(t)$. For every 
edge $(u,v)\in \Upsilon_4(t)$, we increase the budget of the endpoint $u$
that we do not remove (i.e.,  $u\notin Y$) by 1.
If $\Delta(Y) < 0$, then we do nothing: We let $F_4(t)=F_3(t)$ and $\Upsilon_4(t)=\varnothing$.

We need to show that for every edge removed from $F_3(t)$ the Damage Control procedure
decreases the total budget by at least 1, and that the size of the set $Y$ is at most $3n/4$. 

\begin{lemma}\label{lem:DamageControlBudget}
I. Let $b_{before}$ be the total budget before applying the Damage Control procedure at step $t$; and 
let $b_{after}$ be the total budget after applying the Damage Control procedure at step $t$. Then, 
$$b_{after} \leq b_{before} - |\Upsilon_4|.$$

II. The size of the set $Y$ is at most $3n/4$.
\end{lemma}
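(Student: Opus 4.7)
\textbf{Plan for the proof of Lemma~\ref{lem:DamageControlBudget}.} The argument is essentially bookkeeping against the definition of $\Delta(Y)$, combined with the global bound on the total budget. I will treat the two parts in that order, and I do not expect either to be the main technical obstacle --- the real work has already been done in Lemmas~\ref{lem:LongEdgesBudget} and~\ref{lem:HeavyVerticesBudget}, which together guarantee that $\totalBudget$ never exceeds its initial value $\tfrac{3}{2}\beta dn$.

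\textbf{Part I.} The only two effects of Step~4 on $\totalBudget$ are (a) the loss of $\budget(Y)$ when the vertices of $Y$ become inactive, and (b) the gain of exactly $|\Upsilon_4(t)|$, since for each edge $(u,v)\in \Upsilon_4(t)=E_{F_3(t)}(Y,\bar Y)$ the unique active endpoint (the one outside $Y$) has its budget incremented by $1$. The extra budget is untouched. Hence
\[
b_{after}-b_{before}=-\budget(Y)+|\Upsilon_4(t)|.
\]
If $\Delta(Y)\le 0$, then the algorithm sets $Y=\varnothing$ and $\Upsilon_4(t)=\varnothing$, so the inequality is trivial. Otherwise $\Delta(Y)>0$, which by~(\ref{eq:flowproblem}) gives
\[
\budget(Y)>2|\Upsilon_4(t)|+2\beta d|Y|\ \ge\ 2|\Upsilon_4(t)|.
\]
Substituting this into the displayed identity yields $b_{after}\le b_{before}-|\Upsilon_4(t)|$, as required.

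\textbf{Part II.} If $Y=\varnothing$ there is nothing to prove, so assume $\Delta(Y)>0$. Dropping the non-negative term $2|\Upsilon_4(t)|$ in~(\ref{eq:flowproblem}) gives
\[
2\beta d|Y|\ \le\ \budget(Y).
\]
Since the vertices of $Y$ are active and $\extraBudget\ge 0$ (by Lemma~\ref{lem:LongEdgesBudget}.II), we have $\budget(Y)\le \sum_{u\text{ active}}\budget(u)\le \totalBudget$. By Lemmas~\ref{lem:LongEdgesBudget} and~\ref{lem:HeavyVerticesBudget} the total budget is non-increasing throughout Steps~2--3 of every iteration (and unchanged at Steps~1 and~5), so at the time Damage Control is run we still have $\totalBudget\le \tfrac{3}{2}\beta dn$, the initial value guaranteed by Structural Property~3. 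Combining,
\[
|Y|\ \le\ \frac{\budget(Y)}{2\beta d}\ \le\ \frac{1}{2\beta d}\cdot\frac{3}{2}\beta dn\ =\ \frac{3}{4}n.
\]

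\textbf{Potential obstacle.} The only non-routine point is avoiding circularity in the budget bound used in Part~II: Lemma~\ref{lem:mainBudget} (of which the present lemma is one case) should not be invoked for Step~4 of the current iteration. This is handled because the total budget entering Step~4 is controlled purely by the already-established Lemmas~\ref{lem:LongEdgesBudget} and~\ref{lem:HeavyVerticesBudget} applied at earlier steps and earlier iterations, together with the initial budget bound from Structural Property~3. Everything else is direct algebraic rearrangement of the definition of $\Delta(Y)$.
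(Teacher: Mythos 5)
Your proof is correct and follows essentially the same route as the paper: Part I is the same bookkeeping against $\Delta(Y)>0$ from~(\ref{eq:flowproblem}), and Part II is the same chain $2\beta d|Y|\le\budget(Y)\le\totalBudget\le\nicefrac{3}{2}\,\beta dn$ (the paper simply cites Lemma~\ref{lem:mainBudget} for the monotonicity of the total budget). One small touch-up to your circularity remark: the non-increase at Step~4 of \emph{earlier} iterations is not covered by Lemmas~\ref{lem:LongEdgesBudget} and~\ref{lem:HeavyVerticesBudget} but by Part~I of the present lemma applied at those iterations, which is fine since Part~I does not use Part~II, so a straightforward induction over iterations closes the argument.
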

\begin{proof}
If $\Delta(Y)\leq 0$, then the Damage Control procedure does not do anything and thus the statements I and II
are trivial, so we assume $\Delta(Y) \geq 0$.

I. The procedure decreases the total budget by $\budget(Y) - |\Upsilon_4(t)|$: it removes the set $Y$, which
decreases the total budget by $\budget(Y)$; however, for every removed edge $(u,v)\in E_{F_3(t)}(Y,\bar Y)$, $u\in Y$, $v\notin Y$, 
it increases the budget of $v$ by 1, which increase the total budget by  $E_{F_3(t)}(Y,\bar Y)$. Since $\Delta(Y)\geq 0$, we have 
$$\text{``the change in the budget''}=\budget(Y) - |E_{F_3(t)}(Y,\bar Y)|\geq |E_{F_3(t)}(Y,\bar Y)|\equiv|\Upsilon_3(t)|.$$

II. Since $\Delta(Y)\geq 0$, we have $\budget(Y) \geq 2\beta d |Y|$. The budget of the set $Y$ is at most the total 
budget. Initially, the total budget is at most $\nicefrac{3}{2}\,\beta dn$, and during the execution of the algorithm it may 
only decrease (by Lemma~\ref{lem:mainBudget}), so $\budget(Y) \leq \nicefrac{3}{2}\,\beta dn$. Hence, $|Y|\leq \nicefrac{3}{4}\,n$.
\end{proof}

We have established that Step 4 of the algorithm does not violate the invariants of the loop.
We now show that after applying the Damage Control procedure, the boundary of every set $Y'\subset V_{F_4(t)}$
is not too large.

\begin{lemma}\label{lem:DC-budgetBound}
After Step 4, for every $Y'\subset V_{F_4(t)}$,
\begin{equation}\label{eq:Yprime}
\budget(Y') \leq 2 |E_{F_4(t)}(Y',\bar Y')| + 2\beta d |Y'|.
\end{equation}
\end{lemma}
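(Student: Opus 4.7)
The plan is to compare the ``candidate'' set $Y\cup Y'$ to the actual set $Y$ chosen at Step~4, and then exploit the maximality of $Y$ with respect to the functional $\Delta(\cdot)$.

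First I would handle the trivial case $\Delta(Y)\le 0$: then $F_4(t)=F_3(t)$, no edges are cut and no budgets change, so for any $Y'\subset V_{F_4(t)}$ we have $\Delta(Y')\le \Delta(Y)\le 0$, which is exactly the inequality~\eqref{eq:Yprime}. So assume from now on that $\Delta(Y)>0$ and $Y$ is actually removed.

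Fix $Y'\subset V_{F_4(t)}=V_{F_3(t)}\setminus Y$, so that $Y$ and $Y'$ are disjoint. Introduce the bookkeeping quantities $A=|E_{F_4(t)}(Y',\bar{Y'})|$, $B=|E_{F_3(t)}(Y,Y')|\geq 0$, and $C=|E_{F_3(t)}(Y,\bar Y)|=|\Upsilon_4(t)|$. Write $\budget^{(3)}$ and $\budget^{(4)}$ for the budget function right after Steps~3 and~4, respectively. The rules for updating budgets at Step~4 give $\budget^{(4)}(Y')=\budget^{(3)}(Y')+B$, since exactly the $B$ cut edges in $E_{F_3(t)}(Y,Y')$ raise the budget of their endpoints in $Y'$ by one. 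The edge boundary decomposes as $|E_{F_3(t)}(Y\cup Y',\overline{Y\cup Y'})|=C-B+A$, since edges from $Y$ to $Y'$ are no longer boundary edges of $Y\cup Y'$, while the remaining edges of $\partial Y'$ in $F_4(t)$ are the same as in $F_3(t)$ restricted to $\bar Y$.

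Plugging these into the definition of $\Delta$ and evaluating at $Y\cup Y'$ (where budgets are still the ones from Step~3, since $\Delta$ is defined on $F_3(t)$), I get
\[
\Delta(Y\cup Y')=\Delta(Y)+\budget^{(3)}(Y')+2B-2A-2\beta d\,|Y'|=\Delta(Y)+\budget^{(4)}(Y')+B-2A-2\beta d\,|Y'|.
\]
Now I invoke the key fact that $Y$ maximizes $\Delta$ over subsets of $V_{F_3(t)}$ (this is precisely what the min-cut formulation in Appendix~\ref{appendix:damage-control:min-cut} gives). Hence $\Delta(Y\cup Y')\le \Delta(Y)$, which rearranges to
\[
\budget^{(4)}(Y')\le 2A-B+2\beta d\,|Y'|\le 2A+2\beta d\,|Y'|=2|E_{F_4(t)}(Y',\bar{Y'})|+2\beta d\,|Y'|,
\]
using $B\ge 0$. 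This is inequality~\eqref{eq:Yprime}.

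The only mildly delicate point is the bookkeeping of the budget change at Step~4 --- one has to remember that the budget of a vertex $v\in Y'$ goes up by exactly the number of cut edges between $v$ and $Y$, and that this is what makes the $+2B$ from the boundary decomposition cancel with $-B$ from the budget shift, leaving a net $+B\ge 0$ that can be safely dropped. Everything else is just the maximality of $Y$.
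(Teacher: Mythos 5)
Your proof is correct and follows essentially the same route as the paper: both arguments rest on the maximality of $Y$ for $\Delta(\cdot)$ on $F_3(t)$ and the same bookkeeping that cutting $E_{F_3(t)}(Y,Y')$ raises $\budget(Y')$ by exactly $|E_{F_3(t)}(Y,Y')|$ while shrinking the boundary of $Y'$ by the same amount. The only difference is presentational --- you compute $\Delta(Y\cup Y')-\Delta(Y)$ directly, whereas the paper argues by contradiction via $\Delta(Y')>0$ and the superadditivity $\Delta(Y\cup Y')\geq \Delta(Y)+\Delta(Y')$.
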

\begin{proof}
Suppose that at Step 4, the algorithm removed a set $Y$ of vertices, and a set $\Upsilon_4(t)$ of edges from $F_3(t)$ (then
$\Upsilon_4(t)$ is the edge boundary of $Y$). Note that the sets $Y$ and $\Upsilon_4(t)$ can possibly be empty. Assume to the contrary that for some set $Y'$ the inequality~(\ref{eq:Yprime}) is violated. We argue that in this case, $\Delta(Y\cup Y')$ would be greater than $\Delta(Y)$ and hence
the Damage Control procedure would remove the set $(Y\cup Y')$ instead of $Y$ from $F_3(t)$. This easily follows from the following observation: The Damage Control
procedure has increased the budget of $Y'$ by the size of the edge boundary between $Y$ and $Y'$ i.e., by $|E_{F_3(t)}(Y,Y')|$. The edge boundary of $Y'$
has decreased by $|E_{F_3(t)}(Y,Y')|$. Hence, before applying the Damage Control procedure, we had
$$
(\budget(Y') + |E_{F_3(t)}(Y,Y')|) > (2 | E_{F_3(t)}(Y',\bar Y') | + 2 |E_{F_3(t)}(Y,Y')|) - 2\beta d |Y'|.
$$
Thus, $\Delta(Y') > 0$, and $\Delta(Y\cup Y')\geq \Delta(Y) + \Delta(Y') > \Delta(Y)$. We get a contradiction with the assumption
that inequality~(\ref{eq:Yprime}) is violated.
\end{proof}

\section{Bounding the Cost of the SDP}\label{sec:costSDP}
In this section, we upper bound the cost of the SDP solution $\sdpcost(\varphi_t, F_1(t))$. We have a trivial upper bound of $OPT\leq dn/2$
for $t = 0$ (as $F_1(0)=F$), so we consider the case $t>0$. In fact, we upper bound $\sdpcost(\varphi, F_4(t))$ for 
the optimal $\varphi$, which equals $\sdpcost(\varphi_{t+1}, F_1(t+1))$. To this end, we show in Lemma~\ref{lem:3part} that 
$F_4(t)$ can be partitioned into 3 balanced pieces with
edge boundary at most $4K \eta_t nd$. As we see in Section~\ref{sec:mainSDPbound}, this immediately 
gives us an upper bound on the cost of the SDP solution: $\sdpcost(\varphi, F_4(t))\leq 4 K\eta_t nd$,
and $\sdpcost(\varphi_t, F_1(t+1)) \leq 8 K\eta_{t+1} nd$.

\subsection{Partitioning into Three Balanced Sets} \label{sec:three-sets}
Define a new feasible SDP solution $\varphi'_t$ for $F$ i.e., a mapping of $V\to \bbR^n$
satisfying the SDP constraints. Let
\begin{equation}\label{eq:phiPrime}
\varphi'_t(u)=
\begin{cases}
\varphi_t(u), \text{if } u\in V_{F_3(t)};\\
e_u, \text{otherwise};
\end{cases}
\end{equation}
where $e_u$ is a vector of length $\nicefrac{\sqrt{2}}{2}$ orthogonal to all other vectors in the SDP solution (including other $e_v$'s).
This SDP solution coincides with $\varphi_t$ on the set of active vertices. Note that
all edges in $F_4(t)$ are $\nicefrac{\delta}{2}$-short w.r.t. $\varphi'_t$, and any edge connecting an active
vertex and a removed vertex has length 1.

We now show that all active vertices $u\in V_{F_4(t)}$ satisfy the second condition on edges $(u,v)$ in Property~1 and in Property~2
for the SDP solution $\varphi'_t$.

\begin{lemma}\label{lem:prop2isSatisfied}
For all vertices $u\in V_{F_4(t)}$, 
$M_{\beta d}(\Ball_{\varphi'_t}(u, 2\delta))\leq \eta_t \beta d n$, where $\varphi'_t$ is defined in (\ref{eq:phiPrime}).
\end{lemma}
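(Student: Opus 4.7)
The plan is to charge $M_{\beta d}(\Ball_{\varphi'_t}(u,2\delta))$ to the total budget of a slightly larger ball around $u$ and then invoke the fact that $u$ survived the Heavy Vertices Removal procedure. Write $B = \Ball_{\varphi'_t}(u,2\delta)$ and $B' = \Ball_{\varphi_t}(u,3\delta)$. I first note that $B\subseteq V_{F_3(t)}$: by the definition of $\varphi'_t$ in~(\ref{eq:phiPrime}), any $v\notin V_{F_3(t)}$ is mapped to an orthogonal vector $e_v$, hence $\|\varphi'_t(u)-\varphi'_t(v)\|^2 = \|\varphi_t(u)\|^2+\|e_v\|^2 = 1 > 2\delta$. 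Since $2\delta<3\delta$, we also have $B\subseteq B'$.

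For each vertex $v\in V_G$, split its neighbors in $B$ into \emph{live} neighbors (via edges still in $E_{F_3(t)}$) and \emph{cut} neighbors: $|N_F(v)\cap B| = L(v) + K(v)$ with $L(v) = |N_{F_3(t)}(v)\cap B|$. The elementary inequality $\min(L+K,\beta d)\leq \min(L,\beta d) + K$ lets me handle the two parts separately. For the live part: if $L(v)\geq 1$, there is $w\in B$ with $(v,w)\in E_{F_3(t)}\subseteq E_{F_2(t)}$, so the edge is $\nicefrac{\delta}{2}$-short w.r.t. $\varphi_t$. The $\ell_2^2$-triangle inequality~(\ref{SDP:triangle}) then forces
\[
\|\varphi_t(u)-\varphi_t(v)\|^2 \leq \|\varphi_t(u)-\varphi_t(w)\|^2+\|\varphi_t(w)-\varphi_t(v)\|^2 \leq 2\delta+\tfrac{\delta}{2}\leq 3\delta,
\]
so $v\in B'$. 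Furthermore $\min(|N_F(v)\cap B|,\beta d)$ is at most the initial budget of $v$: if $\deg(v,F)\geq\alpha d$, the initial budget is $\beta d$ directly; if $\deg(v,F)<\alpha d$ then $|N_F(v)\cap B|\leq\deg(v,F)<\alpha d$, which is the initial budget. Thus $\sum_v\min(L(v),\beta d)$ is at most the total initial budget of $B'\cap V_{F_3(t)}$.

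For the cut part, each edge contributing to $K(v)$ is incident to some $w\in B\subseteq V_{F_3(t)}$, so $w$ remained active throughout every step up through Step~3 of iteration $t$. Such an edge was cut either at a previous Step~3/4 (when the other endpoint was removed and $w$ obtained $+1$) or at some Step~2 (when both endpoints got $+1$), so in every case the algorithm incremented $\budget(w)$ by at least $1$. Hence $\sum_v K(v)$ is bounded by the total accumulated budget increment of vertices in $B\subseteq B'$. Combining the two parts gives $M_{\beta d}(B)\leq \budget(B'\cap V_{F_3(t)})$, and because $u\in V_{F_4(t)}\subseteq V_{F_3(t)}$ survived Heavy Vertices Removal, $u$ was not $\eta_t$-heavy when processed, so $\budget(B'\cap V_{F_3(t)}) < \beta\eta_t\,nd = \eta_t\beta d n$.

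The main obstacle is the bookkeeping across the three steps of iteration $t$: the non-heaviness inequality for $u$ applies to the budget at the instant $u$ was examined during Step~3, whereas my charging works with cut edges that accumulate through the end of Step~3 of iteration $t$. Fortunately these two quantities differ only by the $+1$ increments paid for boundary edges of balls removed between $u$'s examination and the end of Step~3, and these extra increments match exactly the additional $K(v)$ contributions they induce, so the telescoping still goes through. Making this identification rigorous (together with carefully enumerating the cut-time cases described above, where one has to distinguish edges cut at Step~2 of iteration $t$ from those already removed in earlier iterations) is the technical heart of the proof.
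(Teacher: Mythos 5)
Your proposal follows essentially the same charging argument as the paper's proof: bound $M_{\beta d}(\Ball_{\varphi'_t}(u,2\delta))$ by the budget of the $3\delta$-ball around $u$ (initial budgets cover the contribution of vertices reached by still-present, hence $\nicefrac{\delta}{2}$-short, edges via the triangle inequality, and the $+1$ increments for cut edges cover the rest), and then invoke the fact that $u$ survived the Heavy Vertices Removal step. The only differences are bookkeeping --- your per-vertex live/cut split of $|N_F(v)\cap B|$ versus the paper's split of the sum over $v\in B_{3\delta}$ and $v\notin B_{3\delta}$ --- and that you are more explicit than the paper about when within Step~3 the non-heaviness of $u$ is invoked; your accounting indeed goes through (one can simply run the same charge with budgets and active vertices taken at the moment $u$ is examined), so this is a correct rendering of the paper's argument.
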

\begin{proof}
Fix a vertex $u\in V_{F_4(t)}$. Let 
$B_{2\delta} = \Ball_{\varphi'_t}(u, 2\delta)$ and 
$$B_{3\delta} = \Ball_{\varphi_t}(u, 3\delta)\cap V_{F_3(t)} = \{v\in V_{F_3(t)}: \|\varphi_t(v) - \varphi_t(u)\|^2\leq 3\delta\}.$$
That is,
$B_{2\delta}$ is the ball of radius $2\delta$ around $u$ in the graph $F_4(t)$;
$B_{3\delta}$ is the ball of radius $3\delta$ in the graph $F_3(t)$ w.r.t $\varphi_t$ (and not
$\varphi'_t$). Note that $B_{2\delta}\subset B_{3\delta}$. Write the definition of 
$M_{\beta d}(\Ball_{\varphi'_t}(u, 2\delta))= M_{\beta d}(B_{2\delta})$:
\begin{align}
M_{\beta d}(B_{2\delta}) &= \sum_{v\in V_G} \min\{|N_F(v)\cap B_{2\delta}|, \beta d\}\notag\\
&= \sum_{v\in B_{3\delta}} \min\{|N_F(v)\cap B_{2\delta}|, \beta d\}\\
{}&+\sum_{v\in V_G\setminus B_{3\delta}} \min\{|N_F(v)\cap B_{2\delta}|, \beta d\}.\label{eq:Mbd}
\end{align}
Observe that all vertices in $B_{2\delta}$ are active, since the distance to already removed
vertices equals 1 (see~(\ref{eq:phiPrime})). We separately bound the first and second sums above.
We bound the size of the first sum by $|B_{3\delta}|\beta d$. 
To bound the size of the second sum, consider a vertex $v\in V_G\setminus B_{3\delta}$.
There are two options: 
\begin{enumerate}
\item  $v\notin V_{F_3(t)}$ i.e. $v$ was removed at one of the previous 
iterations or at Step 3. In this case, all edges going from $v$ to $B_{2\delta}$ 
were cut at one of the previous iterations or at Step 3.
\item  $\|\varphi_t(u)-\varphi_t(v)\|^2\geq 3\delta$ 
(but $v\in V_{F_3(t)}$). In this case,
all edges going from $v$ to $B_{2\delta}$ have length at least $\delta$, 
and thus they were cut at Step 2 at one of the iterations.
\end{enumerate}
In any case, all edges between $v$ and vertices in $B_{2\delta}$ 
have been cut by the algorithm at Steps 2, 3 of the current iteration, or at any step of
one of the previous iterations. Note that none of these edges were cut at Step 4 of the current
iteration.  Let $\rho$ be the number of edges going from $V_G\setminus B_{3\delta}$ to
$B_{2\delta}$. Then, $\rho$ is an upper bound on the second sum in (\ref{eq:Mbd}).
We have
$$M_{\beta d}(B_{2\delta})\leq |B_{3\delta}|\beta d + \rho.$$
We know that every cut edge has increased the budget of the endpoint lying in $B_{2\delta}$ by 1. Initially, 
the algorithm assigned a budget of at least $\beta d$ to each vertex in $B_{3\delta}$, hence
$$\budget(B_{3\delta})\geq \beta d\cdot |B_{3\delta}| + \rho \geq M_{\beta d}(B_{2\delta}).$$
In the equation above, we compute the budget of $B_{3\delta}$ after Step 3. That is,
we ignore the changes of the budgets that occurred at Step~4.

We now use that the Heavy Vertices Removal procedure has removed all balls of radius 
$3\delta$ having a budget of $\eta_t \beta  dn$ or more. Thus, $\budget(B)\leq \beta \eta_t dn$ 
(again, here we compute the budget after Step 3). We conclude that 
$M_{\beta d}(B_{2\delta})\leq \eta_t \beta  dn$.
\end{proof}
We now state the main technical result of this section. 
\begin{lemma}\label{lem:3part} 
Let $(L,R)$ be the planted partition in the graph $G$. For every $t\in \{0,\dots, T-1\}$, the graph $F_4(t)$ can be partitioned into
two sets $W$, $\bar{W}$ such that the sizes of the 
sets $L\cap W$, $R\cap W$ and $\bar{W}$ are at most $n/2$ each; and the size of the edge boundary between $L\cap W$, $R\cap W$ and $\bar{W}$
is at most $4K \eta_t dn$.
\end{lemma}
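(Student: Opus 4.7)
The plan is to take $\bar W$ to be the set of ``damaged'' active vertices in $V_{F_4(t)}$, namely those $u$ for which the third condition of Main Structural Property~1 fails with respect to $\varphi_t$:
\[
\short_{\varphi_t, \delta/2}(u, G) < \max\{\beta d,\, \deg(u, H)/D_n\}.
\]
Set $W = V_{F_4(t)} \setminus \bar W$ and consider the three-way partition $(W \cap L,\, W \cap R,\, \bar W)$. The bounds $|W \cap L|, |W \cap R| \leq |L| = |R| = n/2$ are immediate. I expect $|\bar W| \leq n/2$ to follow from a counting argument based on Theorem~\ref{thm:mainSDPbound} (which gives $\sdpcost(\varphi_t, F_1(t)) \leq 8K\eta_t dn$, so few $G$-edges are long) together with Structural Property~3 and the total-budget bound, which together imply that only a small fraction of active vertices can have $\deg(u, G_4(t)) < \beta d$.

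The first step is to bound the number of edges between $W \cap L$ and $W \cap R$ in $F_4(t)$. Since $(L, R)$ is the planted cut in $G$, no $G$-edge of $F$ crosses it, so every such crossing edge belongs to $\pi(E_H)$ and hence to $H_4(t)$. Step~2 removed every $\delta/2$-long edge, so every remaining edge of $F_4(t)$ is $\delta/2$-short with respect to $\varphi_t$. Lemma~\ref{lem:prop2isSatisfied} shows $M_{\beta d}(\Ball_{\varphi_t'}(u, 2\delta)) \leq \eta_t \beta d n$ for every active $u$, which is condition~2 of Property~1 applied to the extension $\varphi_t'$ defined in~(\ref{eq:phiPrime}); and for active $u$ one has $\short_{\varphi_t', \delta/2}(u, G) = \short_{\varphi_t, \delta/2}(u, G)$ because inactive neighbors lie at squared distance $1$ from $u$ under $\varphi_t'$, so every vertex of $W$ satisfies condition~3 with respect to $\varphi_t'$. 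Thus every edge of $F_4(t)$ crossing between $W \cap L$ and $W \cap R$ meets all three hypotheses of Property~1 with $\eta = \eta_t$, and Property~1 bounds the number of such edges by $K \eta_t d n$.

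The main obstacle is to bound $|\partial \bar W| = |E_{F_4(t)}(\bar W, W)|$. The key tool should be Lemma~\ref{lem:DC-budgetBound}, which after Step~4 guarantees $\budget(Y') \leq 2 |E_{F_4(t)}(Y', \bar Y')| + 2\beta d |Y'|$ for every $Y' \subseteq V_{F_4(t)}$. On its face this only lower-bounds the boundary, so turning it into an upper bound on $|\partial \bar W|$ is the delicate part. My plan is to bound $|\partial \bar W|$ by summing $\deg(u, F_4(t))$ over $u \in \bar W$ and splitting the sum into $G$-edges and $H$-edges: the $G$-contribution is small because a damaged vertex has $\deg(u, G_4(t)) < \beta d$ by definition, and the $H$-contribution is controlled by Structural Property~2 (which caps short $G$-edges incident on vertices of low $G$-degree with many short $H$-edges) together with Structural Property~4 (which converts $H$-neighbors of $u$ lying in $\VGleqA$ into a controlled number of $H$-neighbors outside it). The expected conclusion is $|\partial \bar W| \leq 3K\eta_t dn$, so the total three-way edge boundary is at most $4K\eta_t dn$ as required. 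The hardest step will be the bookkeeping that charges each boundary edge either to the budget surplus of a damaged endpoint (using Lemma~\ref{lem:DC-budgetBound} in its contrapositive form) or to an edge already counted by Property~2 or~4, without double-counting.
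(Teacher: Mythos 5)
Your treatment of the easy half matches the paper: the crossing edges between $W\cap L$ and $W\cap R$ (and, in fact, all $H_4(t)$-edges with an endpoint in $W$) are handled exactly as in the paper's proof, via Lemma~\ref{lem:prop2isSatisfied} and condition~3 for $W$-vertices, giving the $K\eta_t dn$ bound from the Main Structural Property. The genuine gap is in the other half, bounding $|E_{G_4(t)}(W,\bar W)|$, which is where the paper spends Claims~\ref{cl:degsOfYZ}--\ref{claim:zetaSize}. Two concrete problems with your sketch. First, the statement ``a damaged vertex has $\deg(u,G_4(t))<\beta d$ by definition'' is false under your own definition of $\bar W$: a vertex can be damaged because $\short_{\varphi_t,\nicefrac{\delta}{2}}(u,G)<\deg(u,H)/D_n$ while having $G_4(t)$-degree far above $\beta d$ (these are the paper's set $X$, which is disposed of separately in Claim~\ref{cl:Xboundary} by summing $\deg(u,H)/D_n$, not by a $\beta d$ cap). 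Second, and more fundamentally, even for the vertices that do have $\deg(u,G_4(t))<\beta d$, summing degrees over $\bar W$ cannot give the required bound: the only available size bounds on $\bar W$ are of order $n/\beta$ (Markov on $H$-degrees for the analogue of $Z$, total budget for the analogue of $Y$), so a degree sum yields $O(\beta d\cdot n/\beta)=O(dn)$, while the lemma needs $O(K\eta_t dn)$ with $\eta_t$ as small as $1/D_{ARV}=\Theta(1/\sqrt{\log n})$. No charging scheme that only uses ``each damaged vertex has few $G$-edges'' can close this factor.

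The missing mechanism is the paper's finer decomposition of $\bar W$ into $X$, $Y$ (low $G_4(t)$-degree, budget at least $(\alpha-\beta)d$) and $Z$ (low $G_4(t)$-degree, low budget), together with three interacting steps you do not carry out: (i) Claim~\ref{cl:degsOfYZ} shows $Z\subset\VGleqA\cap\VHgeqB$, so the edges $E_{G_4(t)}(Z,W)$ satisfy all four conditions of Structural Property~2 and are bounded by $K\eta_t dn$ directly; (ii) the token argument of Claim~\ref{claim:sizeBetaDZ}, which is where Structural Property~4 actually enters, bounds $\beta d|Z|$ by $\budget(\bar W)/10$ plus a multiple of $|E_{H_4(t)}(\bar W,W)|$; and (iii) Lemma~\ref{lem:DC-budgetBound} applied to $Y'=\bar W$ converts the high per-vertex budget of $Y$ into the inequality $(\alpha-\nicefrac{7}{2}\beta)d|Y|\lesssim |E_{G_4(t)}(\bar W,W)|+|E_{H_4(t)}(\bar W,W)|+dn/D_n$, which, since $\zeta=|E_{G_4(t)}(Y,W)|\leq\beta d|Y|$ and $\zeta$ itself appears inside $|E_{G_4(t)}(\bar W,W)|$, is a self-referential inequality that resolves to $\zeta\leq K\eta_t dn$ only because $\alpha\gg\beta$. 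Your proposal names the right ingredients (Lemma~\ref{lem:DC-budgetBound}, Properties~2 and~4) but does not identify the set $Y$ of high-budget vertices, the token argument, or the self-improving inequality, and the ``bookkeeping'' you defer is precisely this argument; as sketched, the plan would not go through.
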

\begin{proof}
Define the set $W$ as follows:
$$W=\{u\in V_{F_4(t)}: \deg (u, G_4(t)) \geq \max\{\beta d, \deg(u, H)/D_n\} \}.$$

Let $\varphi'_t$ be the SDP solution defined in~(\ref{eq:phiPrime}). We claim that all edges $(u,v)$ in $E_{H_4(t)}$ 
with $u\in W$ satisfy conditions 1--3 of the Main Structural Property. Indeed, all edges $(u,v)\in E_{F_4(t)}$ are 
$\nicefrac{\delta}{2}$-short, otherwise they would be removed by Step 2. By Lemma~\ref{lem:prop2isSatisfied}, all active vertices 
satisfy the second condition. Finally, by the definition of $W$, the degree of every $u\in W$ in the graph $G_4(t)$ is 
at least $\max\{\beta d,\deg(u, H)/D_n\}$, and since all uncut edges are $\nicefrac{\delta}{2}$-short, $\short_{\varphi'_t, \nicefrac{\delta}{2}} (u, G)\geq \max\{\beta d,\deg(u, H)/D_n\}$. Therefore, by the Main Structural Property, there are at most $K\eta_t dn$ edges $(u,v)\in E_{H_4(t)}$ with $u\in W$. 

The edge boundary between the sets $L\cap W$, $R\cap W$ and $\bar{W}=V_{F_4(t)}\setminus W$ is the union of the
sets $E_{G_4(t)}(L\cap W, R\cap W)$, $E_{G_4(t)}(W,\bar W)$, $E_{H_4(t)}(L\cap W, R\cap W)$ and $E_{H_4(t)}(W, \bar W)$.
Observe, that $E_{G_4(t)}(L\cap W, R\cap W)=\varnothing$, since $(L,R)$ is the planted cut in $G$.
We already have an upper bound 
\begin{equation}\label{eq:bound-W-boundary-in-H}
|E_{H_4(t)}(L\cap W, R\cap W)| + |E_{H_4(t)}(W, \bar W)|\leq K\eta_t dn
\end{equation} 
since all edges in $E_{H_4(t)}(L\cap W, R\cap W)$ and  $E_{H_4(t)}(W, \bar W)$ are incident on the set $W$. We now bound the size
of the set $E_{G_4(t)}(W,\bar W)$. Let
\begin{align*}
X  &= \{u\in V_{F_4(t)}: \beta d\leq \deg (u, G_4(t)) \leq \deg(u,H)/D_n\};\\
Y  &= \{u\in V_{F_4(t)}: \deg (u, G_4(t)) \leq \beta d; \budget(u) \geq (\alpha - \beta) d\};\\
Z  &= \{u\in V_{F_4(t)}: \deg (u, G_4(t)) \leq \beta d; \budget(u) < (\alpha - \beta)d\}.
\end{align*}
The budgets in the expressions above are computed in the end of the $t$-th iteration. We will need several bounds on the 
degrees of vertices $u$ in $Z$.

\begin{claim}\label{cl:degsOfYZ}
For every $u\in Z$,
\begin{enumerate}
\item $\deg (u,H_4(t))\geq \max\{\beta d, \nicefrac{1}{6}\deg (u,H)\}$;
\item $\deg (u,G)\leq (\alpha-\beta) d$.
\end{enumerate}
\end{claim}
\begin{proof}
I. Denote by $\rho=\deg (u,F) -\deg(u,F_4(t))$ the number of cut edges incident 
on $u$. By the definition of $Z$, $\budget (u)< (\alpha - \beta)d$. Therefore, 
$\deg(u, F)\geq \alpha d$ (otherwise, $u$ would receive a budget of $\alpha d$ at the initialization step);
and the initial budget of $u$ is $\beta d$. Hence, the current budget of $u$ equals
$\budget(u)=\beta d + \rho$. We get $\rho =\budget(u) -\beta d \leq (\alpha - 2\beta) d$. Thus,
$\deg(u,F_4(t)) = \deg (u,F)-\rho \geq 2\beta d$. Then,
$\deg(u,H_4(t))\geq \deg(u,F_4(t))- \deg(u,G_4(t))\geq \beta d$.

By Lemma~\ref{lem:DC-budgetBound} (applied with $Y'=\{u\}$), 
$$\budget (u)\leq 2\deg (u, F_4(t)) + 2\beta d\leq 2 \deg (u, H_4(t)) +4\beta d.$$
Hence, $\rho \leq 2 \deg (u, H_4(t)) +3\beta d$, and
$$\deg(u,H)\leq \deg(u,H_4(t)) + \rho \leq 3 \deg (u, H_4(t)) +3\beta d\leq 6 \deg (u, H_4(t)).$$

II. We have $\deg (u,G)\leq \deg(u, G_4(t))+\rho \leq \beta d + (\alpha - 2\beta) d\leq (\alpha -\beta)d$.
\end{proof}
As an immediate corollary we get that $Z\subset \VGleqA\cap \VHgeqB$ (see (\ref{eq:VGleqA}) and (\ref{eq:VHgeqB}) in Section~\ref{sec:StructPropDefs} for the definitions
of $\VGleqA$ and $\VHgeqB$).
\begin{corollary}\label{cor:degsOfYZ} $Z\subset \VGleqA\cap \VHgeqB$.
\end{corollary}

We will also need an upper bound on the size of $X$.
\begin{claim}\label{cl:sX}
$|X|\leq n/(\beta D_n)$.
\end{claim}
\begin{proof}
For all $u\in X$, $\deg(u,H)\geq \beta d$. The average degree of vertices in $H$ is at most $d$. Hence, by Markov's inequality, $|X|\leq n/(\beta d)$.
\end{proof}

\smallskip

We return to the proof of Lemma~\ref{lem:3part}. Write
$$|E_{G_4(t)}(W,\bar W)|= |E_{G_4(t)}(Y, W)| + |E_{G_4(t)}(Z,W)|.$$

Observe that every edge $(u,v)$ in $E_{G_4(t)}(Z, W)$ ($u\in Z$, $v\in W$) satisfies conditions 1--4 of 
Structural Property~2: Each edge $(u,v)$ is $\nicefrac{\delta}{2}$-short. Then,
$M_{\beta d}(\Ball_{\varphi}(v, 2\delta))\leq \eta_t n$ by Lemma~\ref{lem:prop2isSatisfied} (note that $v$ is active, because $(u,v)$ is $\nicefrac{\delta}{2}$-short);
$\short_{\varphi'_t,\nicefrac{\delta}{2}}(u,G)\geq \deg(u,G_4(t))\geq \beta d$ by Claim~\ref{cl:degsOfYZ};
and $\deg(u,G)\leq \alpha d$ by Claim~\ref{cl:degsOfYZ}. Hence,
\begin{equation}\label{eq:bound-on-ZW}
|E_{G_4(t)}(Z,W)|\leq K\eta_t nd.
\end{equation}

Claim~\ref{claim:zetaSize} shows that $|E_{G_4(t)}(Y, W)|\leq K\eta_t nd$, and  
Claim~\ref{cl:Xboundary} shows that $|E_{G_4(t)}(X,\bar W)|\ll K\eta_t nd$. 
Hence, the total size of the 
edge boundary is at most $4K\eta_t dn$. Before proving Claims~\ref{claim:zetaSize} and~\ref{cl:Xboundary}, we verify that the sizes of
the sets $L\cap W$, $R\cap W$ and $\bar W$ are bounded by $n/2$. The sizes of the sets 
$L\cap W$, $R\cap W$ are bounded by $|L|=|R|=n/2$. The size of the set $X$ is bounded by $n/(\beta d)\ll n/6$ (see Claim~\ref{cl:sX}); the size of $Y$ is bounded by $\totalBudget/((\alpha-\beta)d)\leq
\nicefrac{3}{2}\,\beta dn /((\alpha-\beta)d)\ll n/6$ (since the budget of every vertex in $Y$ is at least $(\alpha-\beta)d$);
the size of $Z$ is bounded by $n/\beta\ll n/6$ (since the average degree in $H$ is at most $d$; 
the degrees of all vertices in $Y$ are at least $\beta d$). Thus, $|\bar W| = |X|+|Y| + |Z|\ll n/2$.

\end{proof}

\begin{claim}\label{cl:Xboundary}
The size of the edge boundary between $W$ and $X$ in the graph $G_4(t)$ is at most $\eta_t nd$:
$$|E_{G_4(t)}(W,X)|\leq \eta_t nd.$$
\end{claim}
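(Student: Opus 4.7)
The plan is that this bound follows almost immediately from the defining condition of $X$, coupled with the choice of the parameters $D_n$ and $T$, and does not require any structural property. The main insight is that vertices in $X$ are, by definition, vertices whose $G_4(t)$-degree is already much smaller than their $H$-degree, so their total $G_4(t)$-degree is very small.

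First I would bound the number of edges crossing from $W$ to $X$ in $G_4(t)$ by the total $G_4(t)$-degree of $X$, since each such edge is incident to $X$:
$$|E_{G_4(t)}(W,X)| \;\leq\; \sum_{u \in X} \deg(u, G_4(t)).$$
By the definition of $X$, every $u \in X$ satisfies $\deg(u, G_4(t)) \leq \deg(u, H)/D_n$, so
$$|E_{G_4(t)}(W,X)| \;\leq\; \frac{1}{D_n}\sum_{u \in X} \deg(u, H) \;\leq\; \frac{1}{D_n}\sum_{u \in V_F} \deg(u, H) \;=\; \frac{2|E_H|}{D_n} \;\leq\; \frac{nd}{D_n},$$
where the last inequality uses the definition $d \geq 2|E_H|/n$ from Section~\ref{sec:prelim}.

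Finally, I would check that $1/D_n \leq \eta_t$ for every $t \in \{0,\dots,T-1\}$. Since $D_n = \max\{D_{ARV},\alpha\} \geq D_{ARV}$ and $T = \lceil \log_2 D_{ARV} \rceil$, we have $2^{T-1} \leq D_{ARV} \leq D_n$, and therefore
$$\eta_t \;=\; 2^{-t} \;\geq\; 2^{-(T-1)} \;\geq\; \frac{1}{D_n}.$$
Combining this with the previous display gives $|E_{G_4(t)}(W,X)| \leq nd/D_n \leq \eta_t nd$, which proves the claim. There is no real obstacle here: the only thing to be careful about is tracking which parameter ($D_{ARV}$, $D_n$, $\alpha$) to use, and verifying that the definition of $d$ allows us to bound $2|E_H| \leq nd$.
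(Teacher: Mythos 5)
Your proposal is correct and follows essentially the same route as the paper: bound $|E_{G_4(t)}(W,X)|$ by the total $G_4(t)$-degree of $X$, use the defining inequality $\deg(u,G_4(t))\leq \deg(u,H)/D_n$ to get $2|E_H|/D_n \leq dn/D_n$, and conclude with $\eta_t \geq 1/D_n$. Your explicit verification that $\eta_t = 2^{-t} \geq 2^{-(T-1)} \geq 1/D_n$ for $t\leq T-1$ is just a spelled-out version of the step the paper states without detail.
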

\begin{proof}
We count the number of edges incident on the vertices of $X$ in the graph $G_4(t)$. By the definition of $X$,
$\deg(u, G_4(t))\leq \deg(u,H)/D_n$. Thus,
$$\sum_{u\in X}\deg(u, G_4(t))\leq \sum_{u\in X}\frac{\deg(u,H)}{D_n}\leq \frac{2|E_H|}{D_n} \leq \frac{dn}{D_n}\leq \eta_t nd,$$
since $\eta_t\geq 1/D_n$.
\end{proof}

To prove Claim~\ref{claim:zetaSize} we need to bound $\beta d |Z|$.

\begin{claim}\label{claim:sizeBetaDZ} We have
$$ 
\beta d |Z| \leq \frac{\budget (\bar W)}{10} + 7 |E_{H_4(t)}(\bar W, W)|.
$$
\end{claim}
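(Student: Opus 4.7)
The plan is to show that every $v\in Z$ contributes at most $\deg(v,H_4(t))$ many ``witnesses'' in $H_4(t)$, and to split those witnesses into ones that cross to $W$ (charged to $|E_{H_4(t)}(\bar W,W)|$) and ones that stay in $\bar W$ (charged to $\budget(\bar W)/10$). By Claim~\ref{cl:degsOfYZ} we have $\deg(v,H_4(t))\geq \beta d$ and $\deg(v,H)\leq 6\deg(v,H_4(t))$ for every $v\in Z$, so
$$\beta d |Z| \;\leq\; \sum_{v\in Z}\deg(v,H_4(t)) \;=\; |E_{H_4(t)}(Z,W)| \;+\; \sum_{v\in Z}|N_{H_4(t)}(v)\cap \bar W|.$$
The first term is at most $|E_{H_4(t)}(\bar W,W)|$ and is directly absorbed into the $7|E_{H_4(t)}(\bar W,W)|$ term.

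For the second term, I apply Property~4 to each $u\in \bar W$. By Corollary~\ref{cor:degsOfYZ}, $Z\subseteq \VHgeqB\cap \VGleqA$, so restricting Property~4's left-hand side to $v\in Z$ and to edges $(u,v)\in E_{H_4(t)}\subseteq \pi E_H$ gives
$$\sum_{v\in Z,\,(u,v)\in E_{H_4(t)}} \frac{\beta d}{\deg(v,H)} \;\leq\; \frac{8\deg(u,H)}{\alpha}+4\log n.$$
I then split $Z$ into $Z_{\mathrm{lo}}=\{v:\deg(v,H_4(t))\leq C\beta d\}$ and $Z_{\mathrm{hi}}=Z\setminus Z_{\mathrm{lo}}$ for an absolute constant $C$. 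For $v\in Z_{\mathrm{lo}}$, Claim~\ref{cl:degsOfYZ} gives $\deg(v,H)\leq 6C\beta d$, so $\beta d/\deg(v,H)\geq 1/(6C)$, turning the displayed inequality into $|N_{H_4(t)}(u)\cap Z_{\mathrm{lo}}|\leq 6C\bigl(\tfrac{8\deg(u,H)}{\alpha}+4\log n\bigr)$. Summing over $u\in \bar W$ bounds $\sum_{v\in Z_{\mathrm{lo}}}|N_{H_4(t)}(v)\cap \bar W|$ by $O\!\left(\tfrac{1}{\alpha}\sum_{u\in \bar W}\deg(u,H)\right)+O(|\bar W|\log n)$. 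For $v\in Z_{\mathrm{hi}}$, the high degree of $v$ itself pays: $\beta d|Z_{\mathrm{hi}}|\leq \tfrac{1}{C}\sum_{v\in Z_{\mathrm{hi}}}\deg(v,H_4(t))$, and this residual is absorbed into $|E_{H_4(t)}(\bar W,W)|$ plus edges within $\bar W$ that I bound trivially via $\sum_{u\in \bar W}\deg(u,H)\leq 2|E_H|\leq nd$.

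The remaining step is to replace $\tfrac{1}{\alpha}\sum_{u\in \bar W}\deg(u,H)$ and $|\bar W|\log n$ by fractions of $\budget(\bar W)$. For this I use Lemma~\ref{lem:DC-budgetBound}: applied to each singleton $\{u\}\subseteq \bar W$ it gives $\deg(u,F_4(t))\leq \tfrac{\budget(u)}{2}-\beta d$ after rearrangement, and, combined with the fact that every cut $H$-edge incident on $u$ was paid for out of $\budget(u)$'s initial allowance of $\beta d$, yields $\deg(u,H)=O(\budget(u))+O(\beta d)$. Summing, $\sum_{u\in \bar W}\deg(u,H)=O(\budget(\bar W))+O(\beta d|\bar W|)$, where the second term is again $O(\budget(\bar W))$. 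The term $|\bar W|\log n$ is absorbed because $d\geq C\log^3 n$ with $C$ sufficiently large and $\budget(\bar W)\geq \beta d|\bar W|$.

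The main obstacle will be calibrating all the constants so they combine into exactly $\tfrac{1}{10}$ on the budget side and $7$ on the edges side. The parameter choices $\alpha=50\beta$ and $\beta=200K$ (with $K$ a large enough absolute constant) are what make the factor $1/\alpha$ from Property~4, the factor $6$ from Claim~\ref{cl:degsOfYZ}, and the factor $2$ from Lemma~\ref{lem:DC-budgetBound} all multiply into a coefficient at most $1/10$ in front of $\budget(\bar W)$, while the edge contributions fit inside the allotted $7$.
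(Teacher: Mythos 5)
Your first step (charging $\beta d|Z|$ to $\sum_{v\in Z}\deg(v,H_4(t))$ via Claim~\ref{cl:degsOfYZ} and splitting off $|E_{H_4(t)}(Z,W)|$) and your invocation of Property~4 with $Z\subset \VHgeqB\cap\VGleqA$ are sound, but the way you close the argument has two genuine gaps. First, you misread Lemma~\ref{lem:DC-budgetBound}: applied to $Y'=\{u\}$ it states $\budget(u)\leq 2\deg(u,F_4(t))+2\beta d$, which is a \emph{lower} bound $\deg(u,F_4(t))\geq \budget(u)/2-\beta d$, not the upper bound you rearrange it into. In fact no inequality of the form $\deg(u,H)=O(\budget(u))+O(\beta d)$ holds on $\bar W$: a vertex $u\in X$ (or in $Y$ or $Z$) can have $\deg(u,H_4(t))=\Omega(n)$ with few or no incident edges ever cut, so its budget is only $\beta d$ while its $H$-degree is huge. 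Hence $\tfrac{1}{\alpha}\sum_{u\in\bar W}\deg(u,H)$ cannot be replaced by a fraction of $\budget(\bar W)$, and the term you generate by bounding the right-hand side of Property~4 crudely by $\tfrac{8\deg(u,H)}{\alpha}+4\log n$ is not absorbable. Second, your $Z_{\mathrm{hi}}$ case leaves a residual of order $nd/C$ for an absolute constant $C$ (you bound the within-$\bar W$ edges ``trivially via $2|E_H|\leq nd$''); since the claim must hold for every run of the algorithm, including runs where $\bar W$ is a handful of vertices so that $\budget(\bar W)$ and $|E_{H_4(t)}(\bar W,W)|$ are both $o(nd)$, such a term breaks the inequality. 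The root cause of both problems is dropping the weights $\beta d/\deg(v,H)$ when you pass from Property~4 to unweighted neighbour counts, which is also what forces the $Z_{\mathrm{lo}}/Z_{\mathrm{hi}}$ split.

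The paper's proof keeps the weights on both sides throughout: write $\beta d|Z|=\sum_{v\in Z}\deg(v,H)\cdot\tfrac{\beta d}{\deg(v,H)}\leq 6\sum_{v\in Z}\deg(v,H_4(t))\cdot\tfrac{\beta d}{\deg(v,H)}$, route these weighted tokens along surviving $H$-edges, bound the tokens landing in $W$ by $|E_{H_4(t)}(Z,W)|$ (each weight is at most $1$ because $\deg(v,H)\geq\beta d$ on $Z$), and for each $u\in\bar W$ bound the right-hand side of Property~4 not by $\deg(u,H)$ but by splitting $u$'s neighbours in $\VHgeqB\setminus Z$ into: cut edges, at most $\budget(u)-\beta d$ of them by the budget invariant; edges into $W$, charged to $|E_{H_4(t)}(\bar W,W)|$; and edges into $X\cup Y$, whose total weighted contribution over all of $\bar W$ is at most $\beta d(|X|+|Y|)\leq \budget(X)+\tfrac{\beta}{\alpha-\beta}\budget(Y)$ because each such vertex receives weight at most $\tfrac{\beta d}{\deg(y,H)}$ per incident edge. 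Every term is then a small multiple (driven by the factor $48/\alpha$ and $d\geq\beta\log n$) of $\budget(\bar W)$ or of $|E_{H_4(t)}(\bar W,W)|$, with no $nd$ residue and no case split on degrees. You would need to restructure your argument along these lines for the claim to go through.
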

\begin{proof}
Consider the following mental experiment: \textit{We give $\beta d$ blue tokens to every $z\in Z$, and $\beta d$ red tokens
to every $y\in X\cup Y$. Then, every vertex $x\in \bar W$ sends $\beta d/\deg(x,H)$ tokens to each neighbor of $x$ in the graph $H$.}
Let us write that the total number of blue tokens, $\beta d |Z|$, equals the number of tokens sent from vertices in $Z$:
$$\beta d |Z|= \sum_{y\in Z} \deg(y, H) \times \frac{\beta d}{\deg(y, H)}.$$
Using the bound $\deg (z,H_4(t))\geq \nicefrac{1}{6}\deg (z,H)$ from Claim~\ref{cl:degsOfYZ}, we get
$$
\beta d |Z|\leq 6 \sum_{y\in Z} \deg (z,H_4(t)) \times \frac{\beta d}{\deg(y, H)} = 
6\sum_{y\in Z}\;\sum_{x:(x,y)\in E_{H_4(t)}} \frac{\beta d}{\deg(y, H)}.
$$
Now, in the right hand side, we count the number of blue tokens sent along edges in $H_4(t)$. Observe that every vertex $y$ sends
at most one blue token to each of its neighbors --- simply because $\deg(y, H)\geq \beta d$ (see Claim~\ref{cl:degsOfYZ}). Hence, 
the number of tokens sent to $W$ from all $y$'s in $Z$ is bounded by the size of the edge boundary between $Z$ and $W$. We have
\begin{eqnarray}\label{eq:blue-tokens}
\beta d |Z|&\leq& 
6\sum_{y\in Z}\;\sum_{\substack{x:(x,y)\in E_{H_4}(t)\\x\in \bar W}} \frac{\beta d}{\deg(y, H)} + 6 |E_{H_4(t)}(Z, W)|
\\&=&
6\sum_{x\in \bar W}\Big(\sum_{\substack{y:(x,y)\in E_{H_4}(t)\\y\in Z}} \frac{\beta d}{\deg(y, H)} \Big) + 6 |E_{H_4(t)}(Z, W)|.\notag
\end{eqnarray}
The expression in the brackets above is the number of blue tokens a vertex $x\in \bar W$ receives.
We compare it with the number of red tokens received by the same vertex. Using Structural Property 4,
we get (keep in mind that $Z\subset \VGleqA\cap \VHgeqB$, see Corollary~\ref{cor:degsOfYZ})
\begin{multline*}
\sum_{\substack{y:(x,y)\in \pi E_H\\y\in Z}} \frac{\beta d}{\deg(y, H)}\leq 
\sum_{\substack{y:(x,y)\in \pi E_H\\y\in \VGleqA\cap \VHgeqB}} \frac{\beta d}{\deg(y, H)}\leq \\ \leq
\frac{8}{\alpha}\sum_{\substack{y: (x,y)\in \pi E_H\\y\in \VHgeqB\setminus \VGleqA}} \frac{\beta d}{\deg(y,H)}+4\log n \leq 
\frac{8}{\alpha}\sum_{\substack{y: (x,y)\in \pi E_H\\y\in \VHgeqB\setminus Z}} \frac{\beta d}{\deg(y, H)}+4\log n.
\end{multline*}
We cover the domain $\{y: (x,y)\in \pi E_H \text{ and } y\in \VHgeqB\setminus Z\}$ with three
sets $S_1=\{y: (x,y)\in \pi E_H\setminus E_{H_4(t)} \text{ and } y\in \VHgeqB\}$, 
$S_2=\{y: (x,y)\in E_{H_4(t)} \text{ and } y\in W\}$ and $S_3=\{y: (x,y)\in E_{H_4(t)} \text{ and } y\in X\cup Y\}$.
The size of $S_1$ is at most $\budget(x)-\beta d$, since all edges from $S_1$ have been cut
and hence $\budget(x)\geq \beta d + |S_1|$. The set $S_2$ equals $E_{H_4(t)} (\{x\}, W)$. Therefore, 
using that $\beta d/\deg(y,H)\leq 1$, we get
\begin{align*}
\sum_{\substack{y:(x,y)\in \pi E_H\\y\in Z}} \frac{\beta d}{\deg(y, H)}
&\leq
\frac{8}{\alpha}\Big(|S_1|+|S_2| + \sum_{y\in S_3} \frac{\beta d}{\deg(y, H)} \Big)+4\log n\\
&\leq\frac{8}{\alpha}\Big(\sum_{\substack{y: (x,y)\in E_{H_4(t)}\\y\in X\cup Y}} \frac{\beta d}{\deg(y, H)}+ |E_{H_4(t)} (\{x\}, W)| + \budget (x)\Big) + 4\log n.
\end{align*}
Plugging this inequality in~(\ref{eq:blue-tokens}), we get
\begin{align*}
\beta d |Z|&\leq
\frac{48}{\alpha}\sum_{x\in \bar W}
\Big(\sum_{\substack{y: (x,y)\in E_{H_4(t)}\\y\in X\cup Y}} \frac{\beta d}{\deg(y, H)}+ \budget (x) + E_{H_4(t)} (\{x\}, W) \Big)\\
& \phantom{\leq\frac{48}{\alpha}\sum}
\;+\;\Big(4|\bar W| \log n	
 + 6 |E_{H_4(t)}(Z, W)|\Big)\\
&\leq \frac{48}{\alpha}\Big(\beta d |X| + \beta d |Y| + \budget (\bar W) + |E_{H_4(t)} (\bar W, W)|\Big) + 4|\bar W| \log n
 + 6 |E_{H_4(t)}(Z, W)|.
\end{align*}
Using that $\alpha = 50\beta$,  $\beta = 200K$, $d>\beta \log n$, and $\budget (X)\geq \beta d|X|$,
$\budget (Y)\geq (\alpha - \beta )d|Y|$, $\budget (Z)\geq \beta d|Z|$,
we get the following bounds 
\begin{itemize}
\item $48\beta d |X|/\alpha + 48\budget (X)/\alpha + 4|X|\log n\leq \budget(X)/10$;
\item $48\beta d |Y|/\alpha + 48\budget(Y)/\alpha+4|Y|\log n\leq \budget(Y)/10$; and
\item $48\budget(Z)/\alpha +4|Z| \log n\leq \budget(Z)/10$.
\end{itemize}
 Therefore,
$$\beta d |Z| \leq \frac{\budget (\bar {W})}{10} + 7 |E_{H_4(t)}(\bar W, W)|.$$
\end{proof}

We now use the upper bound on $\beta d |Z|$ to get an upper bound on $\zeta\equiv |E_{G_4(t)}(W,Y)|$.

\begin{claim}\label{claim:zetaSize}
The size of the edge boundary between $W$ and $Y$ in the graph $G_4(t)$ is at most $K\eta_t nd$:
$$\zeta\equiv |E_{G_4(t)}(W,Y)|\leq K\eta_t nd.$$
\end{claim}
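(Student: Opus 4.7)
My plan is to reduce the claim to an upper bound on $|Y|$ and then derive that bound via a self-bounding inequality assembled from Lemma~\ref{lem:DC-budgetBound} applied with $Y'=\bar W$, the three boundary estimates already established in the proof of Lemma~\ref{lem:3part}, and Claim~\ref{claim:sizeBetaDZ}.

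First I observe that $\zeta \leq \sum_{u\in Y}\deg(u,G_4(t)) \leq \beta d|Y|$, since every $u\in Y$ has $\deg(u,G_4(t))\leq \beta d$ by definition of $Y$. Hence it suffices to show $\beta d|Y| = O(K\eta_t nd)$. Next I apply Lemma~\ref{lem:DC-budgetBound} with $Y'=\bar W$ and combine it with the per-vertex lower bounds $\budget(u)\geq \beta d$ on $X\cup Z$ (everyone has initial budget at least $\beta d$) and $\budget(u)\geq(\alpha-\beta)d$ on $Y$. Using $\alpha=50\beta$ and collecting terms yields
\[
47\beta d|Y| \;\leq\; 2|E_{F_4(t)}(\bar W,W)| + \beta d|X| + \beta d|Z|.
\]
I decompose $|E_{F_4(t)}(\bar W,W)| = |E_{G_4(t)}(X,W)| + \zeta + |E_{G_4(t)}(Z,W)| + |E_{H_4(t)}(\bar W,W)|$ and bound it by $\zeta + O(K\eta_t nd)$ using Claim~\ref{cl:Xboundary}, inequality~(\ref{eq:bound-on-ZW}), and inequality~(\ref{eq:bound-W-boundary-in-H}); likewise $\beta d|X| = O(\eta_t nd)$ by Claim~\ref{cl:sX} together with the observation $\eta_t \geq 1/D_n$ valid for $t\leq T$.

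The remaining task is to control $\beta d|Z|$. Here I invoke Claim~\ref{claim:sizeBetaDZ}, which gives $\beta d|Z| \leq \budget(\bar W)/10 + 7K\eta_t nd$. Applying Lemma~\ref{lem:DC-budgetBound} to $\bar W$ once more produces $\budget(\bar W) \leq 2\zeta + 2\beta d|Y| + 2\beta d|Z| + O(K\eta_t nd)$, so $\beta d|Z|$ appears on both sides of the combined inequality but on the right only with coefficient $1/5$, which I move to the left to obtain $\beta d|Z| \leq \tfrac14 \zeta + \tfrac14 \beta d|Y| + O(K\eta_t nd)$. Plugging this back into the main inequality yields $(187/4)\beta d|Y| \leq (9/4)\zeta + O(K\eta_t nd)$, and then a final use of $\zeta\leq \beta d|Y|$ collapses it to $178\beta d|Y| \leq O(K\eta_t nd)$, so that $\zeta \leq \beta d|Y| \leq K\eta_t nd$ once $K$ is taken sufficiently large.

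The main obstacle is making the self-bounding argument close. Every application of Lemma~\ref{lem:DC-budgetBound} produces not only a useful upper bound on $\budget(\bar W)$ in terms of the cut edges but also a pesky additive ``$+2\beta d|\bar W|$'' term that reintroduces $\beta d|Y|$ on the right. The loop closes precisely because the gap $\alpha=50\beta$ forces the LHS coefficient $\alpha-3\beta=47\beta$ to dominate the small $\beta d|Y|/4$ that the two substitutions introduce on the RHS, and because the favorable $1/10$ factor in Claim~\ref{claim:sizeBetaDZ} lets $\beta d|Z|$ be absorbed similarly; with weaker constants the argument would not go through in a single pass. Once the self-bounding argument closes, every intermediate bound is linear in already-controlled quantities, so the proof is essentially done.
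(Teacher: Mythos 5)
Your proof is correct and is essentially the paper's own argument: it combines Lemma~\ref{lem:DC-budgetBound} with $Y'=\bar W$, Claim~\ref{claim:sizeBetaDZ}, Claims~\ref{cl:sX} and~\ref{cl:Xboundary}, inequalities~(\ref{eq:bound-on-ZW}) and~(\ref{eq:bound-W-boundary-in-H}), and the bound $\zeta\leq\beta d|Y|$ into the same self-bounding inequality, differing only in bookkeeping (you center the cancellation on $\beta d|Y|$ using $\budget(u)\geq\beta d$ on $X\cup Z$, while the paper centers it on $\budget(\bar W)$ via $\budget(\bar W)\geq(\alpha-\beta)d|Y|$). The explicit constants do close as you indicate, so the argument goes through.
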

\begin{proof}
The Damage Control procedure ensures that for $Y' = \bar W$  (see Lemma~\ref{lem:DC-budgetBound}), we have 
$$
\budget (\bar W) \leq 2|E_{F_4(t)}(\bar W, W)| + 2\beta d |\bar W| = 
2|E_{F_4(t)}(\bar W, W)| + 2\beta d |X| + 2\beta d |Y| + 2\beta d| Z|.
$$
We bound the term $\beta d |X|$ using Claim~\ref{cl:sX} and the term $\beta d |Z|$ using Claim~\ref{claim:sizeBetaDZ}. 
We get the following bound.
$$
\budget (\bar W) \leq 2|E_{F_4(t)}(\bar W, W)| + 2\beta d |Y| + \frac{\budget (\bar W)}{5} + 
14 |E_{H_4(t)}(\bar W, W)| + \frac{2dn}{D_n}.
$$
Hence,
$$
\budget (\bar W) \leq \nicefrac{5}{2}\beta d |Y| + \nicefrac{5}{2} |E_{F_4(t)}(\bar W, W)| + 18 |E_{H_4(t)}(\bar W, W)| + \frac{3dn}{D_n}.
$$
We replace $|E_{F_4(t)}(\bar W, W)|$ with $|E_{G_4(t)}(\bar W, W)|+|E_{H_4(t)}(\bar W, W)|$,
$$\budget (\bar W) \leq \nicefrac{5}{2}\,\beta d |Y| + \nicefrac{5}{2}\,|E_{G_4(t)}(\bar W, W)| 
+ 21 |E_{H_4(t)}(\bar W, W)|+ \frac{3dn}{D_n}.$$
Recall, that every vertex in $Y$ has a budget of at least $(\alpha - \beta) d$ (by the definition of $Y$). Thus,
$\budget(\bar{W})\geq \budget(Y)\geq (\alpha - \beta) d |Y|$. We get
$$(\alpha - \nicefrac{7}{2}\,\beta) d |Y|\leq   \nicefrac{5}{2}\,|E_{G_4(t)}(\bar W, W)| + 
21 |E_{H_4(t)}(\bar W, W)| + \frac{3dn}{D_n}.$$
The degree of every vertex $u$ in $Y$ in the graph $G_4(t)$ is 
at most $\beta d$ (by the definition of $Y$). Hence, $\zeta \equiv |E_{G_4(t)}(Y,W)|\leq \beta d |Y|$, and
$$
\zeta \leq \frac{\beta}{\alpha-\nicefrac{7}{2}\,\beta}\times \Big[\nicefrac{5}{2}\,|E_{G_4(t)}(\bar W, W)| +
 21 |E_{H(t)}(\bar W, W)| 
+\frac{3dn}{D_n}\Big].
$$
Finally, we use the inequalities $|E_{H_4(t)}(\bar W, W)|\leq K\eta_t dn$ (see (\ref{eq:bound-W-boundary-in-H})); $3dn/D_n\leq 3\eta_t dn$; and
$$|E_{G_4(t)}(\bar W, W)| = |E_{G_4(t)}(X, W)| + |E_{G_4(t)}(Y, W)| + |E_{G_4(t)}(Z, W)|\leq \eta_t dn + \zeta + K\eta_t dn$$
(see Claim~\ref{cl:Xboundary} and Equation~(\ref{eq:bound-on-ZW})) to obtain the bound:
$$
\zeta \leq \frac{\beta}{\alpha-\nicefrac{7}{2}\,\beta}\times \Big[\nicefrac{5}{2}\zeta + 24 K\eta_t dn\Big],
$$
which implies $\zeta < K \eta_t dn$.
\end{proof}

\subsection{Proof of~Theorem~\ref{thm:mainSDPbound}}\label{sec:mainSDPbound}
\begin{proof}[Proof of Theorem~\ref{thm:mainSDPbound}]
Let $L\cap W$, $R\cap W$ and $\bar W$ be the partitioning from Lemma~\ref{lem:3part}. We pick three orthogonal
vectors $e_L$, $e_R$ and $e_{\bar{W}}$ of lengths $\nicefrac{\sqrt{2}}{2}$. We define a new SDP solution $\varphi: V_{F_4(t)}\to \bbR^n$ as follows.
Let $\varphi(u) = e_L$ for $u\in L\cap W$; $\varphi(u) = e_R$ for $u\in R\cap W$ and $\varphi(u)=e_{\bar{W}}$ for $u\in \bar W$. It is easy to
check that this SDP solution is feasible: it trivially satisfies the $\ell_2^2$-triangle inequalities (since it is a 0-1 metric); and it satisfies 
the spreading constraints since the sets $L\cap W$, $R\cap W$ and $\bar W$ are balanced. The cost of the solution, $\sdpcost(\varphi, F_4(t))$ 
exactly equals the number of edges cut by the partition, which is bounded by $4K\eta_t dn = 8K\eta_{t+1} dn$:
$$
\sdpcost(\varphi_{t+1}, F_1(t+1))\leq \sdpcost(\varphi, F_1(t+1)) 
= \sdpcost(\varphi, F_4(t))\leq 8K\eta_{t+1} dn.
$$
\end{proof}

\section{Structural Properties --- Proofs}\label{sec:StructPropProof}

In this section we show that $F=G\boxplus_{\pi} H$ satisfies Structural Properties 1--4 with high probability (see
Section~\ref{sec:StructPropDefs} for definitions). The main technically interesting and conceptually important part of our proof is
the Main Structural Theorem.

\begin{theorem}[Main Structural Theorem]\label{thm:MainStructuralTheorem}
There exist a constant $K$, such that for every $\beta>1$, $D_n>\sqrt{\log n}$, and $d \geq D_{ARV} D_n\log^2_2 n$,
every graphs $G=(V_G,E_G)$ and $H=(V_H,E_H)$ on vertex sets $V_G=L_G\cup R_G$ and $V_H=L_H\cup R_H$ with $|E_H|\leq dn/2$ and 
$|L_G|=|R_G|=|L_H|=|R_H|=n/2$, the following statement holds with probability $(1- 1/n^2)$
for $\pi$ chosen uniformly at random from $\Pi_{LR}$. 
For every feasible SDP solution $\varphi: V_G\to \bbR^n$ and $\eta = 2^{-t}$ ($t\leq T=\roundup{\bLog D_{ARV}}=O(\log \log n)$),
there are at most $K\eta dn$ elements in the set $\calS$ defined as follows: the elements of $\calS$ are ordered pairs; a pair 
$(u,v) \in V_G \times V_G$ belongs to $\calS$, if
\begin{enumerate}
\item $(u,v)$ is a $\nicefrac{\delta}{2}$-short edge in $\pi(H)$ i.e.,
$\|\varphi(u)-\varphi(v)\|^2 \leq \nicefrac{\delta}{2}$ and $(u,v)\in \pi E_H$.
\item 
$M_{\beta d}(\Ball_{\varphi}(v, 2\delta))\leq \eta \beta d n$.
\item $\short_{\varphi, \nicefrac{\delta}{2}} (u, G)\geq \max\{\beta d, \deg(u, H)/D_n \}$ i.e., there are at least  $\max\{\beta d, \deg(u, H)/D_n \}$ edges
of length $\nicefrac{\delta}{2}$ leaving $u$ in the graph $G$.
\end{enumerate}
\end{theorem}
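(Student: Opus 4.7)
My plan follows the sketch given in Section~4.4 of the overview: fix the worst-case graphs $G$ and $H$ and show that for a uniformly random $\pi \in \Pi_{LR}$, the conclusion fails only on an exponentially small set of permutations. The core idea is a compression argument: if there existed a feasible $\varphi$ and a scale $\eta = 2^{-t}$ such that $|\calS| > K\eta dn$, then the combinatorial/geometric structure of $\calS$ would yield a uniquely decodable binary encoding of $\pi$ of length strictly less than $\bLog|\Pi_{LR}|$, and standard counting shows that only a $2^{-\Omega(K\eta n \log K)}$ fraction of permutations admit such a short description. The parameters are calibrated so that this probability beats a union bound over $t \leq T = O(\log\log n)$ scales and over a suitably discretized family of SDP solutions.

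The encoding itself I would construct as follows. Sample random subsets $Q_G\subset V_G$ and $Q_H\subset V_H$, each vertex included independently with probability $q = D_n/d$, together with uniformly random linear orderings on $Q_G$ and $Q_H$; these random choices are part of the encoding but are independent of $\pi$. For each $x \in V_H$ let $x'$ be the first $H$-neighbor of $x$ lying in $Q_H$ (if any), and declare $x \in \calX$ when $x'$ exists, the edge $(\pi x,\pi x') \in \calS$, and the set
\[
\Xi(x) = \bigl\{v \in V_G : |Q_G \cap N_G(v) \cap \Ball_\varphi(\pi x',\delta)| \geq q\beta d/2\bigr\}
\]
has size at most $2\eta n$. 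The encoding records: (i) $Q_G$, $Q_H$ and their orderings; (ii) the images $\varphi(u)$ for $u\in Q_G$ and $\varphi(\pi x)$ for $x\in Q_H$, rounded via a Johnson--Lindenstrauss projection to $O(\log n)$ dimensions and $O(\log n)$ bits per coordinate; (iii) the subset $\calX \subseteq V_H$; (iv) the restriction $\pi|_{V_H \setminus \calX}$ written out as a bijection; and (v) for each $x \in \calX$, the rank of $\pi(x)$ inside $\Xi(x)$ under a fixed ordering of $V_G$. The decoder reconstructs $\Xi(x)$ from (i)--(iii), then reads off $\pi(x)$ from (v); condition~(3) on $u=\pi x$ plus the triangle inequality ensures $\pi(x)\in\Xi(x)$, while condition~(2) at $v = \pi x'$ ensures $|\Xi(x)|\le 2\eta n$ (here $M_{\beta d}$ controls exactly the count of vertices with $\ge \beta d/2$ neighbors in a ball).

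The length accounting is then routine: parts (i)--(ii) cost $O(qn\log n) = O((D_n/d)n\log n)$ bits, part (iii) costs $|\calX|\bLog(en/|\calX|)\leq |\calX|\bLog(e/(K\eta))$ bits, part (iv) costs $\bLog\bigl((n/2)!\,(n/2-|\calX|/2)!\,(n/2)!\,(n/2-|\calX|/2)!\bigr)$ bits (working separately on $L$ and $R$), and part (v) costs $|\calX|\bLog(2\eta n)$ bits. Subtracting from $\bLog|\Pi_{LR}| = 2\bLog((n/2)!)$ gives a net saving of order $|\calX|\bLog K - O((D_n/d)n\log n)$. A direct first-moment computation using conditions~(1)--(3) shows that $\Exp_{Q_G,Q_H}|\calX| = \Omega(K\eta n)$ (the probability that a fixed $e=(\pi x,\pi x')$ contributes is, up to constants, the probability that some neighbor of $x$ in $H$ lands in $Q_H$ first, multiplied by the chance that $|\Xi(x)|\leq 2\eta n$, which holds deterministically from condition~(2)). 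Concentration via a bounded-differences or martingale inequality then fixes $|\calX|\gtrsim K\eta n$ on a $1-o(n^{-3})$ event, and the assumption $d \geq D_{ARV} D_n\log_2^2 n$ guarantees that the $\bLog K$ savings dominates the $O((D_n/d)n\log n)$ overhead once $K$ is large enough.

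The main obstacle, I expect, will be making the union bound over SDP solutions rigorous. Since $\varphi$ ranges over an uncountable feasible region, I need to argue that it is enough to discretize: using Johnson--Lindenstrauss, I project to $\bbR^{O(\log n)}$ with $\nicefrac{\delta}{100}$ additive distortion on all pairs, which changes ``$\nicefrac{\delta}{2}$-short'', ``$2\delta$-ball'', and the $M_{\beta d}$ counts by at most a constant factor (so conditions~(1)--(3) pass to the discretization with slightly adjusted parameters). There are only $n^{O(n\log n)}$ discretized solutions, and the Kolmogorov bound saves $2^{-\Omega(K\eta n\log K)} \ll n^{-O(n\log n)}$ bits, so the union bound closes provided $K$ is chosen large. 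A secondary subtlety is to handle $\pi$-images within $L_G$ versus $R_G$ separately in part (iv) of the encoding so that the decoded $\pi$ actually lies in $\Pi_{LR}$; this is a bookkeeping fix but must be done carefully.
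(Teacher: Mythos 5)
Your encoding skeleton (sampling $Q_G,Q_H$ with orderings, the first-neighbor $x'$, the sets $\Xi$, rank-of-$\pi(x)$ encoding, JL rounding, Stirling accounting) matches the paper's, but the way you handle the quantifier ``for every feasible SDP solution $\varphi$'' does not work. You propose to discretize $\varphi$ into $n^{O(n\log n)}$ candidates and union bound, claiming $2^{-\Omega(K\eta n\log K)}\ll n^{-O(n\log n)}$. For constant $K$ this is false: the compression saves only about $|\calX|\bLog K\le O(K\eta n\log K)$ bits per fixed solution, and at the smallest scale $\eta=\Theta(1/D_{ARV})=\Theta(1/\sqrt{\log n})$ the per-solution failure probability is $2^{-O(n/\sqrt{\log n})}$, vastly larger than $2^{-\Theta(n\log^2 n)}$; no constant $K$ (indeed no refinement, since the total possible saving is at most about $|\calX|\bLog n\le n\bLog n$ bits) can close that union bound. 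The point of the paper's Kolmogorov-complexity formulation is that no union bound over $\varphi$ or over $t$ is needed at all: the JL-rounded data of $\varphi$ on $Q_G$ and of $\varphi\circ\pi$ on $Q_H$ is itself part of the one description of $\pi$ (your item (ii) already records it, so the decoder never needs $\varphi$), and one simply counts permutations whose prefix complexity falls below $\bLog|\Pi_{LR}|-2\bLog n$; the existence of \emph{any} violating pair $(\varphi,\eta)$ forces $\pi$ into that small set, which has measure at most $n^{-2}$.

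A second genuine gap is your single sampling rate $q=D_n/d$ together with the claim $\Exp|\calX|=\Omega(K\eta n)$. Conditioned on $x'$ existing, $x'$ is uniform in $N_H(x)$, so $x$ enters $\calX$ with probability roughly $s_x/\deg(x,H)$, where $s_x$ is the number of $\calS$-edges at $\pi(x)$. If the short edges of $H$ sit on vertices of degree $\Theta(\sqrt{dn})$ (e.g.\ $E_H$ a complete bipartite graph on two sets of size $\Theta(\sqrt{dn})$, which is allowed since $H$ is arbitrary with $|E_H|\le dn/2$), then $\Exp|\calX|=O(K\eta\sqrt{dn})\ll K\eta n$; recording $\calX\subset V_H$ then costs about $\bLog(n/|\calX|)$ bits per element, which swamps the $\bLog K$ bits per element saved by the rank encoding, so the compression yields nothing. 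The paper avoids this by bucketing $V_H$ into dyadic degree classes $\calU_i$, choosing by averaging an index $i$ with $\lambda_i\gamma_i\ge\gamma/(2\bLog n)$ and $\gamma_i\ge\gamma/2$, setting $q=D_n/2^i$, and recording $\calX$ as a subset of $\calU_i$ at cost about $\bLog(1/\gamma_i)$ per element; you need this device or an equivalent. Two smaller corrections: the bound $|\Xi(x)|\le 2\eta n$ is not ``deterministic from condition (2)'' --- it requires a Markov-inequality argument over the random $Q_G$, and to interface the rounded distances with the true ones you need the two-radius sandwich $\Xi'(x)\subset\Xi(x)\subset\Xi''(x)$ with radii $\delta$ and $2\delta$; and you do not need (and should not rely on) high-probability concentration of $|\calX|$ --- since $Q_G,Q_H$ are part of the encoding, the existence of one good pair, i.e.\ positive probability, suffices.
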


Conditions 1--3 in the statement of the theorem are the same as in the Main Structural Property. Note that we set $\beta$ to
$200K$  in the algorithm, and thus $\beta$ depends on $K$. In this theorem, we prove that a universal constant $K$ exists
that works for every $\beta > 1$ and $D_n > \sqrt{\log n}$, particularly,
for $\beta = 200K$, $\alpha = 50\beta$, and $D_n = \max\{D_{ARV}, \alpha\}$. We do not assume
that $G$ has a planted cut i.e., some edges in $G$ may cross the cut $(L_G, R_G)$.

In the proof, we use the notion of prefix Kolmogorov complexity. We denote the complexity of $x$ given $G$ and $H$ by%
\footnote{In the standard notation our $KP(x)$ is $KP(x\given G,H)$.}
 $KP(x)$. The reader may interpret 
the statement $KP(x)=\ell$ as follows: we can encode $x$ using $\ell$ bits, such that this encoding can be uniquely decoded to $x$ given $G$ and $H$. 
Particularly, 
$KP(\pi)$ is the complexity of the bijection $\pi \in \Pi_{LR}$ i.e., $KP(\pi)$ is the number of bits required to store $\pi$.

\subsection{Proof of the Main Structural Theorem}

\begin{proof}
Fix $\eta=2^{-t}$ and an SDP solution $\varphi$. Let $\gamma = |\calS|/(dn)$. We show that if $\gamma > K \eta$ (for some  constant $K$), then 
the permutation $\pi$ can be encoded with a binary string of length less than $\bLog |\Pi_{LR}|-2\bLog n$. In other words, the prefix Kolmogorov complexity
of $\pi$ is at most $\bLog |\Pi| - 2 \bLog n$. This is an unlikely event for a random $\pi$ sampled 
from $\Pi_{LR}$ uniformly. So we conclude that $|\calS|\geq K\eta dn$ with small probability.

To construct the encoding we need to identify a set of vertices $x\in \calX$ for which the description of $\pi(x)$ is short. 
Denote the set of vertices whose degrees are in the range $[2^i, 2^{i+1})$ in $H$ by $\calU_i$:
$$\calU_i=\{x\in V_H: \deg (x, H) \in [2^i,2^{i+1}-1]\}.$$
Let $\calE_i=\{(x,y)\in E_H: x\in \calU_i\}$ and let $\calS_i = \{(x,y)\in \calS: x\in \calU_i\}$. The pairs $(x,y)$ in $\calE_i$ and in $\calS_i$
are ordered pairs. Let $\lambda_i = |\calE_i|/(dn)$, and $\gamma_i = |\calS_i|/|\calE_i|$. Then,
\begin{equation}\label{eq:sumLambda}
\gamma = \sum_i \lambda_i \gamma_i.
\end{equation}
Note that $|E_H|\leq dn/2$ (by the definition of $d$), thus
$\sum \lambda_i \leq 1$. Consider the set of indices 
$I=\{i:\lambda_i\gamma_i\geq \gamma/(2\bLog n)\}$.
We have 
$$\sum_{i\in I} \lambda_i\gamma_i\geq \gamma - \sum_{i:\lambda_i \gamma_i\leq \gamma/(2\bLog n)}\lambda_i \gamma_i\geq \gamma/2.$$
We pick one $i\in I$ with $\gamma_i\geq \gamma/2$.

To encode $\pi$, we need to store the embedding $\varphi$. However, we cannot afford to store the whole embedding, so we only encode the 
embeddings of two subsets $Q_G\subset V_G$ and $Q_H\subset V_H$ of size at most $3qn$, where $q=D_n/2^i$. For every $x\in \calU_i$, let $x'$ be the first
element in $Q_H$ according to the order in $Q_H$ that is a neighbor
of $x$ in the graph $H$ i.e. $(x,x')\in E_H$. If $x'$ exists, then we define two sets $\Xi'(x)$ and $\Xi''(x)$ as follows: 
\begin{align*}
\Xi'(x)&=\{u: |Q_G\cap N_G(u)\cap \{v: \|\varphi(v)-\varphi(\pi(x'))\|^2 \leq \delta\}| \geq q\beta d / 2\};\\
\Xi''(x)&=\{u: |Q_G\cap N_G(u)\cap \{v: \|\varphi(v)-\varphi(\pi(x'))\|^2 \leq 2\delta\}| \geq q\beta d/2\}.
\end{align*}
The only difference in the definitions of $\Xi'(x)$ and $\Xi''(x)$ is that the radius of the ball around $\varphi(\pi(x'))$ is $\delta$ for $\Xi'(x)$ 
and $2\delta$ for $\Xi''(x)$. Note that $\Xi'(x)\subset \Xi''(x)$. Consider the set $\calX$ of vertices $x$ for which the following conditions hold:
\begin{enumerate}
\item $x'$ is defined and $(x,x')$ is a $\nicefrac{\delta}{2}$-short edge w.r.t. $\varphi$;
\item $\pi(x)\in \Xi'(x)$;
\item $|\Xi''(x)|\leq 4 \eta n$.
\end{enumerate}

We show that for the right choice of $Q_G$ and $Q_H$, the set $\calX$ is sufficiently large. 
\begin{lemma}\label{lem:existQ}
There exist sets $Q_G\subset V_G$ and $Q_H\subset V_H$ such that $|\calX|\geq c \gamma_i |\calU_i|$ (for some absolute constant~$c$).
\end{lemma}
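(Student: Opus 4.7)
The plan is to use the probabilistic method: sample $Q_G\subset V_G$ and $Q_H\subset V_H$ by including each vertex independently with probability $q=D_n/2^i$, and equip $Q_H$ with an independent uniformly random total ordering. I will show that $\mathbb{E}[|\calX|]\geq c\gamma_i|\calU_i|$ for an absolute constant $c>0$. Combined with a Chernoff bound ensuring $|Q_G|,|Q_H|\leq 3qn$ with high probability (and noting that the lemma is vacuous when $\gamma_i|\calU_i|$ is below a constant, since $|\calX|$ is a non-negative integer), a standard probabilistic-method argument then yields a deterministic realization $Q_G,Q_H$ with both properties. Crucially, the randomness in $(Q_H,\text{ordering})$ is independent of the randomness in $Q_G$, and only the unordered set $Q_G$ enters the definition of $\calX$.

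Fix $x\in\calU_i$ and let $T(x)=\{y\in N_H(x):(\pi(x),\pi(y))\in\calS\}$, so that $\sum_{x\in\calU_i}|T(x)|=|\calS_i|=\gamma_i|\calE_i|\geq 2^i\gamma_i|\calU_i|$. For each $y\in T(x)$, the event $\{x'=y\}$ depends only on $(Q_H,\text{ordering})$ while the rest depends on $Q_G$ alone (given $x'=y$), so
\begin{equation*}
\Pr[x'=y,\,x\in\calX]=\Pr[x'=y]\cdot\Pr\bigl[\pi(x)\in\Xi'(x),\,|\Xi''(x)|\leq 4\eta n\bigm|x'=y\bigr].
\end{equation*}
The first factor is $\geq 1/(2\deg(x,H))$: coupling $Q_H=\{v:Z_v<q\}$ with i.i.d.\ $Z_v\sim\text{Uniform}[0,1]$, conditional on $\min_{v\in N_H(x)}Z_v<q$ (which holds with probability $\geq 1-e^{-q\deg(x,H)}\geq 1-e^{-D_n}\geq 1/2$) exchangeability makes the argmin uniform on $N_H(x)$.

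The crux is lower bounding the second factor by an absolute constant, given $x'=y\in T(x)$. For $\pi(x)\in\Xi'(x)$: condition 3 of $\calS$ gives $\short_{\varphi,\nicefrac{\delta}{2}}(\pi(x),G)\geq\beta d$, and combining condition 1 ($\|\varphi(\pi(x))-\varphi(\pi(y))\|^2\leq\nicefrac{\delta}{2}$) with the $\ell_2^2$-triangle inequality~(\ref{SDP:triangle}) places every $\nicefrac{\delta}{2}$-short $G$-neighbor of $\pi(x)$ inside $\Ball_\varphi(\pi(y),\delta)$; hence $|N_G(\pi(x))\cap\Ball_\varphi(\pi(y),\delta)|\geq\beta d$ and a Chernoff bound yields $\Pr[\pi(x)\in\Xi'(x)\mid x'=y]\geq 1-e^{-\Omega(q\beta d)}\geq 3/4$ in the parameter regime of interest. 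For $|\Xi''(x)|\leq 4\eta n$: a term-by-term Markov inequality gives, for every $u$, $\Pr[u\in\Xi''(x)\mid x'=y]\leq 2\min\{|N_F(u)\cap\Ball_\varphi(\pi(y),2\delta)|,\beta d\}/(\beta d)$, so by linearity
\begin{equation*}
\mathbb{E}\bigl[|\Xi''(x)|\bigm|x'=y\bigr]\leq\frac{2M_{\beta d}(\Ball_\varphi(\pi(y),2\delta))}{\beta d}\leq 2\eta n
\end{equation*}
by condition 2 of $\calS$; a further Markov step gives $\Pr[|\Xi''(x)|\leq 4\eta n\mid x'=y]\geq 1/2$. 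A union bound on complements yields joint probability $\geq 1/4$.

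Summing $\Pr[x'=y,\,x\in\calX]\geq 1/(8\deg(x,H))\geq 2^{-(i+4)}$ over $y\in T(x)$ and then over $x\in\calU_i$ produces $\mathbb{E}[|\calX|]\geq |\calS_i|/2^{i+4}\geq \gamma_i|\calU_i|/16$, as desired. The main delicate point is ensuring the Chernoff bound on $\Pr[\pi(x)\in\Xi'(x)\mid x'=y]$ remains nontrivial across all relevant scales of $2^i$: this is where the hypotheses $D_n>\sqrt{\log n}$ and $d\geq D_{ARV}D_n\log_2^2 n$ are invoked, guaranteeing that $q\beta d$ exceeds a suitable constant in the non-vacuous regime where $\gamma_i|\calU_i|\geq\Omega(1)$; the probabilistic-method combination with the Chernoff bound on $|Q_G|,|Q_H|$ then produces the desired $Q_G,Q_H$.
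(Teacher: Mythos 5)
Your proof follows the same route as the paper (random $Q_G,Q_H$ with inclusion probability $q=D_n/2^i$, a random ordering defining $x'$, conditioning on $x'=y$, Markov for $\Xi''$, and an expectation-to-existence step), but it has a genuine gap at the one place where the paper has to work: the lower bound on $\Pr(\pi(x)\in\Xi'(x)\given x'=y)$. You use only the $\beta d$ part of condition 3 of $\calS$, so all you know is that the set $B(x)$ of $\nicefrac{\delta}{2}$-short $G$-neighbors of $\pi(x)$ has size at least $\beta d$, and your Chernoff bound $1-e^{-\Omega(q\beta d)}\geq \nicefrac{3}{4}$ needs $q\beta d=\beta D_n d/2^i$ to exceed a constant. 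Nothing in the hypotheses forces this: degrees in $H$ can be as large as $n-1$ while $d$ is only $\Omega(\log^3 n)$, so $2^i$ can be far larger than $D_n d$. Your fallback claim that the ``non-vacuous regime'' rescues it is false: take, say, $2^i\approx d\log n$ and $|\calU_i|\approx n/\log n$ with $\gamma_i$ a constant; then $\gamma_i|\calU_i|\gg 1$ (and note the lemma is never vacuous once $\gamma_i|\calU_i|>0$, since it then demands $|\calX|\geq 1$), yet $q\beta d=\beta D_n/\log n=o(1)$, your success probability for the $\Xi'$ event degenerates to $O(q\beta d)=o(1)$, and you only get $\E|\calX|=o(\gamma_i|\calU_i|)$, which is too weak for the encoding argument the lemma feeds into.

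The missing ingredient is the second term in the maximum in condition 3: $\short_{\varphi,\nicefrac{\delta}{2}}(\pi(x),G)\geq \max\{\beta d,\deg(\pi(x),H)/D_n\}\geq 2^i/D_n$, so that $q|B(x)|\geq (D_n/2^i)\cdot(2^i/D_n)=1$. The paper then replaces Chernoff by a median argument: the median of $|B(x)\cap Q_G|\sim\mathrm{Bin}(|B(x)|,q)$ is at least $\rounddown{q|B(x)|}$, and $\rounddown{q|B(x)|}\geq q\beta d/2$ whenever $q|B(x)|\geq 1$ and $|B(x)|\geq\beta d$; hence $\Pr(|B(x)\cap Q_G|\geq q\beta d/2)\geq \nicefrac{1}{2}$ uniformly over all degree scales $2^i$, regardless of whether $q\beta d$ is tiny. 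This is precisely why the threshold in the definition of $\Xi'$ is $q\beta d/2$ and why condition 3 carries the $\deg(u,H)/D_n$ term. With that substitution (and the constants in your union bound adjusted accordingly, e.g.\ by taking the $\Xi''$ threshold consistent with $\E|\Xi''(x)|\leq 2\eta n$), the remainder of your argument — uniformity of $x'$ over $N_H(x)$, the Markov step from condition 2, the disjoint sum over $y\in T(x)$, and extracting a fixed pair $(Q_G,Q_H)$ with $|Q_G|,|Q_H|\leq 3qn$ — coincides with the paper's proof.
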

\begin{proof}
Let $Q_G$ and $Q_H$ be random subsets of $V_G$ and $V_H$ such that every $u\in V_G$ belongs to $Q_G$,
and every $x\in V_H$ belongs to $Q_H$ with probability $q$. Then, by Chernoff's bound, the probability that $|Q_G|,|Q_H|\leq 3qn$ is at least 
$$1-2e^{-qn}\geq 1 - 2e^{-D_n}\geq 1 - 2e^{-\sqrt{\log n}}.$$ 
We now estimate the expected size of $\calX$. 

The degrees of all vertices in $\calE_i$ are at least $2^{i}$. Hence, the probability that a vertex $x\in \calU_i$ does not have neighbors in $Q_H$ is 
at most $(1-D_n/2^i)^{2^i}\ll \nicefrac{1}{4}$.
Consequently, for every $x\in \calU_i$, $x'$ is defined with probability at least $\nicefrac{3}{4}$. The 
edge $(x,x')$ belongs to $\calS_i$ with probability $|N_H(x)\cap \calS_i|/|N_H(x)|\geq |N_H(x)\cap \calS_i|/2^{i+1}$.
Note that the event $(x,x')\in \calS_i$ depends only on the set $Q_H$. Let us condition on $Q_H$ and assume that $(x,x')\in \calS_i$.

Let $B(x)$ be the set of neighbors of $\pi(x)$ in $G$ connected to $\pi(x)$ via $\nicefrac{\delta}{2}$-short edges. 
Since $(x,x')\in S$, the set $B(x)$ has at least $\max\{\beta d, 2^i/D_n\}$ vertices (by condition 3 of the definition of $\calS$).
The size of $B(x)\cap Q_H$ is distributed as a Binomial distribution with parameters $|B(x)|$ and $q$. The median of
the distribution is at least $\rounddown{q|B(x)|}$. Note that $\rounddown{q|B(x)|}\geq 1$, since $|B(x)| \geq 2^i/D_n$.
Hence, $\rounddown{q|B(x)|}\geq q\beta d /2$. Therefore,
$\Pr(|B(x)\cap Q_H|\geq q\beta d/2)\geq \nicefrac{1}{2}$.
The distance from $\pi(x)$ to $\pi(x')$ is at most $\nicefrac{\delta}{2}$, hence,
$B(x) \subset \{v\in V_G:\|\varphi(v)-\varphi(\pi(x'))\|^2 \leq \delta\}$. Thus, if $|B(x)\cap Q_G(x)|\geq q\beta d/2$,
then  $\pi(x)\in \Xi'(x)$. We get 
\begin{equation}\label{eq:boundPi1}
\Pr(\pi(x) \in \Xi'(x)\Given (x,x')\in \calS_i)\geq \frac{1}{2}.
\end{equation}

We now estimate $\Pr(|\Xi''(x)|\leq 8\eta n\Given (x,x')\in \calS_i)$. By Markov's inequality
the probability that $u\in \Xi''(x)$ (over a random $Q_G$ and fixed $x'$) is bounded by 
$$\Pr(u \in \Xi''(x)) \leq \min\Big\{\frac{|N_G(u)\cap \Ball_{\varphi} (\pi(x'), 2\delta)|}{\beta d/2},1\Big\}.$$
The expected size of $\Xi''(x)$ is bounded by 
$$\Exp\, |\Xi''(x)| \leq \frac{1}{\beta d}\;\sum_{u\in V_G} \min\big\{2|N_G(u)\cap \Ball_{\varphi} (\pi(x'), 2\delta)|,\beta d\big\}.$$
Now if $(x,x')\in \calS_i$, then by the definition of $\calS$, $M_{\beta d}(\Ball_{\varphi} (\pi(x'), 2\delta))\leq \eta n$. Thus,
\begin{eqnarray*}
\sum_{u\in V_G} \min\{2|N_G(u) \cap \Ball_{\varphi} (\pi(x'), 2\delta)|,\beta d\}&\leq&
2\sum_{u\in V_G} \min\{|N_F(u) \cap \Ball_{\varphi} (\pi(x'), 2\delta)|,\beta d\} \\ 
&\equiv& 2M_{\beta d}(\Ball_{\varphi} (\pi(x'), 2\delta))\\
&\leq& 2\eta \beta d n .
\end{eqnarray*}
We obtain the bound $\Exp\, |\Xi''(x)|\leq \eta n$. Applying Markov's inequality, we get
$$\Pr(|\Xi''(x)|\geq 4\eta n\Given (x,x')\in \calS)\leq \frac{1}{4}.$$
Combining this inequality with (\ref{eq:boundPi1}), we obtain the following bound:
$$\Pr(\pi(x)\in \Xi'(x) \text{ and }|\Xi''(x)|\leq 8\eta n\Given (x,x')\in \calS)\geq \frac{1}{4},$$
which is equivalent to 
$$\Pr(\pi(x)\in \Xi'(x);\; |\Xi''(x)|\leq 8\eta n;\;(x,x')\in \calS)\geq \frac{\Pr((x,x')\in \calS)}{4}.$$
We conclude that the expected size of $\calX$ is lower bounded by 
\begin{eqnarray*}
\Exp\;|\calX|&\geq& \sum_{x\in \calU_i}\frac{\Pr((x,x')\in \calS)}{4}\geq
\sum_{x\in \calU_i}  \frac{3|N_H(x)\cap \calS_i|}{16\cdot 2^{i+1}}
= \frac{|\calS_i|}{11\cdot 2^{i}}\\
&\geq&\frac{|\calS_i|}{11|\calE_i|}\;|\calU_i|=\frac{\gamma_i}{11}\;|\calU_i|.
\end{eqnarray*}
Since $|\calX|$ never exceeds $|\calU_i|$, we get $\Pr(|\calX|\geq (\gamma_i/22)\,|\calU|)\geq \gamma_i/22\geq 1/(44 D_{ARV})$.
This finishes the proof of the lemma.
\end{proof}

We now continue the proof of the Main Structural Theorem. We fix sets $Q_G$ and $Q_H$ satisfying the conditions of Lemma~\ref{lem:existQ}. 
We embed all vectors $\varphi(u)$ and $\varphi(\pi(x))$ for $u\in Q_G$ and $x\in Q_H$ in
a low dimensional space using the Johnson---Lindenstrauss transform. We pick the dimension and scaling
in a such way that if $\|\varphi(u) - \varphi(\pi(x))\|^2\leq \delta$ then $d(u,x)\leq \delta$; if $\|\varphi(u) - \varphi(\pi(x))\|^2> 2\delta$ then $d(u,x)> \delta$,
where $d(u,x)$ are the distances between the embedded vectors. In Lemma~\ref{lem:KP-d} in Appendix, we show
that such a distance function $d$ can be encoded using $O(qn \log n)$ bits (the function $d$ may not satisfy triangle inequalities). In other words, $KP(d(\cdot,\cdot))=O(q n\log n)$. 
We define the set $\Xi(x)$ as follows:
$$\Xi(x)=\{u\in V_G: |N_G(u)\cap \{v\in Q_G: d(v,x') \leq \delta\}| \geq q\beta d/2\}.$$
By our choice of $d$, we have $\Xi'(x)\subset \Xi(x)\subset \Xi''(x)$. Particularly, for $x\in \calX$, $\pi(x)\in \Xi(x)$ and $|\Xi(x)|\leq 4\eta n$.
Note that $\Xi(x)$ depends only on the graphs $G$, $H$, the sets $Q_G$, $Q_H$ and the distance function $d$. It does not depend on
the permutation $\pi$.

We now show how to encode the pair $(\calX,\pi|_\calX)$ (here $\pi|_{\calX}$ is the restriction of $\pi$ to $\calX$). We first encode $i$ using $\roundup{\bLog\bLog n}$ bits, 
then we encode $|\calX|$ using $\roundup{\bLog n}$ bit. We encode $\calX\subset \calU_i$ using 
$\roundup{\bLog\binom{|\calU_i|}{|\calX|}|}\leq \bLog\binom{|\calU_i|}{|\calX|}|+1$
bits. We can do so, since there are $\binom{|\calU_i|}{|\calX|}|$ subsets of $\calU_i$ of size
$|\calX|$. Note, that
$$\bLog\binom{|\calU_i|}{|\calX|} + 1\leq \bLog \left( \frac{e|\calU_i|}{|\calX|} \right)^{|\calX|} + 1
<|\calX|\bLog \left(\frac{|\calU_i|}{|\calX|}\right) + 2|\calX|.$$
We denote $\gamma' = |\calX|/|\calU_i|$. By Lemma~\ref{lem:existQ}, $\gamma'\geq c\gamma_i \geq c \gamma/2$.
We encode $Q_G$ and $Q_H$ using $O(qn\log n)$ bits. Finally, for every $x\in \calX$, we encode the index of $\pi(x)$ in $\Xi(x)$ using
$\roundup{\bLog |\Xi(x)|}\leq \bLog{(16\eta n)}$ bits. Altogether we use at most
\begin{align*}
KP((\calX,\pi|_{\calX}))&\leq \roundup{\bLog\bLog n} + \roundup{\bLog n} + (|\calX|\bLog (1/\gamma') + 2|\calX|) + O(qn\log n)
+ |\calX| \cdot \bLog (16\eta n)\\
&\leq |\calX|\bLog (16\eta n/\gamma') + 2|\calX| + 2\roundup{\bLog n} + O(qn\log n)
\end{align*}
bits. Lemma~\ref{lem:conditional-kp}, which we prove below, shows that we can extend the encoding of $(\calX,\pi|_\calX)$ to the encoding  
of $\pi$ using extra $\bLog|\Pi_{LR}| - |\calX|\bLog n +3|\calX|$ bits. Hence,
the total number of bits we need is
\begin{align*}
KP(\pi)&\leq KP(\calX\Given (\calX,\pi|_\calX)) + KP((\calX,\pi|_\calX)) + O(1)\\
&\leq \big(\bLog|\Pi_{LR}| - |\calX|\bLog n +3|\calX|\big) + \big(|\calX|\bLog (16\eta n/\gamma') + 2|\calX| + 2\roundup{\bLog n}+O(qn\log n)\big)\\
&\leq \bLog|\Pi_{LR}| + |\calX|\bLog (16\eta/\gamma') + 5|\calX| + 2\roundup{\bLog n}+O(qn\log n)\\
&\leq \bLog|\Pi_{LR}| - |\calX|\bLog (c' \gamma/\eta) + O(qn\log n) + 2\roundup{\bLog n}.
\end{align*}
for some constant $c'$. Here we used that $\gamma'\geq c\gamma_i\geq c\gamma/2$. To finish the proof of the Main Structural Theorem,
we need to show that $|\calX|\geq \Omega (\max\{qn\log n,\log n\})$ (see Claim~\ref{cl:sizeX}). This would imply that for a sufficiently large constant $K$, if $\gamma/\eta >K$,
then 
$$|\calX|\bLog (c' \gamma/\eta) - O(qn\log n) -2\roundup{\bLog n} > |\calX|\bLog (c'K)  - O(qn\log n) -2\roundup{\bLog n}> 2\bLog n.$$
Hence , if $\gamma/\eta >K$, then
$$KP(\pi)\leq \bLog|\Pi_{LR}| - 2\bLog n.$$
Therefore, $\gamma/\eta >K$ with probability at most $n^{-2}$ (since the number of $\pi$'s with 
prefix Kolmogorov complexity smaller than $\bLog|\Pi_{LR}| - 2\bLog n$ is 
at most $2^{\bLog|\Pi_{LR}| - 2\bLog n}\leq |\Pi_{LR}|/n^{2}$).

\begin{claim}\label{cl:sizeX} The following bound holds.
$$|\calX|\geq \Omega (\max\{qn\log n,\log n\}).$$
\end{claim}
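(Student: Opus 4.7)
The plan is to chain the conclusion of Lemma~\ref{lem:existQ} with an elementary double-counting lower bound on $|\calU_i|$, and then feed in the two parameter relations already on the table at this point in the proof of the Main Structural Theorem.

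\textbf{Step 1: set up the chain.} Lemma~\ref{lem:existQ} provides $|\calX| \geq c\gamma_i|\calU_i|$ for an absolute constant $c$. Since every vertex of $\calU_i$ has $H$-degree strictly less than $2^{i+1}$, we have $|\calE_i| < 2^{i+1}|\calU_i|$, and therefore $|\calU_i| \geq \lambda_i dn/2^{i+1}$. Substituting $q = D_n/2^i$ and the theorem's standing assumption $d \geq D_{ARV} D_n\log_2^2 n$ rewrites this as
\[
|\calU_i| \;=\; \Omega\bigl(\lambda_i\, qn\, D_{ARV} \log_2^2 n\bigr),
\]
and combining with Lemma~\ref{lem:existQ} gives
\[
|\calX| \;=\; \Omega\bigl(\gamma_i\lambda_i \cdot qn\, D_{ARV}\log_2^2 n\bigr).
\]

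\textbf{Step 2: plug in the two parameter relations.} The two inputs we need have both been proved earlier in the argument: (i) our choice $i\in I$ gives $\lambda_i\gamma_i \geq \gamma/(2\log_2 n)$, trading one logarithm; (ii) we are arguing by contradiction with $\gamma > K\eta$, and since $\eta = 2^{-t} \geq 2^{-T} = \Omega(1/D_{ARV})$ for $t \leq T = \roundup{\log_2 D_{ARV}}$, this upgrades to $\gamma = \Omega(K/D_{ARV})$. Multiplying cancels the $D_{ARV}$ factor and reduces the estimate to $|\calX| = \Omega(K\, qn\log n)$, which is the desired $\Omega(qn\log n)$.

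\textbf{Step 3: the $\log n$ branch of the max.} Any degree index obeys $2^i \leq n$, so $qn = D_n n/2^i \geq D_n > \sqrt{\log n}$ by the hypothesis on $D_n$. Hence $qn\log n \geq \log n$ for all sufficiently large $n$, and the bound from Step~2 already dominates the $\log n$ term in the maximum.

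I do not foresee a genuine obstacle here; the claim is really bookkeeping of the constants already accumulated. The only point that has to be tracked carefully is that the hidden constant in $|\calX| = \Omega(qn\log n)$ grows \emph{linearly} in $K$: this is exactly what allows the encoding savings $|\calX|\log_2(c'K) - O(qn\log n) - 2\log_2 n$ at the end of the Main Structural Theorem to exceed $2\log_2 n$ once $K$ is chosen to be a sufficiently large absolute constant, giving the $KP(\pi) \leq \log_2|\Pi_{LR}| - 2\log_2 n$ contradiction.
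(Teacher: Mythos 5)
Your proof is correct and follows essentially the same route as the paper: chain Lemma~\ref{lem:existQ} with $|\calU_i|\geq|\calE_i|/2^{i+1}=\lambda_i dn/2^{i+1}$, then feed in $\lambda_i\gamma_i\geq\gamma/(2\log_2 n)$, $\gamma=\Omega(\eta)=\Omega(1/D_{ARV})$, and $d\geq D_{ARV}D_n\log_2^2 n$ to cancel the $D_{ARV}$ factor and obtain $\Omega(qn\log n)$. The only (immaterial) difference is the $\log n$ branch of the max, which the paper handles by a separate estimate $\gamma_i|\calU_i|=\Omega(\log^{3/2}n)$ while you simply observe $qn\geq D_n\geq 1$, so the $qn\log n$ term already dominates.
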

\begin{proof}
We lower bound the size of $\calX$ as follows. By Lemma~\ref{lem:existQ}, $|\calX| = \Omega(\gamma' |\calU_i|) \geq \Omega(\gamma_i |\calU_i|)$.
Then, 
$$\gamma_i |\calU_i| \geq \frac{\gamma_i |\calE_i|}{2^{i+1}}= \frac{\gamma_i \lambda_i dn}{2^{i+1}}\geq \frac{\gamma dn}{2^{i+2}\bLog n}.$$
Here we used that $i\in I$, and thus $\lambda_i\gamma_i\geq \gamma/(2\bLog n)$. Since $2^i\leq n$, $d=\Omega (\log^3n)$ and 
$\gamma \geq \eta = \Omega(1/D_{ARV}) = \Omega(1/\sqrt{\log n})$, 
we have $\gamma_i|\calU_i|\geq\Omega(\log^{\nicefrac{3}{2}}n)$. Again using the bounds on $d$ and $\gamma$, we get
$\gamma d/\bLog n\geq D_n\bLog n$ and 
$$ \gamma_i |\calU_i| \geq  \frac{\gamma dn}{2^{i+3}\bLog n}\geq  \Big(\frac{D_n}{2^i}\Big) n\bLog n= \Omega(qn\log n).$$
\end{proof}
\end{proof}

\begin{lemma}\label{lem:conditional-kp}
Let $\pi\in \Pi_{LR}$ be a bijection from $V_H$ to $V_G$ mapping $L_H$ to $L_G$ and $R_H$ to $R_G$. Consider a subset $\calX\subset V_H$. Then,
$$KP(\pi)\leq KP((\pi|_{\calX},\calX)) + \bLog|\Pi_{LR}| - |\calX|\bLog n +3|\calX|+O(1).$$
That is, if the pair $(\pi|_{\calX},\calX)$ can be encoded using $KP((\pi|_{\calX},\calX))$ bits, then $\pi$ can be encoded using 
$\bLog|\Pi_{LR}| - |\calX|\bLog n +3|\calX|+O(1)$ bits.
\end{lemma}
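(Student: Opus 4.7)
\textbf{Proof plan for Lemma~\ref{lem:conditional-kp}.}

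The plan is to describe a prefix-free two-part encoding of $\pi$. The first part is the given encoding of the pair $(\pi|_\calX, \calX)$, using $KP((\pi|_\calX, \calX))$ bits. Since the decoder also has access to $G$ and $H$ (and hence to the partitions $V_G = L_G\cup R_G$, $V_H = L_H\cup R_H$), after reading this first part the decoder can recover $\calX$, $a_L := |\calX\cap L_H|$, $a_R := |\calX\cap R_H|$, $\pi(\calX)\cap L_G$, and $\pi(\calX)\cap R_G$. The remaining unknown is $\pi|_{V_H\setminus\calX}$, which must be an arbitrary pair of bijections $L_H\setminus \calX \to L_G\setminus \pi(\calX)$ and $R_H\setminus\calX\to R_G\setminus\pi(\calX)$. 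Writing $N = n/2$, there are exactly $(N-a_L)!\,(N-a_R)!$ such extensions, so the second part encodes the index of $\pi$ in this list using $\lceil \bLog((N-a_L)!(N-a_R)!)\rceil$ bits. Because $a_L,a_R$ are known to the decoder after part one, the length of part two is determined, and the concatenation is prefix-free with only $O(1)$ overhead.

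It then suffices to prove the purely arithmetic inequality
\[
\bLog\bigl((N-a_L)!(N-a_R)!\bigr) \;\leq\; \bLog(N!\,N!) \;-\; |\calX|\bLog n \;+\; 3|\calX|,
\]
or equivalently, $\bLog(N!/(N-a)!) \geq a\bLog n - 3a$ for every $0\leq a\leq N$ (applied with $a=a_L$ and $a=a_R$ and summed, using $a_L+a_R=|\calX|$). I would prove this by writing $N!/(N-a)! = a!\binom{N}{a}$ and using the two elementary bounds $a! \geq (a/e)^a$ and $\binom{N}{a}\geq (N/a)^a$, which multiply to give $N!/(N-a)! \geq (N/e)^a$. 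Taking logarithms and using $\bLog N = \bLog n - 1$ and $\log_2 e < 1.45$,
\[
\bLog(N!/(N-a)!) \;\geq\; a(\bLog N - \log_2 e) \;=\; a\bLog n - a(1+\log_2 e) \;\geq\; a\bLog n - 3a,
\]
as required.

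Combining the encoding with the arithmetic inequality yields
\[
KP(\pi) \;\leq\; KP((\pi|_\calX,\calX)) + \bLog\bigl((N-a_L)!(N-a_R)!\bigr) + O(1)
\;\leq\; KP((\pi|_\calX,\calX)) + \bLog|\Pi_{LR}| - |\calX|\bLog n + 3|\calX| + O(1),
\]
which is the statement of the lemma. There is no serious obstacle in this proof; the only mildly delicate point is verifying the arithmetic bound uniformly in $a\in[0,N]$ (naive bounds like ``each factor is $\geq N/2$'' fail once $a$ approaches $N$), which is handled cleanly by the $(N/e)^a$ estimate since $1+\log_2 e < 3$.
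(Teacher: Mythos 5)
Your proposal is correct and follows essentially the same route as the paper: a two-part prefix code that first records $(\pi|_\calX,\calX)$ and then the index of the remaining pair of bijections $L_H\setminus\calX\to L_G\setminus\pi(\calX)$, $R_H\setminus\calX\to R_G\setminus\pi(\calX)$ among the $(N-a_L)!\,(N-a_R)!$ possibilities, followed by a bound on the factorial terms. The only difference is cosmetic: the paper bounds $\bLog(m_L!)+\bLog(m_R!)$ by comparing $\sum\bLog i$ to an integral (getting a $\bLog n-\nicefrac{5}{2}$ per-element saving), whereas you bound $N!/(N-a)!\geq (N/e)^a$ via $a!\binom{N}{a}$ (getting $\bLog n-(1+\log_2 e)$), and both comfortably yield the stated $|\calX|\bLog n-3|\calX|$ term.
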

\begin{proof}
We first encode $\calX$ and $\pi|_{\calX}$ using $KP((\pi|_{\calX},\calX))$ bits. Then, we encode the restriction $\pi|_{{V_H} \setminus \calX}$. To do so,
we split the set $V_H\setminus \calX$ into two subsets $\bar\calX_L = L_H\setminus \calX$ and $\bar\calX_R = R_H\setminus \calX$. 
Let \rule{0pt}{2ex}$m_L=|\bar\calX_L|$ and $m_R=|\bar\calX_R|$.
The restrictions of $\pi$ to $\bar\calX_L$ and to $\bar\calX_R$ are bijections from $\bar\calX_L$ to
$L_G\setminus \pi(V_H)$ and
from $\bar\calX_R$ to $R_G\setminus \pi(V_H)$ respectively. Hence, we can encode $\pi|_{\bar\calX_R}$
and $\pi|_{\bar\calX_R}$ using $\roundup{\bLog{m_L!}}$ and $\roundup{\bLog{m_R!}}$ bits (given $\calX$ and $\pi(\calX)$).
In other words,
\begin{align*}
KP(\pi)&\leq KP ((\pi|_{\calX},\calX)) + KP(\pi|_{\calX_L}\Given \calX, \pi(\calX)) + KP(\pi|_{\calX_R}\Given \calX, \pi(\calX)) + O(1)\\
&\leq KP ((\pi|_{\calX},\calX)) + \bLog (m_L!) + \bLog (m_R!) + O(1).
\end{align*}
Here $KP(\pi|_{\calX_L}\Given \calX, \pi(\calX))$ and $KP(\pi|_{\calX_R}\Given \calX, \pi(\calX))$ are conditional prefix Kolmogorov complexities
of $\pi|_{\calX_L}$ and $\pi|_{\calX_R}$ given $(\calX, \pi(\calX))$. We estimate $\bLog{m_L!} + \bLog{m_R!}$ using Stirling's approximation:
\begin{align*}
\bLog{m_L!} &= \bLog ((n/2)!) - \sum_{i=m_L+1}^{n/2} \bLog i\leq \bLog (n/2)! - \int_{m_L}^{n/2}\bLog x \,dx\\
&= \bLog ((n/2)!) - \frac{n}{2} \bLog\frac{n}{2} + m_L \bLog m_L + \bLog e \cdot (\frac{n}{2}-m_L)\\
&\leq \bLog  ((n/2)!) -(\frac{n}{2}-m_L)(\bLog\frac{n}{2} - \nicefrac{3}{2}).
\end{align*}
Thus,
$$\bLog(m_L!) + \bLog(m_R!)\leq 
2\bLog ((n/2)!) - (n-m_L-m_R)(\bLog\frac{n}{2} - \nicefrac{3}{2})
=\bLog |\Pi_{LR}| - |\calX| (\bLog n - \nicefrac{5}{2}).
$$
\end{proof}

\subsection{Proof of the Structural Properties 2 and 3} \label{sec:StructProp23}
Structural Property 3 immediately follows from the assumption we made in Section~\ref{sec:prelim}, and thus it is always satisfied. We now show how to derive Property 2 from the Main Structural Theorem.

\begin{theorem}
Structural Property 2 holds with probability $1-o(1)$.
\end{theorem}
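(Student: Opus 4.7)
The plan is to reduce the statement to the Main Structural Theorem (Theorem~\ref{thm:MainStructuralTheorem}) by applying it with the roles of $G$ and $H$ swapped and with $\pi^{-1}$ playing the role of $\pi$. After all, Structural Property~2 is essentially the Main Structural Property with $G$ and $H$ interchanged in Conditions~1 and~3, together with the extra hypothesis (Condition~4) that $\deg(u,G)\leq\alpha d$. The main obstacle is that Theorem~\ref{thm:MainStructuralTheorem} requires the ``random-edge'' graph to have at most $d^{\ast}n/2$ edges, whereas $|E_G|$ has no a priori bound; I will use Condition~4 together with Structural Property~3 to sidestep this.

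Concretely, let $G'$ be the subgraph of $G$ obtained by keeping only edges incident on $\VGleqA$. By Structural Property~3, $|\VGleqA|\leq n/\alpha$, so $|E_{G'}|\leq \alpha d\cdot(n/\alpha)=dn$. I then apply Theorem~\ref{thm:MainStructuralTheorem} to the pair $(H,G')$, with random bijection $\pi^{\mathrm{new}}=\pi^{-1}$ (which is uniform on the analogous bijection set whenever $\pi$ is uniform on $\Pi_{LR}$), and with the substituted parameters $d^{\ast}=2d$, $\beta^{\ast}=\beta/2$, $D_n^{\ast}=D_n$. The preconditions $|E_{G'}|\leq d^{\ast}n/2$, $\beta^{\ast}>1$, $D_n^{\ast}>\sqrt{\log n}$, and $d^{\ast}\geq D_{ARV}D_n^{\ast}\log_2^2 n$ all follow from $\beta=200K$ and $d=\Omega(\log^3 n)$. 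Given any feasible $\varphi:V_G\to\bbR^n$, I set $\varphi'(x):=\varphi(\pi(x))$, which is a feasible SDP solution on $V_H$ by invariance of the SDP constraints under relabeling. The theorem then yields, with probability $\geq 1-1/n^2$ over $\pi$ and uniformly over all such $\varphi'$ (which, although $\pi$-dependent, fall within the class already quantified over) and all $\eta=2^{-t}$, a bound of $K\eta d^{\ast}n=2K\eta d n$ on the number of ordered pairs $(x,y)\in V_H\times V_H$ satisfying the three swapped conditions.

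What remains is to verify that every pair $(u,v)$ witnessing Structural Property~2 pulls back, under $(x,y)=(\pi^{-1}(u),\pi^{-1}(v))$, to a pair witnessing the swapped theorem. Condition~1 holds because $(u,v)\in E_G$ combined with $u\in\VGleqA$ (from Condition~4) gives $(u,v)\in E_{G'}$. For Condition~3, the translation is $\short_{\varphi,\delta/2}(u,H)\geq\max\{\beta^{\ast}d^{\ast},\deg(u,G')/D_n\}=\max\{\beta d,\deg(u,G')/D_n\}$; since $\deg(u,G')\leq\deg(u,G)\leq\alpha d$ and $D_n\geq\alpha$, this maximum equals $\beta d$, which is exactly what Condition~3 of Property~2 supplies. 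The subtlest step is Condition~2: the graph $F^{\mathrm{new}}=H\boxplus_{\pi^{-1}}G'$, when pulled back via $\pi$, is a subgraph of $F$ because $G'\subseteq G$, so pointwise $|N_{F^{\mathrm{new}}}(z)\cap \Ball_{\varphi'}(y,2\delta)|\leq |N_F(\pi(z))\cap\Ball_\varphi(v,2\delta)|$, giving $M^{\mathrm{new}}_{\beta d}(\Ball_{\varphi'}(y,2\delta))\leq M_{\beta d}(\Ball_\varphi(v,2\delta))\leq \eta\beta d n=\eta\beta^{\ast}d^{\ast}n$, as needed. Hence the number of edges counted by Property~2 is at most $2K\eta dn$, and absorbing the factor of $2$ into the universal constant $K$ yields the stated bound; no union bound over $\eta$ is necessary since Theorem~\ref{thm:MainStructuralTheorem} already provides a uniform guarantee, and the failure probability is $1/n^2=o(1)$.
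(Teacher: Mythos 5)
Your proposal is correct and follows essentially the same route as the paper: restrict $G$ to the subgraph $G'$ of edges incident on $\VGleqA$ (so $|E_{G'}|\leq dn$ by Structural Property~3), apply the Main Structural Theorem to $H\boxplus_{\pi^{-1}}G'$ with the roles of the two graphs swapped, and check that Conditions~1--4 of Property~2 imply the theorem's three conditions (using $\deg(u,G')\leq\alpha d$ and $D_n\geq\alpha$ to collapse the max to $\beta d$, and monotonicity of $M_{\beta d}$ under taking subgraphs). Your explicit reparametrization $d^{\ast}=2d$, $\beta^{\ast}=\beta/2$ to meet the $|E_{G'}|\leq d^{\ast}n/2$ hypothesis is in fact slightly more careful than the paper's treatment of this point, and the resulting factor of $2$ is harmlessly absorbed into the constant.
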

\begin{proof}
Consider a subset of edges $E'_G$  of the graph $G$. An edge $(u,v)$ belongs to $E'_G$ if one of the endpoints, $u$ or $v$,
belongs to $\VGleqA$ (see (\ref{eq:VGleqA}) for the definition of $\VGleqA$) i.e.,
$$E'_G = \{(u,v)\in E_G: u\in \VGleqA\}.$$
The degrees of vertices in $\VGleqA$ are upper bounded by $\alpha d$. The size of $\VGleqA$ is bounded by $n/\alpha$. Thus, the 
set $E'_G$ has at most $dn$ edges. Let $G'=(V_G,E_G')$ be the graph with edges $E_G'$. We now apply the Main Structural Theorem 
to the graph $F' = H\boxplus_{\pi^{-1}}G'$ i.e., we switch around $G$ and $H$. This is possible, since the graph $G'$ has at most $dn$ edges.
The theorem implies that for every $\varphi$ and $\eta=2^{-i}$,
there are at most $K \eta dn$ edges satisfying the conditions below:
\begin{enumerate}
\item $(u,v)$ is a $\nicefrac{\delta}{2}$-short edge in $E'_G$
\item $M^{F'}_{\beta d}(\Ball_{\varphi}(v,  2\delta))\leq \eta n$, here $M^{F'}_{\beta d}(\cdot)$ is defined as 
$M_{\beta d}(\cdot)$ in~(\ref{def:M}) but only for the graph $F'$.
\item $\short_{\varphi, \nicefrac{\delta}{2}} (u, H)\geq \max\{\beta d, \deg(u,H)/D_n\}$.
\end{enumerate}
The first condition is equivalent to conditions 1 and 4 of Property 2. The second condition is less restrictive than the second condition of Property 2,
because $M^{F'}_{\beta d}(\Ball_{\varphi}(u,  2\delta))\leq M_{\beta d}(\Ball_{\varphi}(u,  2\delta))$ (since all edges of $\pi (F')$
are also edges of $F$). Finally, the third condition above is equivalent to the third condition of Property 2, because
$$\max\{\beta d, \deg(u,H)/D_n\}\leq \max\{\beta d, \alpha d /D_n\} = \beta D_n.$$
\end{proof}

\subsection{Proof of the Structural Property 4} \label{sec:StructProp4}
We first prove a simple technical lemma.
\begin{lemma}\label{lem:prop4holds}
Consider a vertex weighted graph $H=(V_H,E_H)$ with weights $c_y:V_H\to [0,1]$. Let $T\subset V_H$ be a subset of $V_H$ of size $n'=n/2$. 
Fix an integer $k\leq n'/12$. Let $S$ be a random subset of $T$ of size $k$. Then, 
$$\Pr\Big\{\forall x\in V_H, \sum_{\substack{(x,y)\in E_H\\y\in V_H\setminus S}} c_y\leq \frac{4k}{n'}\sum_{\substack{(x,y)\in E_H\\y\in S}} c_y+2\log n\Big\}\geq 1 - o(1).$$
\end{lemma}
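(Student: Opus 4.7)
The plan is to prove the lemma by analyzing, for each fixed $x\in V_H$, the random variable $X_x := \sum_{y\in N_H(x)\cap S} c_y$ as a function of the random subset $S$, and then taking a union bound over the at most $n$ choices of $x$. Since $S$ is a uniformly random size-$k$ subset of $T$ and each $c_y\in[0,1]$, the conditional expectation is $\mathbb{E}[X_x]=(k/n')\sum_{y\in N_H(x)\cap T} c_y$. The sampling indicators $\{\mathbf{1}[y\in S]\}_{y\in T}$ are negatively associated (being indicators of a uniform size-$k$ subset of $T$), so the standard multiplicative Chernoff bound — equivalently, Serfling's inequality for sampling without replacement — applies and yields sharp two-sided concentration of $X_x$ around its mean.

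I would then choose the deviation parameter in the Chernoff bound so that the failure probability per $x$ is at most $n^{-3}$, which introduces an additive $O(\log n)$ slack. A union bound over the $n$ choices of $x\in V_H$ then guarantees, with probability $1-o(1)$, that $|X_x - \mathbb{E}[X_x]|$ is at most a constant times $\log n$ simultaneously for every $x\in V_H$. From this point on, the argument is purely deterministic.

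The stated inequality then follows by rearrangement. Writing $Z_x := \sum_{y\in N_H(x)\cap T}c_y$, the concentration bound gives two-sided control $X_x \in \bigl[(k/n')Z_x - O(\log n),\,(k/n')Z_x + O(\log n)\bigr]$. Combining this with the identity $Z_x = X_x + \sum_{y\in N_H(x)\cap (T\setminus S)}c_y$ and using the hypothesis $k\le n'/12$ (which makes $4k/n'$ comfortably larger than $k/n'$), one obtains a bound on the sum over the unsampled neighbors of $x$ in terms of $X_x$ with multiplicative constant at most $4k/n'$ and additive slack at most $2\log n$, which is the inequality claimed by the lemma and used in deriving Structural Property~4.

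The main obstacle will be calibrating the Chernoff slack so that the multiplicative constant in the final rearrangement is at most $4k/n'$ and the additive slack at most $2\log n$. The hypothesis $k\le n'/12$ is precisely what ensures this calibration works; the boundedness $c_y\in[0,1]$, together with $|T|=n'$, keeps the variance manageable so the additive tail fits cleanly into $2\log n$ with failure probability at most $n^{-3}$ per $x$, which is what drives the $1-o(1)$ bound after union-bounding over $V_H$.
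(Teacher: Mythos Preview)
Your proposal has a genuine gap, though it originates in a typo in the displayed lemma. As printed, the inequality bounds $\sum_{y\in N_H(x)\setminus S} c_y$ by $(4k/n')\sum_{y\in N_H(x)\cap S} c_y + 2\log n$; this is simply false. Take $c_y\equiv 1$ and let $x$ be adjacent to every vertex of $T$: the left side is $n'-k\ge 11n'/12$ while the right side is at most $(4k/n')k+2\log n\le n'/36+2\log n$. The intended statement --- which the paper's own proof actually establishes, and which the derivation of Structural Property~4 actually uses --- interchanges the roles of $S$ and $V_H\setminus S$:
\[
\sum_{\substack{(x,y)\in E_H\\y\in S}} c_y\ \le\ \frac{4k}{n'}\sum_{\substack{(x,y)\in E_H\\y\in V_H\setminus S}} c_y\ +\ 2\log n.
\]
Your rearrangement paragraph is exactly where this bites: you say that since $4k/n'$ is ``comfortably larger than $k/n'$'' the algebra works, but bounding the large unsampled sum by a $(4k/n')$-multiple of the small sampled sum would require $4k/n'$ to be of order $n'/k$, not $k/n'$. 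You took the displayed inequality at face value and hand-waved precisely the step that cannot go through.

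For the corrected statement your high-level plan (per-$x$ concentration of $X_x$ followed by a union bound) coincides with the paper's, but your concentration claim is also too strong. You assert $|X_x-\mu|=O(\log n)$ with failure probability $n^{-3}$; for a sum of $[0,1]$-valued terms with variance at most $\mu$, Chernoff/Bernstein only yields $|X_x-\mu|=O(\sqrt{\mu\log n}+\log n)$, and when $\mu$ is of order $k$ the fluctuation is $\Theta(\sqrt{k})\gg\log n$. The paper instead targets the event $X_x\ge 3\mu+\tfrac32\log n$ directly (this is what the corrected inequality unwinds to after using $k\le n'/12$), reduces sampling without replacement to sampling with replacement via Hoeffding, and applies Bernstein's inequality to get exponent $(2\mu+\tfrac32\log n)^2/(4\mu+\log n)\ge 2\log n$, hence probability at most $n^{-2}$ per $x$. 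Your negative-association route is a valid substitute for Hoeffding's reduction, but only once you replace the purely additive $O(\log n)$ slack by the multiplicative-plus-additive form $X_x\le 3\mu+O(\log n)$.
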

\begin{proof} We first bound the probability that 
\begin{equation}\label{ineq:DegInZ}
\sum_{\substack{(x,y)\in E_H\\y\in S}} c_y \geq \frac{4k}{n'}\sum_{\substack{(x,y)\in E_H\\y\in V_H\setminus S}}c_y+2\log n.
\end{equation}
for a fixed vertex $x\in V_H$. Let 
$$\tilde c_y=
\begin{cases}
c_y,&\text{if } (x,y)\in E_H;\\
0,&\text{otherwise.}
\end{cases}$$
Write the inequality~(\ref{ineq:DegInZ}) in terms of $\tilde c_y$:
$$
\sum_{\substack{y\in S}} \tilde c_y \geq \frac{4k}{n'}\sum_{y\in V_H\setminus S} \tilde c_y +2\log n
\geq \frac{4k}{n'}\sum_{y\in T\setminus S} \tilde c_y +2\log n,
$$
or, equivalently, 
$$
(1+\frac{4k}{n'})\sum_{\substack{y\in S}} \tilde c_y \geq \frac{4k}{n'}\sum_{y\in T} \tilde c_y +2\log n.
$$
We denote $\mu = \frac{k}{n'}\,\sum_{y\in T} \tilde c_y$. Then, 
\begin{align*}
\Pr\big((\ref{ineq:DegInZ})\text{ holds}\big) &\leq \Pr\Big(\sum_{\substack{y\in S}} \tilde c_y \geq \frac{n'}{n'+4k}(4\mu +2\log n)\Big)
\leq \Pr\Big(\sum_{\substack{y\in S}} \tilde c_y \geq 3\mu +\nicefrac{3}{2}\,\log n\Big)\\
&\leq \Pr\Big(\sum_{\substack{y\in S}} \tilde c_y -\mu\geq 2\mu +\nicefrac{3}{2}\,\log n\Big)
\end{align*}
here we used that $k/n'\leq 1/12$ and $n'/(n'+4k)\geq 3/4$. Let $S'$ be a random multiset sampled from $T$ with replacement such 
that each $y$ belongs to $S'$ with probability $k/n'$. Hoeffding \cite[Theorem 4]{hoeffding1963}
showed that 
$$\Pr\Big(\sum_{\substack{y\in S}} \tilde c_y -\mu\geq 2\mu +\nicefrac{3}{2}\,\log n\Big)
\leq 
\Pr\Big(\sum_{\substack{y\in S'}} \tilde c_y -\mu\geq 2\mu +\nicefrac{3}{2}\,\log n\Big).
$$
Observe that $\sum_{\substack{y\in S'}} \tilde c_y =\mu$; $\Var[\sum_{\substack{y\in S'}} \tilde c_y]\leq \mu$; and $c_y\in[0,1]$ for all $y\in T$. 
Thus, by Bernstein's  inequality,
\begin{eqnarray*}
\Pr\Big(\sum_{\substack{y\in S'}} \tilde c_y -\mu\geq 2\mu +\nicefrac{3}{2}\,\log n\Big)
&\leq& \exp\Big(-\frac{(2\mu +  \frac{3}{2} \log n)^2}{2\mu + \frac{2}{3}(2\mu +\frac{3}{2}\,\log n)}\Big)\\
&\leq& \exp\Big(-\frac{(2\mu +  \frac{3}{2} \log n)^2}{4\mu + \log n}\Big)\\
&\leq& \exp(-2\log n) = n^{-2}.
\end{eqnarray*}
By the union bound, the probability that the inequality~(\ref{ineq:DegInZ}) holds for some $x\in V_H$ is at most $1/n$. This 
concludes the proof.
\end{proof}
\

\begin{theorem}\label{thm:prop4holds}
Structural Property 4 holds with probability $1-o(1)$. I.e., for every $u\in V_F$,
\begin{equation}\label{eq:prop4ineq}
\sum_{\substack{v:(x,y)\in \pi E_H\\v\in \VHgeqB \cap \VGleqA}} \frac{\beta d}{\deg(v, H)}\leq 
\frac{8}{\alpha}\sum_{\substack{v: (x,y)\in \pi E_H\\v\in \VHgeqB\setminus \VGleqA}} \frac{\beta d}{\deg(v, H)}+4\log n.
\end{equation}
\end{theorem}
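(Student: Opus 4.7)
The plan is to apply Lemma~\ref{lem:prop4holds} twice---once to each half of the bipartition $V_H=L_H\cup R_H$---and combine the two bounds. The key structural observation is that because $\pi\in\Pi_{LR}$ is sampled uniformly, its restrictions $\pi_L:L_H\to L_G$ and $\pi_R:R_H\to R_G$ are independent uniform bijections. Consequently the set $A:=\pi^{-1}(\VGleqA)$ splits into two independent uniform random subsets $A\cap L_H\subset L_H$ and $A\cap R_H\subset R_H$ of fixed sizes $\ell_L:=|\VGleqA\cap L_G|$ and $\ell_R:=|\VGleqA\cap R_G|$. Structural Property~3 gives $\ell_L+\ell_R\le n/\alpha$, so each of $\ell_L,\ell_R$ is at most $n/\alpha\le n/24$ (recall $\alpha=50\beta\ge50$), meeting the hypothesis $k\le n'/12$ of Lemma~\ref{lem:prop4holds} with $n'=n/2$.

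Next I would apply the lemma on each side with weights supported only on that side. For the $L$-application take $T=L_H$, $S=A\cap L_H$, $k=\ell_L$, and $c_y=\beta d/\deg(y,H)$ when $y\in L_H$ and $\deg(y,H)\ge\beta d$, and $c_y=0$ otherwise (so $c_y\in[0,1]$). The single crucial choice is restricting the support of $c_y$ to $L_H$: this collapses the ``$V_H\setminus S$'' sum in the lemma's conclusion down to a sum over $(L_H\setminus A)\cap\mathcal{T}^H$, where $\mathcal{T}^H:=\{y\in V_H:\deg(y,H)\ge\beta d\}$. Translating via $x=\pi^{-1}(u)$, $v=\pi(y)$ and using $\VHgeqB=\pi(\mathcal{T}^H)$ and $\deg(\pi(y),H)=\deg(y,H)$, the lemma yields with probability $1-o(1)$ that, for every $u\in V_F$,
\[
\sum_{\substack{v:(u,v)\in\pi E_H\\ v\in\VHgeqB\cap\VGleqA\cap L_G}}\frac{\beta d}{\deg(v,H)}\;\le\;\frac{8\ell_L}{n}\sum_{\substack{v:(u,v)\in\pi E_H\\ v\in(\VHgeqB\setminus\VGleqA)\cap L_G}}\frac{\beta d}{\deg(v,H)}+2\log n,
\]
and a symmetric application with $T=R_H$, $S=A\cap R_H$, $k=\ell_R$ and weights supported on $R_H\cap\mathcal{T}^H$ gives the analogous inequality over $R_G$.

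Adding the two inequalities and using $\ell_L,\ell_R\le n/\alpha$ to bound both prefactors $8\ell_L/n$ and $8\ell_R/n$ by $8/\alpha$ (so $(8\ell_L/n)X_L+(8\ell_R/n)X_R\le(8/\alpha)(X_L+X_R)$) yields Property~4 with the required constants $8/\alpha$ and $4\log n$; a union bound over the two lemma invocations keeps the overall failure probability $o(1)$. The only place that needs care is the side-restriction of the weights: without it, the ``$V_H\setminus S$'' term would pick up a cross contribution from $A$-vertices on the opposite side, producing a self-referential inequality that forces a worse constant than $8/\alpha$.
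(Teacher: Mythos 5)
Your proposal is correct and follows essentially the same route as the paper's proof: split $\pi^{-1}(\VGleqA)$ into its $L_H$ and $R_H$ parts, apply Lemma~\ref{lem:prop4holds} once per side with weights $c_y=\beta d/\deg(y,H)$ supported on the high-degree vertices of that side (checking $|S_L|,|S_R|\leq n/\alpha\leq n'/12$ via Property~3), and add the two resulting inequalities, bounding each prefactor $8|S_L|/n$, $8|S_R|/n$ by $8/\alpha$ and accumulating $2\log n+2\log n=4\log n$. The only cosmetic difference is that you invoke independence of the two restrictions of $\pi$, which is not needed since a union bound over the two applications suffices, exactly as in the paper.
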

\begin{proof}
Consider two sets $S=\pi^{-1}(\VGleqA)$ and $U = \pi^{-1}(\VHgeqB) = \{y\in V_H:\deg (y, H)\geq \beta d\}$. The vertices of $S$ and $U$ belong to the graph $H$ and not to the graph $F$. We slightly abuse notation to denote by $\deg (y, H)$ the degree of $y$ in $H$ (previously we used this notation for $u\in V_F$). 
Note that the set $S$ is a random though non-completely uniform subset of $V_H$, but $U$ is not a random subset and does not depend on $\pi$.
We rewrite~(\ref{eq:prop4ineq}) as follows: for every $x\in V_H$,
\begin{equation}\label{eq:prop4ineqinH}
\sum_{\substack{y:(x,y)\in E_H\\y\in U \cap S}} \frac{\beta d}{\deg(y, H)}\leq 
\frac{8}{\alpha}\sum_{\substack{y: (x,y)\in E_H\\y\in U\setminus S}} \frac{\beta d}{\deg(v, H)}+4\log n.
\end{equation}
We split the set $S$ into two sets $S_L=S\cap L_H$ and $S_R=S\cap R_H$. 
We show that for every $x\in V_H$, the following two inequalities hold:
\begin{align}
\sum_{\substack{v:(x,y)\in \pi E_H\\v\in U_L \cap S_L}} \frac{\beta d}{\deg(v, H)}&\leq 
\frac{8}{\alpha}\sum_{\substack{v: (x,y)\in E_H\\v\in U_L\setminus S_L}} \frac{\beta d}{\deg(v, H)}+2\log n;
\label{eq:prop4Left}\\
\sum_{\substack{v:(x,y)\in \pi E_H\\v\in U_R \cap S_R}} \frac{\beta d}{\deg(v, H)}&\leq 
\frac{8}{\alpha}\sum_{\substack{v: (x,y)\in  E_H\\v\in U_R\setminus S_R}} \frac{\beta d}{\deg(v, H)}+2\log n,
\end{align}
which together imply~(\ref{eq:prop4ineqinH}) and~(\ref{eq:prop4ineq}). These inequalities are the same up to 
renaming of $L$ and $R$. So we consider only the first inequality. We set the weight of each vertex $y\in L_H$ to be
$$c_y=
\begin{cases}
\frac{\beta d}{\deg(y,H)},&\text{if } y\in U;\\
0,&\text{otherwise}.
\end{cases}
$$
Note that for all $y\in V_H$, we have $c_y\in[0,1]$ and 
$|S_L|=|\VGleqA\cap L_G|\leq |\VGleqA|\leq n/\alpha \leq n/24$. The set $S_L$ is a random subset of $L_H$ of size $|\VGleqA\cap L|$. Hence, by 
Lemma~\ref{lem:prop4holds}, 
$$\Pr\Big\{\forall x\in V_H, \sum_{\substack{(x,y)\in E_H\\y\in V_H\setminus S}} c_y\leq \frac{8|\VGleqA\cap L|}{n}\sum_{\substack{(x,y)\in E_H\\y\in S}} c_y+2\log n\Big\}\geq 1 - o(1).$$
This inequality implies~(\ref{eq:prop4Left}) since $8|S_L|/n\leq 8/\alpha$.
\end{proof}

\bibliographystyle{plain}
\bibliography{piemodel}

\begin{thebibliography}{10}

\bibitem{ARV}
Sanjeev Arora, Satish Rao, and Umesh Vazirani.
\newblock Expander flows, geometric embeddings and graph partitioning.
\newblock In {\em Proceedings of the {S}ymposium on {T}heory of {C}omputing},
  pages 222--231, 2004.

\bibitem{ABBSV}
Pranjal Awasthi, Maria-Florina Balcan, Avrim Blum, Or~Sheffet, and Santosh
  Vempala.
\newblock On {N}ash-equilibria of approximation-stable games.
\newblock In {\em SAGT}, pages 78--89, 2010.

\bibitem{Awasthietal}
Pranjal Awasthi, Avrim Blum, and Or~Sheffet.
\newblock Center-based clustering under perturbation stability.
\newblock {\em Inf. Process. Lett.}, 112(1-2):49--54, January 2012.

\bibitem{BBG}
Maria-Florina Balcan, Avrim Blum, and Anupam Gupta.
\newblock Approximate clustering without the approximation.
\newblock In {\em Proceedings of the Symposium on Discrete Algorithms}, pages
  1068--1077, 2009.

\bibitem{BM}
Maria-Florina Balcan and Mark Braverman.
\newblock Approximate {N}ash equilibria under stability conditions.
\newblock {\em CoRR}, abs/1008.1827, 2010.

\bibitem{BDLS}
Yonatan Bilu, Amit Daniely, Nati Linial, and Michael Saks.
\newblock On the practically interesting instances of maxcut.
\newblock In {\em Proceedings of the Symposium on Theoretical Aspects of
  Computer Science}, pages 526--537, 2013.

\bibitem{BL}
Yonatan Bilu and Nathan Linial.
\newblock Are stable instances easy?
\newblock In {\em ICS'10}, pages 332--341, 2010.

\bibitem{B88}
Ravi~B. Boppana.
\newblock Eigenvalues and graph bisection: {A}n average-case analysis.
\newblock In {\em Proceedings of the {S}ymposium on {F}oundations of {C}omputer
  {S}cience}, pages 280--285, 1987.

\bibitem{BCLS84}
Thang~Nguyen Bui, F.~Thomson Leighton, Soma Chaudhuri, and Michael Sipser.
\newblock Graph bisection algorithms with good average case behavior.
\newblock {\em Combinatorica}, 7:171--191, June 1987.

\bibitem{coja}
Amin Coja-Oghlan.
\newblock A spectral heuristic for bisecting random graphs.
\newblock {\em Random Structures \& Algorithms}, 29(3):351--398, 2006.

\bibitem{CK01}
Anne Condon and Richard Karp.
\newblock Algorithms for graph partitioning on the planted partition model.
\newblock In {\em Randomization, Approximation, and Combinatorial Optimization.
  Algorithms and Techniques}, volume 1671 of {\em Lecture Notes in {C}omputer
  {S}cience}, pages 221--232. Springer Berlin / Heidelberg, 1999.

\bibitem{DI98}
Tassos Dimitriou and Russell Impagliazzo.
\newblock Go with the winners for graph bisection.
\newblock In {\em Proceedings of the {S}ymposium on {D}iscrete {A}lgorithms},
  pages 510--520, 1998.

\bibitem{DF87}
M.~E. Dyer and A.~M. Frieze.
\newblock Fast solution of some random {NP}-hard problems.
\newblock In {\em Proceedings of the {S}ymposium on {F}oundations of {C}omputer
  {S}cience}, pages 331--336, 1986.

\bibitem{FK}
Uriel Feige and Joe Kilian.
\newblock Heuristics for semirandom graph problems.
\newblock {\em J. Comput. Syst. Sci.}, 63:639--673, December 2001.

\bibitem{fienberg1985statistical}
Stephen~E. Fienberg, Michael~M. Meyer, and Stanley~S. Wasserman.
\newblock Statistical analysis of multiple sociometric relations.
\newblock {\em J. Amer. Statist. Assoc.}, 80(389):51--67, 1985.

\bibitem{hoeffding1963}
Wassily Hoeffding.
\newblock Probability inequalities for sums of bounded random variables.
\newblock {\em J. Amer. Statist. Assoc.}, 58(301):13--30, 1963.

\bibitem{holland1983stochastic}
Paul~W Holland, Kathryn~Blackmond Laskey, and Samuel Leinhardt.
\newblock Stochastic blockmodels: First steps.
\newblock {\em Social networks}, 5(2):109--137, 1983.

\bibitem{JYSX}
Ali Jalali, Yudong Chen, Sujay Sanghavi, and Huan Xu.
\newblock Clustering partially observed graphs via convex optimization.
\newblock {\em arXiv preprint arXiv:1104.4803}, 2011.

\bibitem{JS}
Mark Jerrum and Gregory Sorkin.
\newblock Simulated annealing for graph bisection.
\newblock In {\em Proceedings of the {S}ymposium on {F}oundations of {C}omputer
  {S}cience}, pages 94--103, 1993.

\bibitem{KMM}
Alexandra Kolla, Konstantin Makarychev, and Yury Makarychev.
\newblock How to play unique games against a semi-random adversary.
\newblock In {\em Proceedings of the {S}ymposium on {F}oundations of {C}omputer
  {S}cience}, 2011.

\bibitem{Kumar99trawlingthe}
Ravi Kumar, Prabhakar Raghavan, Sridhar Rajagopalan, and Andrew Tomkins.
\newblock Trawling the web for emerging cyber-communities.
\newblock In {\em Computer Networks}, pages 1481--1493, 1999.

\bibitem{MMV}
Konstantin Makarychev, Yury Makarychev, and Aravindan Vijayaraghavan.
\newblock Approximation algorithms for semi-random partitioning problems.
\newblock In {\em Proceedings of the Symposium on Theory of Computing}, pages
  367--384, 2012.

\bibitem{MMV-ITCS}
Konstantin Makarychev, Yury Makarychev, and Aravindan Vijayaraghavan.
\newblock Sorting noisy data with partial information.
\newblock In {\em Proceedings of the 4th conference on Innovations in
  Theoretical Computer Science}, pages 515--528. ACM, 2013.

\bibitem{MMVstable}
Konstantin Makarychev, Yury Makarychev, and Aravindan Vijayaraghavan.
\newblock {Bilu---Linial} stable instances of {Max Cut} and {Minimum Multiway
  Cut}.
\newblock In {\em Proceedings of the Symposium on Discrete Algorithms}, pages
  890--906, 2014.

\bibitem{McSherry}
Frank McSherry.
\newblock Spectral partitioning of random graphs.
\newblock In {\em Proceedings of the {S}ymposium on {F}oundations of {C}omputer
  {S}cience}, pages 529--537, 2001.

\bibitem{Newman2006}
Mark Newman, Albert-Laszlo Barabasi, and Duncan~J. Watts.
\newblock {\em The Structure and Dynamics of Networks: (Princeton Studies in
  Complexity)}.
\newblock Princeton University Press, Princeton, NJ, USA, 2006.

\bibitem{RST}
Prasad Raghavendra, David Steurer, and Madhur Tulsiani.
\newblock Reductions between expansion problems.
\newblock In {\em Conference on Computational Complexity}, pages 64--73, 2012.

\bibitem{snijders1997estimation}
Tom~A.B. Snijders and Krzysztof Nowicki.
\newblock Estimation and prediction for stochastic blockmodels for graphs with
  latent block structure.
\newblock {\em Journal of Classification}, 14(1):75--100, 1997.

\bibitem{SVM02}
C.~R. Subramanian and C.~E.~Veni Madhavan.
\newblock General partitioning on random graphs.
\newblock {\em J. Algorithms}, 42(1):153--172, 2002.

\bibitem{white1976social}
Harrison~C White, Scott~A Boorman, and Ronald~L Breiger.
\newblock {S}ocial structure from multiple networks. {I}. {B}lockmodels of
  roles and positions.
\newblock {\em Am. J. Sociol.}, pages 730--780, 1976.

\end{thebibliography}

\appendix
\section{Proof of Lemma~\ref{lem:simpleBalancedCut}}
\begin{lemma}\label{lem:simpleBalancedCut}
Suppose that $|\VGleqA| \geq n/\alpha$. Consider the following algorithm for Balanced Cut: sort all vertices according to their degree in $F$,
let $L'$ be the $\roundup{n/(3\alpha)}$ vertices with least degrees and $R' = V_F\setminus L'$, return the cut $(L',R')$.
The algorithm return a $\Theta(1)$-balanced cut of cost $O(dn)$ with high probability.
\end{lemma}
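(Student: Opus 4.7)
My plan is to give an entirely deterministic argument (the ``with high probability'' in the statement is superfluous here, since both $|\VGleqA|\geq n/\alpha$ and $|E_H|\leq dn/2$ are deterministic conditions). The key observation is that since $\alpha$ is a constant, a Markov-type argument on $H$-degrees lets us find many vertices that are simultaneously low-degree in $G$ and in $H$, hence low-degree in $F$.

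First I would apply Markov's inequality to the $H$-degree sequence. Since $\sum_{u \in V_F} \deg(u,H) = 2|E_H| \leq dn$, the number of vertices $u$ with $\deg(u,H) > 3\alpha d$ is at most $dn/(3\alpha d) = n/(3\alpha)$. Define
\[
A \;=\; \{u \in \VGleqA : \deg(u,H) \leq 3\alpha d\}.
\]
Then $|A| \geq |\VGleqA| - n/(3\alpha) \geq n/\alpha - n/(3\alpha) = 2n/(3\alpha)$. For every $u \in A$,
\[
\deg(u,F) \;\leq\; \deg(u,G) + \deg(u,H) \;\leq\; \alpha d + 3\alpha d \;=\; 4\alpha d.
\]

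Next I would use that $|A| \geq 2n/(3\alpha) \geq \lceil n/(3\alpha)\rceil = |L'|$ (for $n$ large enough). Since $L'$ consists of the $|L'|$ vertices of least $F$-degree, and $A$ already provides at least $|L'|$ vertices of $F$-degree at most $4\alpha d$, it follows that every $u \in L'$ satisfies $\deg(u,F)\leq 4\alpha d$. Consequently the number of cut edges is bounded by the total degree of $L'$:
\[
|E(L',R')| \;\leq\; \sum_{u \in L'} \deg(u,F) \;\leq\; 4\alpha d \cdot |L'| \;=\; O(dn),
\]
since $\alpha$ is an absolute constant.

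Finally, for the balance: $|L'| = \lceil n/(3\alpha)\rceil = \Theta(n)$ and $|R'| = n - |L'| \geq (1 - 1/(3\alpha))n = \Theta(n)$, so $(L',R')$ is $\Theta(1)$-balanced. There is no real obstacle here; the only point that deserves care is verifying that the Markov bound leaves enough headroom inside $\VGleqA$ (which it does because we only take $n/(3\alpha)$ vertices for $L'$, a third of the guaranteed population in $\VGleqA$), and recalling that $|E_H|\leq dn/2$ follows directly from the choice of $d$ in Section~\ref{sec:prelim}.
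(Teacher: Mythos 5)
Your proof is correct, and it takes a genuinely different route from the paper. The paper's argument is probabilistic: it notes by Markov that at least half of the vertices of $H$ have $H$-degree at most $2d$, then uses the randomness of $\pi$ to argue that in expectation a $|\VGleqA|/n \geq 1/\alpha$ fraction of these land in $\VGleqA$, and invokes concentration (only sketched) to conclude that w.h.p.\ at least $n/(3\alpha)$ vertices of $F$ have degree at most $(\alpha+2)d$, which then bounds the degrees of all vertices in $L'$. You instead observe that the multiset of $H$-degrees of vertices of $F$ is the same for every bijection $\pi$, so Markov with the larger threshold $3\alpha d$ deterministically leaves at most $n/(3\alpha)$ vertices of high $H$-degree; subtracting these from $\VGleqA$ (of size at least $n/\alpha$) deterministically yields at least $2n/(3\alpha) \geq |L'|$ vertices of $F$-degree at most $4\alpha d$, so every vertex of $L'$ has degree $O(d)$ and the cut costs $O(dn)$. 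The trade-off is a slightly worse constant in the degree bound ($4\alpha d$ versus $(\alpha+2)d$) in exchange for eliminating the probabilistic step entirely, so your version holds for every $\pi$ and avoids the concentration argument the paper leaves implicit; both give the claimed $O(dn)$ bound and $\Theta(1)$ balance.
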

\begin{proof}
The cut $(L',R')$ is $1/(3\alpha)$ balanced, as required. We show that its cost is $\Theta(dn)$ w.h.p. Note that at least half of all vertices in $H$ have degree at most $2d$ by Markov's inequality. The permutation $\pi$ maps at least a $|\VGleqA| /n $ fraction of them to $\VGleqA$ in expectation. Thus the fraction of vertices in $F$ with degree at most $(\alpha +2) d$ is at least $(1/2) \cdot (|\VGleqA|/n) \geq 1/(2\alpha)$ in expectation. With high probability, there are at least $n/(3\alpha)$ vertices in $F$ of degree at most $(\alpha +2) d$. 
Then all vertices in $L$ have degrees at most $(\alpha +2) d$. Thus the cost of the cut $(L',R')$ is at most 
$(\alpha +2) d \times |L'| = (\alpha +2) d \cdot \rounddown{n/(3\alpha)} = O(dn)$.
\end{proof}

\section{Min Cut in Damage Control}\label{appendix:damage-control:min-cut}
In the Damage Control procedure, we solve a minimum cut problem in order to find $Y$ that maximizes~(\ref{eq:flowproblem}). Let us 
verify that the solution we obtain indeed maximizes $Y$. Consider an arbitrary cut 
$$(\{\text{``source''}\}\cup Y, \{\text{``sink''}\}\cup \bar Y).$$
This cut cuts all edges going from $Y$ to $\bar Y$. The capacity 
of these edges is $2|E_{F_3(t)}(Y,\bar Y)|$. Then, it cuts all edges going from the source to $\bar{Y}$. The capacity of these
edges equals $\budget(\bar{Y})$. Finally, it cuts all edges going from $Y$ to the sink. The capacity of these edges 
equals $2\beta d |Y|$. Thus, the total size of the cut equals
$$2|E_{F_3(t)}(Y,\bar Y)| + \budget(\bar{Y}) + 2\beta d |Y| = 2|E_{F_3(t)}(Y,\bar Y)| + \budget (V_{F_3(t)}) - \budget(Y) + 2\beta d |Y|.$$
The term $\budget (V_{F_3(t)})$ does not depend on the cut. Hence, the cut is minimized, when the expression~(\ref{eq:flowproblem})
is maximized.

\section{Proof of Lemma~\ref{lem:KP-d}}
We show that there exists a distance function $d:Q_G\times Q_H\to\bbR^+$ of small complexity that approximately preserves
balls of radius $\delta$.
\begin{lemma}\label{lem:KP-d}
There exists a function $d:Q_G\times Q_H\to\bbR^+$ such that 
$$KP(d\Given Q_G,Q_H\})=O(\max\{|Q_G|,|Q_H|\}\log n)$$ and for every $x\in Q_H$,
$$\{u\in Q_G: \|\varphi(u) - \varphi(\pi(x))\|^2  \leq \delta \}\subset \{u\in Q_G: d(u,x)\leq \delta\}\subset \{u\in Q_G: \|\varphi(u) - \varphi(\pi(x))\|^2  \leq 2\delta \}.$$
\end{lemma}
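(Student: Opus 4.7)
The plan is to use sign-based Johnson--Lindenstrauss (i.e., Goemans--Williamson random hyperplane sketches) to produce a compact binary sketch of each vector in question, from which an approximation $d$ with the required one-sided guarantees can be decoded.

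First I would reduce the task from comparing squared Euclidean distances to comparing angles. All vectors $\varphi(v)$ sit on the sphere of radius $1/\sqrt{2}$, so for any $v,w$ in $V := \{\varphi(u):u\in Q_G\}\cup\{\varphi(\pi(x)):x\in Q_H\}$ we have $\|v-w\|^2 = 1-\cos\theta(v,w)$, a strictly increasing function of the angle. In particular the thresholds $\delta$ and $2\delta$ correspond to angles $\theta_\delta := \arccos(1-\delta)$ and $\theta_{2\delta}:=\arccos(1-2\delta)$, and with $\delta = 1/12$ the gap $\Delta := \theta_{2\delta}-\theta_\delta$ is a positive absolute constant. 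Fix any small constant $\epsilon$ with $2\pi\epsilon < \Delta$.

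Second I would draw $r_1,\dots,r_k$ i.i.d.\ standard Gaussians in $\mathbb{R}^n$ with $k = C\log n/\epsilon^2 = O(\log n)$ for a sufficiently large constant $C$. By the Goemans--Williamson identity, for any fixed pair $v,w \in V$ the indicators $\mathbf{1}[\sgn\langle r_i,v\rangle \neq \sgn\langle r_i,w\rangle]$ are i.i.d.\ Bernoulli with mean $\theta(v,w)/\pi$. Chernoff gives failure probability at most $e^{-\Omega(k\epsilon^2)} \ll 1/|V|^2$ that the empirical fraction deviates from $\theta(v,w)/\pi$ by more than $\epsilon$. A union bound over the at most $|V|^2 \le (|Q_G|+|Q_H|)^2$ pairs shows that realizations $r_1,\dots,r_k$ exist for which
$$\Bigl|\tfrac{1}{k}\,\bigl|\{i:\sgn\langle r_i,v\rangle \neq \sgn\langle r_i,w\rangle\}\bigr| - \tfrac{\theta(v,w)}{\pi}\Bigr|\le \epsilon\quad \text{for every } v,w\in V.$$

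Third I would fix such $r_1,\dots,r_k$ and encode, for each $v\in V$, the $k$-bit sign vector $s(v) := (\sgn\langle r_i,v\rangle)_{i=1}^k$. The total description length is $|V|\cdot k = O(\max\{|Q_G|,|Q_H|\}\log n)$ bits, which is the targeted bound on $KP(d\mid Q_G,Q_H)$. Given these sketches, I define $d(u,x)=\delta$ if the Hamming distance between $s(\varphi(u))$ and $s(\varphi(\pi(x)))$ is at most $k(\theta_\delta/\pi + \epsilon)$, and $d(u,x)=2\delta$ otherwise; this rule is computable from the encoding alone. If $\|\varphi(u)-\varphi(\pi(x))\|^2 \le \delta$ then $\theta(\varphi(u),\varphi(\pi(x)))\le\theta_\delta$, so the Hamming distance is at most $k\theta_\delta/\pi + k\epsilon$ and $d(u,x)=\delta$; conversely if $d(u,x)\le \delta$ then the Hamming distance is at most $k\theta_\delta/\pi + k\epsilon$, so $\theta(\varphi(u),\varphi(\pi(x)))/\pi \le \theta_\delta/\pi + 2\epsilon$, i.e.\ the angle is at most $\theta_\delta + 2\pi\epsilon < \theta_{2\delta}$, so $\|\varphi(u)-\varphi(\pi(x))\|^2 \le 2\delta$.

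The only real thing to balance is the concentration parameter $\epsilon$ against the margin $\Delta$; because $\delta$ is an absolute constant, $\Delta$ is also an absolute constant, so $\epsilon$ can be taken constant and $k=O(\log n)$ suffices, avoiding any $\log\log n$ overhead that would appear if one tried a real-valued JL projection followed by coordinate quantization. Note also that we never invoke any metric axioms for $d$, so the fact that it only takes two values and need not satisfy the triangle inequality is harmless.
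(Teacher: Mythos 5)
Your proof is correct, and it establishes exactly what the lemma and its downstream use require: the two-valued function $d$ built from Hamming distances of the sign sketches satisfies the two inclusions, the existence of good hyperplanes follows from Hoeffding plus a union bound over the at most $|Q_G|\,|Q_H|\le n^2$ relevant pairs, and since the decoder needs only the sketch bits (not the Gaussian vectors themselves, which enter only through an existence argument), the description length is $(|Q_G|+|Q_H|)\cdot O(\log n)=O(\max\{|Q_G|,|Q_H|\}\log n)$ bits as claimed; the paper only ever uses $d$ through sets of the form $\{v\in Q_G: d(v,x')\le\delta\}$ and explicitly allows $d$ to violate metric axioms, so the fact that your $d$ takes just two values is harmless. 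The route is genuinely different from the paper's, though. There, one fixes in advance a universal family $\calA$ of at most $\exp(O(m\log m/\varepsilon^2))$ matrices (Lemma~2.7 of the authors' earlier work, essentially Johnson--Lindenstrauss followed by an epsilon-net), one member of which approximates all inner products $\langle\varphi(u),\varphi(\pi(x))\rangle$ up to $\varepsilon=\delta/2$; the encoding is then a single index into $\calA$, giving $KP(d\Given Q_G,Q_H)=O(m\log m)$ with $m=\max\{|Q_G|,|Q_H|\}$ --- marginally stronger than your $O(m\log n)$, but both meet the stated bound and both suffice in the Main Structural Theorem, where the $O(qn\log n)$ term is what matters. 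Your version buys self-containedness (no appeal to the external matrix-family lemma and no quantization bookkeeping), at the cost of a per-point rather than per-instance encoding; both arguments ultimately rest on the same two observations, namely that $\delta$ is an absolute constant so a constant-accuracy sketch separates the thresholds $\delta$ and $2\delta$, and that only the one-sided sandwich property of $d$ is ever used.
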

\begin{proof}
The proof of the lemma is very standard. We embed all vectors in $\varphi(Q_G)$ and $\varphi(Q_H)$ in a lower dimensional space 
via the Johnson---Lindenstrauss transform and then replace the embedded vectors with vectors in sufficiently dense low dimensional 
epsilon net. Instead of presenting the details we use a lemma from our previous work \cite{MMV-ITCS}.

\begin{lemma}[Lemma~2.7 in~\cite{MMV-ITCS}]\label{lem:classW}
For every $m$ and $\varepsilon \in (0,1)$, there exists a set of matrices
$\calA$ of size at most $|\calA|\leq \exp(O(\frac{m \log m}{2\varepsilon^2}))$ such that:
for every collection of vectors $L(1),\dots, L(m)$, $R(1),\dots R(m)$ with
$\|L(i)\| \leq 1$, $\|R(j)\| \leq 1$ and $\langle L(i), R(j)\geq 0$, 
there exists $A\in \calA$ satisfying for every $u$ and $x$:
$$ a(u,x)\leq \langle L(u), R(x)\rangle \leq a(u,x) + \gamma;$$
$$ a(u,x)\in [0,1].$$
\end{lemma}

Let $m=\max\{|Q_G|,|Q_H|\}$ and $\varepsilon = \delta/2$. We pick $\calA$ as in the lemma above. The set $\calA$ depends
only on $m$ and $\varepsilon$. We find a matrix $a\in \calA$ such that 
$a(u,x)\leq \langle \varphi(u) , \varphi(\pi(x))\rangle \leq a(u,x) + \varepsilon$ and let $d(u,x) = (1 - 2a(u,x))/2$. The complexity 
$KP(d\Given Q_G,Q_H)$ is at most $\roundup{\bLog |\calA|}=O(m \log m)$ since $d$ can be reconstructed from 
$a$, and $a$ is chosen among $\exp(O(\frac{m \log m}{2\varepsilon^2}))$ possible matrices. 
If $\|\varphi(u) - \varphi(\pi(x))\|^2\leq \delta$, then $\langle \varphi(u) , \varphi(\pi(x))\rangle\geq (1-\delta)/2$. Hence,
$a(u,x)\geq (1-2\delta)/2$ and $d(u,x)\leq \delta$. If $\|\varphi(u) - \varphi(\pi(x))\|^2>2\delta$, then
$a(u,x)\leq \langle \varphi(u) , \varphi(\pi(x))\rangle< (1-2\delta)/2$, and $d(u,x)>\delta$.
\end{proof}

\end{document}